\newcites{App}{References}
\newtheorem{theorem}{Theorem}[section]
\newtheorem{assumption}{Assumption}[section]
\newtheorem{remark}[theorem]{Remark}
\newtheorem{definition}[theorem]{Definition}
\newtheorem{lemma}[theorem]{Lemma}
\newtheorem{example}[theorem]{Example}
\newcommand{\MF}{\mathcal{F}}
\newcommand{\MD}{\mathcal{D}}
\newcommand{\MG}{\mathcal{G}}
\newcommand{\MZ}{\mathcal{Z}}
\newcommand{\MX}{\mathcal{X}}
\newcommand{\MP}{\mathcal{P}}
\newcommand{\MH}{\mathcal{H}}
\newcommand{\MB}{\mathcal{B}}
\newcommand{\Bz}{\mathbf{z}}
\newcommand{\Bb}{\mathbf{b}}
\newcommand{\Real}{\mathbb{R}}
\newcommand{\Natural}{\mathbb{N}}
\newcommand{\argmax}{\mathop{\rm arg~max}\limits}
\newcommand{\indep}{\perp \!\!\! \perp}
\begin{document}
\setstretch{1.2} 
\title{Estimation of Optimal Dynamic Treatment Assignment Rules under Policy Constraints\thanks{I would like to thank the editor, Stephane Bonhomme, and anonymous board member and referees for their constructive comments and suggestions. I am grateful to Toru Kitagawa, Aleksey Tetenov, Ryo Okui, Jeff Rowley, and participants in seminars at UCL and University of Tokyo, as well as those at the Cemmap/WISE Workshop on Advances in Econometrics in Xiamen, the 2019 Asian Meeting of the Econometric Society in Xiamen, and the 2020 World Congress of the Econometric Society for their comments and suggestions. This work was supported by JSPS KAKENHI Grant (number 22K20155) and ERC Grant (number 715940).}}
\author{Shosei Sakaguchi\thanks{Faculty of Economics, The University of Tokyo, 7-3-1 Hongo, Bunkyo-ku, Tokyo 113-0033, Japan. Email: sakaguchi@e.u-tokyo.ac.jp.} }
\date{\today}

\maketitle
\vspace{-1cm}

\begin{abstract} 
\begin{spacing}{1.2}
Many policies involve dynamics in their treatment assignments, where individuals receive sequential interventions over multiple stages.
We study estimation of an optimal dynamic treatment regime that guides the optimal treatment assignment for each individual at each stage based on their history. We propose an empirical welfare maximization approach in this dynamic framework, which estimates the optimal dynamic treatment regime using data from an experimental or quasi-experimental study while satisfying exogenous constraints on policies. The paper proposes two estimation methods: one solves the treatment assignment problem sequentially through backward induction, and the other solves the entire problem simultaneously across all stages. We establish finite-sample upper bounds on worst-case average welfare regrets for these methods and show their optimal $n^{-1/2}$ convergence rates. We also modify the simultaneous estimation method to accommodate intertemporal budget/capacity constraints. 
\end{spacing}
\medskip
\begin{spacing}{1.5}
\noindent
\textbf{Keywords:} Dynamic treatment effect, dynamic treatment regime,
individualized treatment rule, empirical welfare maximization.\\
 \textbf{JEL codes:} C22, C44, C54. 
\end{spacing}
\end{abstract}
\newpage{}

\setstretch{1.5} 

\section{Introduction \label{sec:introduction}}

Many policies involve dynamics in their treatment assignments.
Some policies assign a series of treatments to each individual across multiple stages, such as job training programs consisting of multiple stages \citep[e.g.,][]{Lechner_2009,Rodriguez_et_al_2018}.
Some policies are characterized by when to start/stop consecutive treatment assignment, such as unemployment insurance programs that reduce benefit level after a certain duration  \citep[e.g.,][]{Meyer_1995,Kolsrud_et_al_2018}. Examples of dynamic treatment assignments also include sequential medical interventions, educational programs, and marketing strategies. 

When implementing a dynamic policy, policymakers aim to optimize treatment assignments across multiple stages to maximize its social impact.
The effects of treatment at each stage are usually heterogeneous with respect to past treatments, intermediate outcomes, and individual characteristics. Hence, to maximize its social impact, treatment assignment to each individual at each stage should depend on the individual's accumulated information up to the corresponding stage.\footnote{
For example, in the context of a sequential job training program, the interest is in which training regimen to be assigned to each individual at each stage depending on their history of prior training participation, associated labor outcomes, and other observed characteristics. An important question in the unemployment insurance policy context is when and to whom to reduce the insurance level, given a recipient's characteristics and past effort toward a job search.
}

This paper proposes a statistical decision approach to solve dynamic treatment choice problems using data from an experimental or quasi-experimental study. 
We assume dynamic unconfoundedness \citep{Robins_1997}, 
meaning that the treatment assignment at each stage is independent of current and future potential outcomes given the history of treatment assignments and state variables. 
Under this assumption, we construct an approach to estimate the optimal dynamic treatment regime (DTR)\footnote{Borrowing from the terminology of statistics literature, we call the dynamic treatment assignment rule DTR.} building upon the concept of empirical welfare maximization (EWM) \citep{Kitagawa_Tetenov_2018a}. 
We call it the dynamic empirical welfare maximization (DEWM). The DEWM approach estimates the optimal DTR by maximizing empirical welfare, a sample mean of propensity-score-weighted outcomes, over a pre-specified class of feasible DTRs. True (estimated) propensity scores are used in the experimental (observational) data setting.

When designing public policy, considering external constraints such as interpretability or fairness in treatment allocation is crucial.  
The DEWM approach offers favorable features to accommodate exogenous policy constraints by restricting the class of feasible DTRs. 
Moreover, it can be applied to various dynamic treatment choice problems, such as optimal starting and stopping problems, by properly constraining the class of DTRs.

We present two approaches to estimate the optimal DTR. The first estimates the optimal DTR through backward induction, which solves the treatment choice problem from the final to the first stage, supposing at each stage that the optimal treatments are chosen in future stages. The second estimates the optimal DTR simultaneously across all stages, solving the empirical welfare maximization problem at once for the entire DTR.\footnote{Without specifying the direct and indirect effects of the treatment on future outcomes, each approach accounts for these effects within its EWM process.}

We reveal that the two approaches complement each other. The backward estimation method is computationally efficient; however, its consistency is ensured only when a pre-specified class of DTRs contains the first-best rule that assigns the best treatment for any history at each stage (except the first stage). 
Conversely, the simultaneous estimation method consistently estimates the optimal DTR on a pre-specified class of DTRs, irrespective of the feasibility of the first-best rule, at the cost of computational efficiency.

In practical terms, dynamic policies often impose budget or capacity constraints on treatment allocation over time. An ideal DTR should allocate limited resources effectively across stages to maximize welfare. We extend the simultaneous estimation method to problems with intertemporal budget/capacity constraints. We show that the resulting DTR approximately maximizes welfare while satisfying these constraints.

We evaluate the statistical properties of the DEWM approaches in terms of average welfare regret.\footnote{The average welfare regret is the average welfare loss relative to the maximum welfare achievable in the pre-specified class of DTRs.}
We derive finite-sample and distribution-free upper bounds on the average welfare regret of the DTR estimated by each of the backward-induction and simultaneous optimization methods. The resulting bounds depend on the sample size $n$ and a measure of complexity of the class of DTRs. Our main theorem shows that the average welfare regret for each method converges to zero at rate $n^{-1/2}$ in the experimental data setting. Furthermore, we show that this convergence rate is optimal.\footnote{To my knowledge, this is the first work to formally show the minimax rate optimality of welfare regrets in estimating optimal DTRs.}
For the budget/capacity constrained problem, we also analyze the excess implementation cost of the estimated DTR relative to the actual budget/capacity. We derive finite-sample and distribution-free upper bounds on both the welfare regret and the excess cost of the estimated DTR.

\subsection*{Related Literature \label{sec:related literature}}

This paper contributes to the literature on statistical decision of treatment choice, although much of existing work focuses on the static problem.\footnote{A partial list of works in that literature includes \citet{Manski_2004}, \citet{Dehejia_2005}, \citet{Hirano_Porter_2009},
\citet{Stoye_2009,Stoye_2012},
\citet{Bhattacharya_Dupas_2012}, \citet{Chamberlain_2012},
\citet{Tetenov_2012}, \citet{Kitagawa_Tetenov_2018a}, \citet{Athey_Wager_2020}, \cite{Kitagawa_Tetenov_2018b}, \citet{Mbakop_Tabord-Meehan_2019}, and \cite{kitagawa_et_al_2021}.}
Policy learning methods by \cite{Kitagawa_Tetenov_2018a}, \cite{Athey_Wager_2020}, and \cite{Mbakop_Tabord-Meehan_2019} build on the similarity of the empirical welfare maximizing treatment choice and the empirical risk-minimizing classification. 
\cite{Athey_Wager_2020} apply doubly robust estimators to static policy learning, and show that an $n^{-1/2}$-asymptotic upper bound on regret can be achieved even in the observational data setting.  

In the dynamic treatment framework, \citet{Han_2018} relaxes the sequential randomization assumption, allowing for noncompliance, and studies point identification of the average dynamic treatment effects and optimal non-additive DTR. 
\citet{Han_2019} proposes a method to characterize the sharp partial ordering of the counterfactual welfares of DTRs in an instrumental variable setting. 
\citet{Heckman_Navarro_2007} and \citet{Heckman_et_al_2016} use exclusion restrictions to identify the dynamic treatment effect, but their focus do not extend to the identification of the optimal DTRs.

Estimation of the optimal DTRs has been widely studied in the biostatistics and statistics literature.\footnote{\citet{Chakraborty_Moodie_2013}, \citet{Chakraborty_Murphy_2014}, \citet{Laber_et_al_2014}, and \cite{Tsiatis_et_al_2019} review the developments in this field.} Some dominant approaches exist, such as G-estimation \citep{Robins_1989,Robins_et_al_1992} and Q-learning \citep{Murphy_2005,Moodie_et_al_2012}. 
A potential drawback of these approaches is the risk of misspecification of the models relevant to the counterfactual outcomes. By contrast, the DEWM approach does not need to specify any model relevant to the counterfactual outcomes.

Building on the similarity between treatment choice and classification, 
\citet{Zhao_et_al_2015} develop estimation methods for the optimal DTRs using the support vector machine with propensity score weighted outcomes. Their approach is computationally efficient because it uses a convex surrogate loss. However, it cannot accommodate exogenous constraints on a class of DTRs.\footnote{The hinge loss approach in \citet{Zhao_et_al_2015} loses consistency and computational efficiency, for example, under budget or fairness constraints. Moreover, \cite{Laha_et_al_2024} show that using a smooth convex surrogate loss or hinge loss in the simultaneous maximization approach can fail to consistently estimate the optimal DTRs.} In contrast, our focus lies on estimating the optimal DTRs with exogenous constraints on the class of DTRs, a scenario more commonly encountered in public policy-making.\footnote{In the static setting, \cite{kitagawa_et_al_2021} show that the surrogate hinge loss approach has consistency in constrained treatment choice problems, which could be extended to our dynamic setting. Their main result for consistency applies when constraints are imposed on the level set of a treatment rule, whereas we consider more general constraints on the functional form of treatment rules.}
Beyond the results of \cite{Zhao_et_al_2015}, we reveal a tradeoff between imposing constraints on the dynamic treatment choice and the consistency of the backward-induction approach, and formally show the minimax rate optimality of the proposed methods in the context of dynamic treatment choice. 

This work is also related to the literature on optimal stopping  \citep[e.g.,][]{van_1976,Rust_1987,Jacka_1991,Goel_et_al_2017,Nie_et_al_2019}.  Most works in the literature rely either on a known stochastic model \citep{van_1976,Rust_1987,Jacka_1991} or on a generator of system dynamics \citep{Goel_et_al_2017}.\footnote{\citet{Nie_et_al_2019} propose doubly robust estimation method for the optimal stopping/starting problem.} 
The methods proposed in our study can estimate the optimal stopping/starting policies from batch data by properly specifying the class of DTRs.

Finally, the dynamic treatment framework we study differs from the bandit problem, for example, studied by \cite{Kock_Thyrsgaard_2018}. In the bandit problem, different individuals receive treatment at different stages. By contrast, in our dynamic framework, the same individuals progress through different stages and receive sequential treatment interventions across these stages. Additionally, in bandit problems, the treatment effect is explored and exploited across sequential stages, whereas, in our framework, the effects of sequential treatments are estimated before the allocation task.\footnote{The bandit problem is an online learning problem, whereas we study an off-line learning problem.}\footnote{\cite{Kallus_2020} study the bandit problem with DTRs, considering the problem of developing and exploiting the optimal DTR in an online setting.}

\subsection*{Structure of the Paper }
The remainder of this paper proceeds as follows. Section \ref{sec:setup} defines the dynamic treatment choice problem. Section \ref{sec:DEWM}
presents the two DEWM methods and shows their statistical properties. Section \ref{sec:budget constraint} extends the simultaneous estimation method to accommodate intertemporal budget/capacity constraints. Section \ref{sec:estimated propensity score} proposes estimation methods for the observational data setting. Section \ref{sec:simulation} shows the results of a simulation study. 
In Section \ref{sec:empirical application}, we apply the proposed methods to the Project STAR (Steps to Achieving Resilience) data, where we estimate an optimal DTR to allocate each student to a class with or without a teacher aide in multiple grades. Section \ref{sec:conclusion} concludes this paper. 

\section{Setup\label{sec:setup}}

Section \ref{sec:dynamic treatment framework} introduces the dynamic treatment framework, following Robins's dynamic counterfactual outcomes framework \citep{Robins_1986,Robins_1997}. Subsequently, we define the dynamic treatment choice problem in Section \ref{sec:dynamic treatment choice problem}. In this study, we denote by $E_{P}\left[\cdot\right]$ the expectation with respect to a distribution function $P$.

\subsection{Dynamic Treatment Framework \label{sec:dynamic treatment framework}}

We suppose $T$ $(T < \infty)$ stages of binary treatment assignment. 
Let $D_{t}\in \left\{ 0,1\right\}$, for $t=1,\ldots,T$, denote the binary
treatment at stage $t$. At the end of each stage $t$, we observe an outcome $Y_{t}$. Let $X_{t}$ be a $k$-dimensional vector of covariates observed before treatment assignment at stage $t$. The distribution of
$X_{t}$ may depend on past treatments, outcomes, and covariates.
$X_{1}$ represents pre-treatment information,
containing individuals' demographic characteristics observed before policy implementation.  Throughout this paper, for any time-dependent object $A_{t}$,
we denote by $\text{\ensuremath{\text{\ensuremath{\underline{A}}}}}_{t}\equiv \left(A_{1},\ldots,A_{t}\right)$
a history of the object up to stage $t$, and denote by $\text{\ensuremath{\underline{A}}}_{s:t} \equiv \left(A_{s},\ldots,A_{t}\right)$,
for $s\leq t$, a partial history of the object from stage $s$ up to stage $t$. For example, the treatment history up to stage $t$ is denoted by $\text{\ensuremath{\underline{D}}}_{t}=\left(D_{1},\ldots,D_{t}\right)$. 
Let $Z \equiv \left(\underline{D}_T,\underline{Y}_T,\underline{X}_T\right)$ be the vector containing all observed variables. We define the history in stage $t$ by $H_{t} \equiv \left(\underline{D}_{t-1},\underline{Y}_{t-1},\underline{X}_t\right)$, which is available information for the policymaker when she chooses
a treatment assignment at stage $t$. Note that $H_{s}\subseteq H_{t}$
for any $s\leq t$, and $H_{1}=\left(X_{1}\right)$. We denote
the support of $H_{t}$ and  $Z$ by ${\cal H}_{t}$ and $\MZ$, respectively.

We illustrate the dynamic treatment framework with an example of a sequential job training from \cite{Rodriguez_et_al_2018}. They study the effect of sequential training in Chile's “Franquicia Tributaria” program, where a worker can sequentially participate in multiple training sessions. They consider two stages ($T=2$) with ``$D_1=1$'' and ``$D_2=1$'' indicating participation in the first and second stages, respectively. $Y_1$ and $Y_2$ are the monthly salaries observed after training for each stage. $X_1$ includes age, gender, initial wage, and education variables, while there are no time-varying covariates $X_2$.

To formalize our results, we employ the framework of dynamic potential outcomes \citep{Robins_1986,Murphy_2003}. 
Let $Y_{t}\left(\text{\ensuremath{\underline{d}}}_{t}\right)$ denote the potential outcome of $\underline{d}_t \in \{0,1\}^t$ at stage $t$, representing the outcome for stage $t$ that is realized when the history of treatment up to stage $t$ coincides with $\underline{d}_{t}$. We implicitly assume that the potential outcomes are not influenced by future treatments, that is, a no-anticipation condition. 
Given that the covariates $X_t$ may be influenced by past treatments, we define potential covariates as $X_t(\underline{d}_{t-1})$ for each $t \geq 2$ and $\underline{d}_{t-1} \in \{0,1\}^{t-1}$. We denote $X_{1}(\underline{d}_{0})=X_1$ when $t=1$.
The observed outcomes and covariates are defined as $Y_t \equiv Y_t(\underline{D}_t)$ and $X_t \equiv X_t(\underline{D}_{t-1})$, respectively. Denoting $\underline{Y}_{t}(\underline{d}_{t}) \equiv (Y_1(d_1),\ldots,\underline{Y}_{t}(\underline{d}_{t}))$ and $\underline{X}_{t}(\underline{d}_{t-1}) \equiv (X_1,X_2(d_1),\ldots,\underline{X}_{t}(\underline{d}_{t-1})$, a vector $H_t(\underline{d}_{t-1}) \equiv \left(\underline{d}_{t-1},\underline{Y}_{t-1}(\underline{d}_{t-1}),\underline{X}_{t}(\underline{d}_{t-1})\right)$ represents the potential history that is realized when prior treatments are $\underline{d}_{t-1}$. We denote $H_{1}(\underline{d}_{0}) = H_1$ when $t=1$.  The observed history is defined as $H_{t} \equiv H_t(\underline{D}_{t-1})$.
Let $P$ be the distribution of all underlying variables $\left(\underline{D}_T,\{\underline{Y}_{T}(\underline{d}_{T})\}_{\underline{d}_{T} \in \{0,1\}^{T}},\{\underline{X}_{T}(\underline{d}_{T-1})\}_{\underline{d}_{T-1} \in \{0,1\}^{T-1}}\right)$.

From an experimental or observational study, we observe $Z_{i}\equiv \left(D_{it},Y_{it},X_{it}\right)_{t=1}^{T}$ for individuals $i=1,\ldots,n$, where
$Y_{it}  \equiv Y_{it}(\underline{D}_{it})$ and $X_{it}  \equiv X_{it}(\underline{D}_{i,t-1})$
with $Y_{it}(\underline{d}_{t})$ and $X_{it}(\underline{d}_{t-1})$ being a potential outcome and covariates for individual $i$ at stage $t$. We suppose that the vectors of underlying random variables $V_i \equiv \left(\underline{D}_{iT},\{\underline{Y}_{iT}(\underline{d}_{T})\}_{\underline{d}_{T} \in \{0,1\}^{T}},\{\underline{X}_{iT}(\underline{d}_{T-1})\}_{\underline{d}_{T-1} \in \{0,1\}^{T-1}}\right)$,
$i=1,\ldots,n$, are independent and identically distributed (i.i.d) with the distribution $P$. We denote by $P^{n}$ the joint distribution of $\left\{V_i:i=1,\ldots,n\right\}$.

Let $e_{t}\left(d_{t},h_{t}\right)\equiv \Pr\left(D_{t}=d_{t}\mid H_{t}=h_{t}\right)$
be a propensity score of treatment at stage $t$ given the history up to that point. We suppose that the propensity scores are known in the experimental study but are unknown in the observational study.
These settings are considered in Sections \ref{sec:DEWM}-\ref{sec:budget constraint} and Section \ref{sec:estimated propensity score}, respectively.

In this study, we suppose that the following assumptions hold.

\medskip

\begin{assumption}[Sequential Independence Assumption]\label{asm:sequential independence} For any $t=1,\ldots,T$ and $\text{\ensuremath{\underline{d}}}_{T}\in\left\{ 0,1\right\} ^{T}$,
$
\left(Y_{t}\left(\text{\ensuremath{\underline{d}}}_{t}\right),\dots,Y_{T}\left(\text{\ensuremath{\underline{d}}}_{T}\right),X_{t+1}\left(\underline{d}_t\right),\ldots,X_{T}\left(\underline{d}_{T-1}\right)\right) \indep D_{t} \mid H_{t} \mbox{\ a.s.}
$
\end{assumption}
\begin{assumption}[Bounded Outcomes]\label{asm:bounded outcome}
There exists $M_{t}<\infty$ such that the support of $Y_{t}$ is
contained in $\left[-M_{t}/2,M_{t}/2\right]$ for $t=1,\ldots,T$.
\end{assumption}

\medskip

Assumption
\ref{asm:sequential independence} is known as a dynamic unconfoundedness
assumption or sequential/dynamic conditional independence assumption elsewhere,
and is commonly used in the literature on dynamic treatment
effect analysis \citep{Robins_1997,Murphy_2003}.
This assumption means that the treatment assignment at each stage is independent of the current and future potential outcomes and future covariates conditional on the history up to that point.
This is typically satisfied in sequential randomization experiments. In observational studies, this assumption is often controversial but can be satisfied if a sufficient set of confounders is available. Assumption \ref{asm:bounded outcome} is a common assumption in the literature on statistical treatment choice \citep[e.g.,][]{Manski_2004,Stoye_2009,Kitagawa_Tetenov_2018a}.

\subsection{Dynamic Treatment Choice Problem\label{sec:dynamic treatment choice problem}}

We aim to develop methods to estimate the optimal DTRs from experimental or observational data with sequential treatment assignment. We denote a treatment rule for each stage $t$ by $g_{t}:{\cal H}_{t}\mapsto\left\{ 0,1\right\} $, a map from the history up to stage $t$ to a binary treatment. We define the DTR by $g\equiv\left(g_{1},\ldots,g_{T}\right)$,
a sequence of stage-specific treatment rules. The DTR guides policymakers in selecting treatment for each individual at each stage based on their history up to that point. 

We define the counterfactual outcome of a sequence of treatment rules $\underline{g}_t$ for each stage $t$ as $\widetilde{Y}_t\left(\underline{g}_t \right) \equiv \sum_{\underline{d}_{t} \in \{0,1\}^t} Y_t(\underline{d}_t)\cdot \prod_{s=1}^{t}1\left\{g_s\left(H_{s}(\underline{d}_{s-1})\right) = d_s\right\}$.
This is the counterfactual outcome for stage $t$ that is realized when the sequential treatment assignment up to stage $t$ follows the sequence of treatment rules $\underline{g}_t$.

We then define the welfare of a DTR $g$ by the population mean of a weighted sum of outcomes as follows: 
\begin{align}
W\left(g\right) &\equiv  E_{P}\left[\sum_{t=1}^{T}\gamma_{t}\widetilde{Y}_{t}\left(\underline{g}_{t}\right)\right]  =  \sum_{t=1}^{T}E_{P}\left[\gamma_{t}\widetilde{Y}_{t}\left(\underline{g}_{t}\right)\right], \label{eq:additive_welfare_function}
\end{align}
where the weight
$\gamma_{t}$, for $t=1,\ldots,T$, lies in $\left[0,1\right]$ and
is chosen by the policy-maker. If the policymaker targets a time-discounted welfare, the weight at each stage is $\gamma_{t}=\gamma^{T-t}$ with
$\gamma$ being a time-discount factor that lies in $\left(0,1\right)$. If the policymaker targets the outcome for the last stage only, $\gamma_{T}=1$ and $\gamma_{t}=0$ for all $t \neq T$.

Given the propensity scores $\left\{ e_{t}\left(d_{t},h_{t}\right)\right\} _{t=1}^{T}$
and under Assumption \ref{asm:sequential independence}, the welfare function can be identified by the observables only:
\begin{align}
W\left(g\right) & =\sum_{t=1}^{T}E_{P}\left[\frac{\left(\prod_{s=1}^{t}1\left\{ D_{s} = g_{s}\left(H_{s}\right)\right\} \right)\gamma_{t}Y_{t}}{\prod_{s=1}^{t}e_{s}\left(D_{s},H_{s}\right)}\right].\label{eq:original welfare}
\end{align}

We suppose that the policymaker chooses a DTR from a pre-specified class of feasible DTRs, denoted by ${\cal G}\equiv{\cal G}_{1}\times\cdots\times{\cal G}_{T}$, where ${\cal G}_{t}$ is a class of feasible treatment rules at stage
$t$ (i.e., a class of measurable functions $g_{t}:\MH_{t}\rightarrow \{0,1\}$). Therefore, the ultimate goal of the analysis is to choose an optimal DTR that maximizes the welfare function $W\left(\cdot\right)$ over $\mathcal{G}$. 
\footnote{
In the context of sequential job training \citep{Lechner_2009,Rodriguez_et_al_2018}, $g_t(h_t)$ decides whether an individual with history $h_t$ should receive job training at stage $t$. The history $h_t$ may include information on past trainings, pre-training and intermediate wages, and educational backgrounds. When $Y_t$ represents the wage at stage $t$, the optimal DTR is the optimal sequence of treatment rules for determining participation in job training at each stage, to maximize the population mean of the total weighted wages $W(g)$.
}

In this study, we constrain the complexity of the class of feasible DTRs in terms of VC-dimension.\footnote{The definition of VC-dimension is given in Definition \ref{def:VC-dimension of a class of binary functions} in Appendix along with some examples.}
The following assumption restricts the complexity of the class of feasible DTRs $\MG$ in terms of the VC-dimension of $\MG_t$ for each $t=1,\ldots,T$.
\medskip

\begin{assumption}[VC-class]\label{asm:vc-class} 
The class of feasible DTRs $\MG$ has the form of $\MG=\MG_1 \times \cdots \times \MG_T$. For $t=1,\ldots,T$, ${\cal G}_{t}$ is a VC-class of functions
and has VC-dimension $v_{t}<\infty$. 
\end{assumption}

\medskip

This assumption restricts the complexity of the class of DTRs $\MG$ by restricting the class of feasible treatment rules $\MG_{t}$ for each specific stage. By restricting the complexity, we can select a DTR that is simple to explain/interpret DTR, and can keep estimated DTRs from overfitting the data. Although Assumption \ref{asm:vc-class} excludes nonparametric classes of $\MG_t$, our framework can accommodate nonparametric approaches by appropriately controlling the growth rate of VC-dimension with the sample size $n$. 

We can incorporate arbitrary exogenous policy constraints for ethical or political reasons into DTRs by specifying the form of $\MG_{t}$ for each $t$.\footnote{Although a treatment rule $g_{t}(h_t)$ depends on the full-history of covariates $\underline{x}_{t}$ from stage 1 to t, we can also consider treatment rules that do not depend on the past covariates $\underline{x}_{t-1}$ by restricting the class $\MG_t$ such that for any $g_t \in \MG_t$, $g_{t}(h_t) = g_{t}(h_{t}^\prime)$ for any $h_t$ and $h_{t}^{\prime}$ such that $h_{t} \backslash \underline{x}_{t-1} = h_{t}^{\prime} \backslash \underline{x}_{t-1}^{\prime}$. Similar constraints can also be imposed for the treatment history $\underline{d}_t$ and outcome history $\underline{y}_{t}$.}
Some examples of practically relevant classes of DTRs are linear treatment rules and decision tree rules.

Aside from the constraint on the functional form, we can specify various dynamic treatment choice problem by restricting the intertemporal relationship of treatment rules across stages. Some examples are as follows.

\medskip

\begin{example}[Optimal Starting/Stopping Problem]
 If the policymaker aims to decide when to start consecutive treatment assignments for each individual, the restriction $d_{s}\leq g_{t}(\cdot)$ for all $s\leq t$
should be imposed on ${\cal G}_{t}$. Similarly, the problem of deciding when to stop consecutive treatment assignments can be specified by imposing the restriction $d_{s}\geq g_{t}(\cdot)$ on ${\cal G}_{t}$ for all $s\leq t$.
\end{example}

\smallskip

\begin{example}[One-Shot Treatment]
If the problem is to decide when to assign a one-shot treatment to each individual, the analyst should impose the restriction $\sum_{s=1}^{t-1}d_{s}+g_{t}(\cdot)\leq1$
on ${\cal G}_{t}$ for each $t$.
\end{example}

\smallskip
\noindent
The VC-dimension of an additionally restricted class does not exceed that of the original class.

Given a class of feasible DTRs $\MG$, we assume the following overlap condition holds for the propensity scores $\{e_{t}(d_{t},h_{t})\}_{t=1}^{T}$.

\medskip

\begin{assumption}[Overlap Condition]\label{asm:overlap} 
For $t=1,\ldots,T$, there exists $\kappa_{t} \in (0,1)$ for which $\kappa_{t} \leq e_{t}(d_{t},h_{t})$ holds for any pair $(d_{t},h_{t}) \in \{0,1\}\times \MH_{t}$ such that there exists $g_{t} \in \MG_{t}$ that satisfies $g_{t}(h_{t})=d_{t}$. 
\end{assumption}

\medskip

When $\MG$ is structurally constrained, Assumption \ref{asm:overlap} is weaker than a common overlap condition that requires the overlap $e_{t}(d_{t},h_{t}) \in (0,1)$ for all $(d_{t},h_{t})\in \{0,1\}\times \MH_{t}$ and $t=1,\ldots,T$.\footnote{For example, in the optimal stopping problem, Assumption \ref{asm:overlap} does not require $e_{t}(1,h_{t})>0$ for any $h_t$ such that $d_s$ in $h_t$ is equal to $0$ for some $s < t$.}   
This assumption also guides how to design experiments given $\MG$; that is, in an experiment, the treatment $d_t$ does not need to be assigned to individuals with any $h_t$ such that $d_t$ is not achievable by $g_t(h_t)$ for any $g_t \in \MG_t$ (i.e., $d_t \neq g_{t}(h_t)$ for any $g_t \in \MG_t$).\footnote{For example, in the optimal stopping problem, $d_t = 1$ does not need to be assigned to any individuals who were already untreated (i.e., individuals with $d_s = 0$ for some $s < t$).}
Assumption \ref{asm:overlap} is satisfied in the experimental data setting, for example, when the treatment $D_t$ is randomly assigned without any dependence on the history $H_t$.

We denote the highest welfare that is attainable in the class of feasible DTRs ${\cal G}$ by
\begin{align}
W_{{\cal G}}^{\ast}\equiv  \max_{g\in{\cal G}}W\left(g\right).\label{eq:welfare maximization}
\end{align}
We consider estimating the optimal DTR  that maximizes the welfare $W(\cdot)$ over $\MG$ from the sample $\left\{ Z_{i}:i=1,\ldots,n\right\} $. In the subsequent section, we present two methods to estimate the optimal DTR, and show their statistical properties.


\section{Dynamic Empirical Welfare Maximization \label{sec:DEWM}}

This section proposes two DEWM methods. One method employs backward induction to solve the dynamic treatment choice problem sequentially from the final to initial stage. The other method involves the simultaneous maximization of $W\left(\cdot \right)$ over the entire class of DTRs $\MG$ across all stages. The backward-induction approach is computationally efficient; however, we will see that it may not consistently estimate the optimal DTR when $\MG_{t}$ does not contain the first-best treatment rule for all $t \geq 2$. 
By contrast, the simultaneous maximization method can consistently estimate the optimal DTR irrespective of whether $\MG_{t}$ contains the first-best rule at each stage $t$, though it is computationally less efficient.\footnote{It is worth noting that our study, focused on the consistent estimation of the optimal DTR, differs from the literature on ``dynamic (in)consistency'' in economics  \citep[e.g.,][]{Epstein_et_al_2003,Hansen_Sargent_2022}, because the notions of consistency are different between our work and works regarding ``dynamic (in)consistency'' in economics.}
We explain the backward-induction and simultaneous-maximization methods in Sections \ref{sec:backward DEWM} and \ref{sec:simultaneous DEWM}, respectively.

\subsection{Backward Dynamic Empirical Welfare Maximization \label{sec:backward DEWM}}

We first explain the backward-induction approach. 
To present the idea, we here suppose that the generative distribution function $P$ is known
and the pair $\left(P,{\cal G}\right)$ satisfies Assumptions \ref{asm:sequential independence} and \ref{asm:overlap}. 

The backward-induction approach in the population problem proceeds as follows. First, for the final stage
$T$, we obtain
\begin{align}
g_{T}^{\ast} & \in\argmax_{g_{T}\in{\cal G}_{T}}E_{P}\left[Q_{T}\left(H_{T},g_{T}(H_T)\right)\right],
\label{eq:backward_induction_first}
\end{align}
where $Q_{T}\left(h_{T},d_{T}\right)\equiv E_{P}\left[\gamma_{T}Y_{T}\mid H_{T}=h_{T},D_{T}=d_{T}\right]$
is the conditional mean of the weighted final outcome $\gamma_{T}Y_{T}$ given the history $h_{T}$ and treatment $d_T$. 

Then, recursively, from $t=T-1$ to $1$, we obtain 
\begin{align}
g_{t}^{\ast} & \in\argmax_{g_{t}\in{\cal G}_{t}}E_{P}\left[Q_{t}\left(H_{t},g_{t}(H_t)\right)\right], \label{eq:backward_induction}
\end{align}
with 
$Q_{t}\left(h_{t},d_{t}\right)  \equiv E_{P}\left[\gamma_{t}Y_{t}+Q_{t+1}\left(H_{t+1},g_{t+1}^{\ast}(H_{t+1})\right)\mid H_{t}=h_{t},D_{t}=d_{t}\right].$
The function $Q_{t}\left(h_{t},d_{t}\right)$ is the action value function for stage $t$ and represents the expected welfare that is realized when the history is $h_{t}$, the treatment at stage $t$ is $d_t$, and the future treatments follow $(g_{t+1}^{\ast},\ldots,g_{T}^{\ast})$.

Given the propensity scores $\left\{ e_{t}\left(d_{t},h_{t}\right)\right\} _{t=1}^{T}$
and under Assumption \ref{asm:sequential independence}, $E_{P}\left[Q_{t}\left(H_{t},g_{t}(H_t)\right)\right]$
can be identified as 
\begin{align*}
E_{P}\left[Q_{t}\left(H_{t},g_{t}\right)\right] & =E_{P}\left[q_{t}\left(Z,g_{t};g_{t+1}^{\ast},\ldots,g_{T}^{\ast}\right)\right],
\end{align*}
where 
\begin{align}
q_{t}\left(Z,g_{t};g_{t+1},\ldots,g_{T}\right)\equiv  \sum_{s=t}^{T}\left\{ \frac{\left(\prod_{\ell =t}^{s}1\left\{ D_{\ell} = g_{\ell}\left(H_{\ell}\right)\right\}\right) \gamma_{s}Y_{s}}{\prod_{\ell =t}^{s}e_{\ell}\left(D_{\ell},H_{\ell}\right)}\right\} .\nonumber
\end{align}
Hence, the objective function $E_P[Q_t(H_t,g_t)]$ can be expressed by the observables only.

Using the inverse propensity score weighting, we propose the estimation method based on the empirical analogue of the above backward induction procedure. We refer to this method as the backward DEWM method. The backward DEWM method first estimates $g_{T}^{\ast}$ by 
\begin{align*}
\hat{g}_{T}^{B} & \in\argmax_{g_{T}\in{\cal G}_{T}}\frac{1}{n}\sum_{i=1}^{n}q_{T}\left(Z_{i},g_{T}\right).
\end{align*}
Then, recursively, from $t=T-1$ to $1$, the method estimates $g_{t}^{\ast}$ by 
\begin{align}
\hat{g}_{t}^{B} & \in\argmax_{g_{t}\in{\cal G}_{t}}\frac{1}{n}\sum_{i=1}^{n}q_{t}\left(Z_{i},g_{t};\hat{g}_{t+1}^{B},\ldots,\hat{g}_{T}^{B}\right). \label{eq:optimization_backward DEWM}
\end{align}
We denote by $\hat{g}^{B}\equiv \left(\hat{g}_{1}^{B},\ldots,\hat{g}_{T}^{B}\right)$
the DTR obtained from this procedure. 

The resulting DTR $\hat{g}^{B}$ does not necessarily have consistency to the optimal one, $g_{opt}^{\ast} \in \argmax_{g \in \MG} W(g)$, unless the class $\MG_t$ of treatment rules for each $t\geq 2$ contain the first-best rule that globally maximizes $E_{p}[Q_t(H_t,g_t(H_t))]$ over all measurable functions of $g_t$. 
For any $s<t$, let 
\begin{align*}
\widetilde{Y}_{t}\left(\underline{d}_{s},\underline{g}_{(s+1):t}\right) \equiv \sum_{\underline{d}_{(s+1):t} \in \{0,1\}^{t-s}} Y_t(\underline{d}_{s},\underline{d}_{(s+1):t})\cdot \prod_{\ell=s+1}^{t}1\left\{g_{\ell}\left(H_{\ell}\left(\underline{d}_{\ell-1}\right)\right) = d_\ell\right\},
\end{align*}
which is the outcome in stage $t$ that is realized when the treatment assignments from stage $1$ to stage $s$ are fixed to $\underline{d}_{s}$, and the subsequent sequential treatment assignment follows $\underline{g}_{(s+1):t}$.\footnote{We denote $\widetilde{Y}_{t}\left(\underline{d}_{t},\underline{g}_{(t+1):t}\right) = Y_{t}\left(\underline{d}_{t}\right)$ when $s=t$.} 
To ensure consistent estimation with a given distribution $P$, the following assumption requires that the first-best treatment rule is attainable at all but the first stage.

\medskip{}

\begin{assumption}[First-Best Treatment Rule] 
\label{asm:first-best} 
For any $t=2,\ldots,T$, there exists $g_{t,FB}^{\ast}\in{\cal G}_{t}$ such that the following holds:
\begin{align*}
&E_{P}\left[\sum_{s=t}^{T}\gamma_{s}\widetilde{Y}_{s}\left(\text{\underline{D}}_{t-1},\underline{g}_{t:s,FB}^{\ast}\right) \middle| H_{t}\right] \geq \max_{d_t \in \{0,1\}}E_{P}\left[\sum_{s=t}^{T}\gamma_{s}\widetilde{Y}_{s}\left(\text{\underline{D}}_{t-1},d_t,\underline{g}_{(t+1):T,FB}^{\ast}\right) \middle| H_{t} \right] \mbox{\ a.s.}
\end{align*}
\end{assumption}
\medskip{}

\noindent 
We refer to $g_{t,FB}^{\ast}$, which satisfies Assumption \ref{asm:first-best}, as the first-best treatment rule at stage $t$. The first-best rule $g_{t,FB}^{\ast}$ always chooses the best treatment for any history $h_t$ given that the first-best rules are followed in the future stages. Assumption \ref{asm:first-best} is satisfied when $\MG_{t}$, $t=2,\ldots,T$, are rich enough or are correctly specified in the sense that they contain the first-best rule. Note that there is a trade-off between the simplicity of a class of DTRs and the feasibility of Assumption \ref{asm:first-best}; while a simpler class of DTRs is often preferable in practice, it is less likely to contain the first-best rule.\footnote{A tension also exists between restrictions on information sets versus restrictions on functional classes. Imposing functional restrictions can restrict the information set, potentially causing dynamic inconsistency.} Assumption \ref{asm:first-best} does not require the class of treatment rules for the first stage $\MG_1$ to contain the first-best.

When the first-best rule is not attainable in $\MG_{t}$ for some $t\geq 2$, the solution $g_{s}^{\ast}$ of the backward induction for $s\leq t$ does not necessarily correspond to the optimal treatment rule. We illustrate this issue with a simple example in the following remark  (and also in the simulation study in Section \ref{sec:simulation}).

\medskip

\begin{remark}
Suppose that $T=2$ and the data-generating process (DGP) $P$ satisfies the following:
\begin{align}
    & E_{P}[Y_{2}(1,1)]=1.0,\ E_{P}[Y_{2}(1,0)]=0.5,\ E_{P}[Y_{2}(0,1)]=0.0,\ E_{P}[Y_{2}(0,0)]=0.6; \nonumber \\
    & \mbox{$D_{1}$ and $D_{2}$ are independently distributed as $Ber(1/2)$}. \label{eq:example_DGP} 
\end{align}
We set the target welfare to $$W(g) = E_P\left[\widetilde{Y}_2(g_1,g_2)\right] =E_{P}\left[\sum_{(d_1,d_2)\in \{0,1\}^2}Y_2\left(d_1,d_2\right)\cdot 1\{g_1(H_1) = d_1,g_2(H_2(d_1))=d_2\}\right].$$ Suppose that the history information are $H_1 = \emptyset$ and $H_2 = (D_1)$.

As an example of a constrained class of DTRs, we consider a class of uniform DTRs; that is $\MG_t = \{c_{t}^{0},c_{t}^{1}\}$, for $t=1,2$, where $c_{t}^{0}$ and $c_{t}^{1}$ denote constant functions such that $c_{t}^{0}(h_t)=0$ and $c_{t}^{1}(h_t)=1$ for any $h_t$. Under the supposed DGP $P$, the first-best rule for $t=2$ is $g_{2,FB}^{\ast}(d_1)=d_1$. Hence $\MG_2$ does not contain the first-best. 

The optimal DTR over the class of constant DTRs is
\begin{align*}
    (g_{1,opt}^{\ast},g_{2,opt}^{\ast}) = \argmax_{(g_1,g_2) \in \{c_{1}^{0},c_{1}^{1}\} \times \{c_{2}^{0},c_{2}^{1}\}} E\left[\widetilde{Y}_{2}(g_1,g_2)\right] = (c_{1}^{1},c_{2}^{1}),
\end{align*}
and its welfare is $W(g_{1,opt}^{\ast},g_{2,opt}^{\ast})=E[Y_2(1,1)]=1.0$.
On the other hand, the solution $(g_{1}^{\ast},g_{2}^{\ast})$ of the backward-induction approach is $(c_{1}^{0},c_{2}^{0})$ because
\begin{align*}
    \mbox{(1st step)\ \ \ \ }g_{2}^{\ast} &= \argmax_{g_2 \in \{c_{2}^{0},c_{2}^{1}\}}E_{P}\left[\widetilde{Y}_2(D_1,g_2)\right] = c_{2}^{0};  \nonumber \\
    \mbox{(2nd step)\ \ \ \ }g_{1}^{\ast} &= \argmax_{g_1 \in \{c_{1}^{0},c_{1}^{1}\}}E_{P}\left[\widetilde{Y}_2(g_1,g_{2}^{\ast})\right]  = c_{1}^{0}.  
\end{align*}
Hence, the backward-induction solution $g^{\ast}=(c_{1}^{0},c_{2}^{0})$ differs from the optimal solution $g_{opt}^{\ast}=(c_{1}^{1},c_{2}^{1})$ over $\MG$, resulting in a suboptimal welfare $W(g^{\ast})=E[Y_2(0,0)]=0.6$.
\par
The above example suggests that when the first-best rule is not feasible in $\MG_{t}$ ($t\geq 2$), the backward-induction solution does not necessarily correspond to the optimal one. 
This happens because the backward-induction solution $g_{t}^{\ast}$ depends on the DGP $P$ of the observed data in which the distribution of treatment assignments $(D_1,D_2)$ is decided by the experimental design.
This DGP differs from the DGP that arises when the treatment assignments, except for stage $t$, follow the optimal treatment rules. 
However, when the first-best rule is feasible in $\MG_{t}$ for each $t \geq 2$, the backward-induction solution $g_{t}^{\ast}$ at each stage corresponds to the first-best rule, under the overlap condition, irrespective of the distribution of $(D_1,D_2)$.
\par 
Finally, note that the infeasibility of the first-best rule does not necessarily cause the suboptimality of the backward-induction approach for a fixed DGP. Suppose that the DGP $P$ satisfies the condition (\ref{eq:example_DGP}) with $E_P[Y_2(0,1)]=0.0$ replaced by $E_P[Y_2(0,1)]=0.4$. In this case, the backward-induction solution becomes $g^{\ast}=(c_{1}^{1},c_{2}^{1})$ and corresponds to the optimal one $g_{opt}^{\ast} = (c_{1}^{1},c_{2}^{1})$.\footnote{There is also another example. Consider decision rules that rely solely on a discretized version of the history space. In such a scenario, backward induction can still achieve the optimal decision rule within this discretized class, treating the discretized history space as a new set of covariates.}
\end{remark}

\subsection{Simultaneous Dynamic Empirical Welfare Maximization \label{sec:simultaneous DEWM}}

The second approach is a sample analogue of the entire welfare maximization problem (\ref{eq:welfare maximization}). We refer to the proposed method as the simultaneous DEWM method, as it simultaneously estimates the optimal treatment rules across all stages. The method estimates the optimal DTR through the maximization of the sample analogue of (\ref{eq:original welfare}): 
\begin{align}
\left(\hat{g}_{1}^{S},\dots,\hat{g}_{T}^{S}\right) & \in\argmax_{g\in{\cal {\cal G}}}\sum_{t=1}^{T}\left[\frac{1}{n}\sum_{i=1}^{n}w_{t}^{S}(Z_{i},\text{\ensuremath{\underline{g}}}_{t})\right],\label{eq:optim_SDEWM}
\end{align}
where $\underline{g}_{t} \equiv (g_1,\ldots,g_{t})$ is the vector of treatment rules up to stage $t$ and
\begin{align*}
w_{t}^{S}(Z_{i},\text{\ensuremath{\underline{g}}}_{t}) & \equiv \frac{\left(\prod_{s=1}^{t}1\left\{ D_{is} = g_{s}\left(H_{is}\right)\right\}\right) \gamma_{t}Y_{it}}{\prod_{s=1}^{t}e_{s}\left(D_{is},H_{is}\right)}.
\end{align*}
In equation (\ref{eq:optim_SDEWM}), $n^{-1}\sum_{i=1}^{n}w_{t}^{S}(Z_{i},\text{\ensuremath{\underline{g}}}_{t})$ corresponds to the sample analogue of the $t$-th term in (\ref{eq:original welfare}).
We denote by $\hat{g}^{S}\equiv \left(\hat{g}_{1}^{S},\ldots,\hat{g}_{T}^{S}\right)$
the DTR obtained from this procedure. 
Theorem \ref{thm:upper bound} below shows that this method can consistently estimate the optimal DTR on $\MG$ even when $\MG_t$ does not contain the first-best rule for some $t$ (i.e., Assumption \ref{asm:first-best} does not hold).

\medskip

\begin{remark}[Optimization]\label{rem:MILP}
When $\MG_{t}$ ($t=1,\ldots,T$) are classes of the linear treatment rules, the optimization problems (\ref{eq:optimization_backward DEWM}) for the backward DEWM and (\ref{eq:optim_SDEWM}) for the simultaneous DEWM can be formulated as mixed integer linear programming (MILP) problems. Appendix \ref{appendix:computation} gives details.
\end{remark}

\smallskip

\begin{remark}[Q-learning]
The Q-learning method is also based on the idea of backward induction \citep{Murphy_2005,Moodie_et_al_2012}.
In the first step, the method estimates Q-function for stage $T$, $Q_{T}^{\dagger}\left(h_{t},d_{t}\right) \equiv E_{P}[Y_{T}| H_{T}=h_{T},D_{T}=d_{t}]$, through regression of $Y_T$ on $(H_T,D_{T})$ and obtain its estimate $\widehat{Q}_{T}^{\dagger}\left(h_{t},d_{t}\right)$. Then it estimates the optimal treatment rule for stage $T$ as $\hat{g}_{T}^{Q}(h_{T}) = \argmax_{d_{T} \in \{0,1\}} \widehat{Q}_{T}^{\dagger}\left(h_{t},d_{t}\right)$. Recursively, from $t=T-1$ to $1$, the method estimates the Q-function (optimal action-value function) for stage $t$, $Q_{t}^{\dagger}(h_t,d_{t}) \equiv E_{P}\left[Y_{t} + \gamma_{t+1}\max_{d_{t+1}}Q_{t+1}^{\dagger}(h_{t+1},d_{t+1})| H_{t}=h_{t},D_{t}=d_{t}\right]$, by regressing $Y_{t} + \gamma_{t+1}\max_{d_{t+1}}\widehat{Q}_{t+1}^{\dagger}(h_{t+1},d_{t+1})$ on $(H_{t},D_{t})$, and obtain its estimate $\widehat{Q}_{t}^{\dagger}(h_t,d_{t})$.\footnote{Linear regression is typically used to estimate the Q-functions.} Then it estimates the optimal treatment rule for stage $t$ as $\hat{g}_{t}^{Q}(h_{t}) = \argmax_{d_{t} \in \{0,1\}} \widehat{Q}_{t}^{\dagger}\left(h_{t},d_{t}\right)$.
The method yields a DTR $\hat{g}^{Q}\equiv \left(\hat{g}_{1}^{Q},\ldots,\hat{g}_{T}^{Q}\right)$. 

Q-learning is simple to implement and computationally tractable. Moreover, it does not require overlap conditions of propensity scores. However, it requires the correct specification of the Q-functions for consistent estimation of the optimal DTRs, even when experimental data is used.
Our proposed methods do not require the specification of the Q-functions; instead, they use the propensity scores. Additionally, while the backward DEWM requires the specified class of DTRs to include the first-best rules, the simultaneous DEWM does not.
\end{remark}

\smallskip

\begin{remark}[Non-Linear Social Welfare]
So far we have considered the linear form (\ref{eq:additive_welfare_function}) of the welfare function. However, some important social welfare criteria (e.g., Gini social welfare \citep{Blackorby_Donaldson_1978,Weymark_1981}) are represented by non-linear social welfare functions. In Appendix \ref{appendix:non-additive_welfare_function}, we consider the 
equality-minded rank-dependent social welfare functions introduced by \cite{Meyer_1995} and \cite{Weymark_1981} and studied by \cite{Kitagawa_Tetenov_2021}: 
\begin{align}
    W_{\Lambda}(F) \equiv \int_{0}^{\infty}\Lambda(F(y))dy, \label{eq:rank-dependent_welfare_function_1}
\end{align}
where $F(y)$ is the distribution of an outcome and $\Lambda(\cdot):[0,1] \rightarrow [0,1]$ is a non-increasing, non-negative function with $\Lambda(0)=1$ and $\Lambda(1)=0$. 
An important family of social welfare functions represented by (\ref{eq:rank-dependent_welfare_function_1}) is the extended Gini family \citep{Donaldson_Weymark_1980,Donaldson_Weymark_1983,Aaberge_et_al_2013}:
$W_k(F) \equiv \int_{0}^{\infty}(1 - F(y))^{k-1}dy$. When $k=3$, $W_k(F)$ corresponds to the standard Gini social welfare function \citep{Blackorby_Donaldson_1978,Weymark_1981}: $W_{Gini}(F) = E(Y)(1 - I_{Gini}(F))$ with $I_{Gini}(F) = 1 - (\int_{0}^{1}F^{-1}(\tau)\cdot 2(1-\tau)d\tau)/E(Y)$.

For any DTR  $g=(g_1, \ldots, g_T)$, let $F_{g}(\cdot)$ denote the distribution of $\sum_{t=1}^{T}\gamma_{t} \widetilde{Y}_{t}(\underline{g}_{t})$, and we define the rank-dependent SWF of $g$ by $W_{\Lambda}(g) \equiv W_{\Lambda} (F_g)$.
Appendix \ref{appendix:non-additive_welfare_function} presents a simultaneous DEWM approach to estimate the optimal DTR that maximizes the non-linear social welfare function $W_{\Lambda}(g)$ over $\MG$, and shows its statistical properties.

\end{remark}

\smallskip

\begin{remark}[Multiple Treatment]
    We have so far considered DTRs with binary treatment in each stage. 
    Suppose that there are $K$ treatments in each stage. The discussion so far and the presented procedures are easily extendable to the multiple treatment setting by replacing the binary treatment class $\{0,1\}$ with the multiple one $\{1,\ldots,K\}$. In this case, the treatment rule $g_t$ becomes a map from $\MH_t$ to $\{1,\ldots,K\}$. Appendix \ref{appendix:multiple} elaborates on this extension.
\end{remark}

\subsection{Statistical Properties \label{sec:statistical property}}

As in much of the literature that follows Manski (2004), we evaluate the statistical properties of the two DEWM methods in terms of the average welfare regret, that is, the average welfare loss relative to the maximum feasible welfare $W_{\MG}^{\ast}$. Following \citet{Kitagawa_Tetenov_2018a}, we focus on the non-asymptotic
upper bounds of the worst-case average welfare regret, $\sup_{P\in{\cal P}\left(M, \kappa, \MG\right)}E_{P^{n}}\left[W_{{\cal G}}^{\ast}-W\left(\hat{g}\right)\right]$, where ${\cal P}\left(M, \kappa, \MG\right)$
is a class of distributions of $\left(\underline{D}_T,\{\underline{Y}_{T}(\underline{d}_{T})\}_{\underline{d}_{T} \in \{0,1\}^{T}},\{\underline{X}_{T}(\underline{d}_{T-1})\}_{\underline{d}_{T-1} \in \{0,1\}^{T-1}}\right)$ that satisfy Assumptions \ref{asm:sequential independence}, \ref{asm:bounded outcome}, and \ref{asm:overlap}
with $M\equiv\left(M_{1},\ldots,M_{T}\right)^{\prime}$,  $\kappa\equiv\left(\kappa_{1},\ldots,\kappa_{T}\right)^{\prime}$, and a fixed $\MG$.

The following theorem provides a finite-sample upper bound on the worst-case average welfare regret and shows its dependence on the sample size $n$, the VC-dimension of $\mathcal{G}_{t}$ for each $t$, and the number of stages $T$.

\medskip{}

\begin{theorem}\label{thm:upper bound} Suppose that Assumptions
\ref{asm:sequential independence}, \ref{asm:bounded outcome}, and \ref{asm:overlap} hold for any distribution
$P\in{\cal P}\left(M, \kappa, \MG\right)$ and Assumption \ref{asm:vc-class}
holds for $\MG$.\\
 (i) For the simultaneous DEWM method, there holds
\begin{align*}
\sup_{P\in{\cal P}\left(M, \kappa, \MG\right)}E_{P^{n}}\left[W_{{\cal G}}^{\ast}-W\left(\hat{g}^{S}\right)\right] & \leq C\sum_{t=1}^{T}\left\{ \frac{\gamma_{t}M_{t}}{\prod_{s=1}^{t}\kappa_{s}}\sqrt{\frac{\sum_{s=1}^{t}v_{s}}{n}}\right\} ,
\end{align*}
where $C$ is some universal constant.\\
 (ii) Suppose, in addition, that Assumption \ref{asm:first-best} holds
for a pair of $\mathcal{G}$ and any $P\in{\cal P}\left(M, \kappa, \MG\right)$.
Then, for the backward DEWM method, there holds
\begin{align*}
\sup_{P\in{\cal P}\left(M, \kappa, \MG\right)}E_{P^{n}}\left[W_{{\cal G}}^{\ast}-W\left(\hat{g}^{B}\right)\right] & \leq  C\sum_{t=1}^{T}\left\{ \frac{\gamma_{t}M_{t}}{\prod_{s=1}^{t}\kappa_{s}}\sqrt{\frac{\sum_{s=1}^{t}v_{s}}{n}}\right\} \\
 & +C\sum_{t=2}^{T}\frac{2^{t-2}}{\prod_{s=1}^{t-1}\kappa_{s}}\left(\sum_{s=t}^{T}\left\{ \frac{\gamma_{s}M_{s}}{\prod_{\ell=t}^{s}\kappa_{\ell}}\sqrt{\frac{\sum_{\ell=t}^{s}v_{\ell}}{n}}\right\} \right),
\end{align*}
where $C$ is the same universal constant. \end{theorem}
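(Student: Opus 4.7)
My plan is to handle part (i) by a standard empirical-welfare-maximization argument and part (ii) by a stagewise telescoping reduction to part (i). Throughout, define the empirical welfare $W_{n}^{S}(g) \equiv \sum_{t=1}^{T} n^{-1}\sum_{i=1}^{n} w_{t}^{S}(Z_{i},\underline{g}_{t})$. By Assumption~\ref{asm:sequential independence} and inverse-propensity reweighting, $E_{P}[W_{n}^{S}(g)] = W(g)$ for every $g \in \MG$, and by Assumptions~\ref{asm:bounded outcome} and \ref{asm:overlap} each summand $|w_{t}^{S}(Z,\underline{g}_{t})|$ is bounded by $(\gamma_{t} M_{t}/2)/\prod_{s=1}^{t}\kappa_{s}$ uniformly over $\underline{g}_{t} \in \MG_{1}\times\cdots\times\MG_{t}$.

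For part (i), take $g^{\ast} \in \argmax_{g\in\MG} W(g)$ and write
\begin{align*}
W_{\MG}^{\ast} - W(\hat{g}^{S}) = [W(g^{\ast}) - W_{n}^{S}(g^{\ast})] + [W_{n}^{S}(g^{\ast}) - W_{n}^{S}(\hat{g}^{S})] + [W_{n}^{S}(\hat{g}^{S}) - W(\hat{g}^{S})].
\end{align*}
The middle bracket is nonpositive by optimality of $\hat{g}^{S}$, so the regret is bounded above by $2\sup_{g\in\MG}|W_{n}^{S}(g)-W(g)|$. The triangle inequality further bounds this supremum by $\sum_{t=1}^{T}\sup_{\underline{g}_{t}}|n^{-1}\sum_{i=1}^{n} w_{t}^{S}(Z_{i},\underline{g}_{t}) - E_{P}[w_{t}^{S}(Z,\underline{g}_{t})]|$. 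The $t$-th function class $\{w_{t}^{S}(\cdot,\underline{g}_{t}) : \underline{g}_{t} \in \MG_{1}\times\cdots\times\MG_{t}\}$ is a bounded VC-subgraph class because the product of indicators $\prod_{s=1}^{t}\mathbf{1}\{g_{s}(h_{s})=d_{s}\}$ inherits VC-dimension of order $\sum_{s=1}^{t} v_{s}$ from the component classes. Symmetrization followed by a Dudley/chaining bound, combined with the boundedness noted above, yields an $E_{P^{n}}$ bound of order $(\gamma_{t} M_{t}/\prod_{s=1}^{t}\kappa_{s})\sqrt{(\sum_{s=1}^{t} v_{s})/n}$ for each $t$; summing over $t$ produces the bound in (i).

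For part (ii), let $g_{t}^{\ast}$ denote the population backward solution from (\ref{eq:backward_induction}) and telescope
\begin{align*}
W_{\MG}^{\ast} - W(\hat{g}^{B}) = \sum_{t=1}^{T}\bigl[W(\hat{g}_{1}^{B},\ldots,\hat{g}_{t-1}^{B},g_{t}^{\ast},\ldots,g_{T}^{\ast}) - W(\hat{g}_{1}^{B},\ldots,\hat{g}_{t}^{B},g_{t+1}^{\ast},\ldots,g_{T}^{\ast})\bigr].
\end{align*}
Under Assumption~\ref{asm:first-best} the rule $g_{t}^{\ast}$ for $t\geq 2$ is first-best, so replacing $(\hat{g}_{1}^{B},\ldots,\hat{g}_{t-1}^{B})$ by any other profile leaves $(g_{t}^{\ast},\ldots,g_{T}^{\ast})$ jointly optimal from stage $t$ onward conditional on $H_{t}$; the $t$-th summand is therefore exactly a stage-$t$ single-step regret of $\hat{g}_{t}^{B}$ against $g_{t}^{\ast}$. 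Adding and subtracting $E_{n}[q_{t}(Z,\cdot;g_{t+1}^{\ast},\ldots,g_{T}^{\ast})]$ and $E_{P}[q_{t}(Z,\cdot;g_{t+1}^{\ast},\ldots,g_{T}^{\ast})]$ splits this stage-$t$ regret into (a) a uniform empirical-process deviation over $\MG_{t}$ with oracle future rules, handled exactly as in part (i) applied to the tail class starting at stage $t$, giving terms of order $(\gamma_{s} M_{s}/\prod_{\ell=t}^{s}\kappa_{\ell})\sqrt{(\sum_{\ell=t}^{s} v_{\ell})/n}$ for $s = t,\ldots,T$; and (b) a perturbation term measuring the change in the empirical criterion when oracle futures are replaced by their backward estimates. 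For (b), swapping any single $g_{s}^{\ast}$ for $\hat{g}_{s}^{B}$ flips the corresponding indicator $\mathbf{1}\{g_{s}(h_{s})=d_{s}\}$ inside $q_{t}$; unioning over the $2^{t-2}$ possible disagreement patterns along the backward path and paying $1/\prod_{s=1}^{t-1}\kappa_{s}$ for the inverse-propensity reweighting along that path yields the prefactor $2^{t-2}/\prod_{s=1}^{t-1}\kappa_{s}$ in the second line of the theorem.

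The principal obstacle will be the bookkeeping in part (ii): one must verify that the perturbation in step (b) closes back on already-analyzed stage-$s$ empirical-process bounds rather than accumulating in a way that breaks the $n^{-1/2}$ rate. The first-best assumption is essential here, because without it the identity of the stage-$s$ optimizer would itself depend on earlier rules and the clean stagewise decomposition above would fail (as the Remark after Assumption~\ref{asm:first-best} illustrates). Once the recursion is set up with the correct $2^{t-2}/\prod_{s=1}^{t-1}\kappa_{s}$ factor, each remaining stagewise piece is a direct instance of the part~(i) argument applied to the truncated tail from stage $t$ onward, and assembling the pieces reproduces the two-line upper bound in the statement.
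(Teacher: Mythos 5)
Part (i) of your proposal is correct and follows essentially the same route as the paper: decompose the regret into $2\sup_{g\in\MG}|W_n(g)-W(g)|$, split the supremum across stages by the triangle inequality, show each stage-$t$ class $\{w_t^S(\cdot,\underline{g}_t)\}$ is a bounded VC-subgraph class with dimension at most $\sum_{s=1}^t v_s$ (the paper's Lemmas \ref{lem:lemma A1_KT2018}--\ref{lem:vc-subclass}), and apply the maximal inequality of Lemma \ref{lem:KT}.

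For part (ii) your architecture (telescoping, single-stage empirical-process bounds with oracle futures, plus a propagation term carrying the $2^{t-2}/\prod_{s=1}^{t-1}\kappa_s$ prefactor) is the right shape, but the step you label (b) is where the proof actually lives, and your stated mechanism for it is not the one that works. The paper does \emph{not} bound the perturbation by a union over indicator disagreement patterns in the empirical criterion. Instead it proves a population-level propagation inequality (Lemma \ref{lem:backward DEWM}(ii)): for $s>t$,
\begin{align*}
\tilde{Q}_{t}\left(g_{t}^{\ast},\ldots,g_{T}^{\ast}\right)-\tilde{Q}_{t}\left(g_{t}^{\ast},\ldots,g_{s}^{\ast},\hat{g}_{s+1}^{B},\ldots,\hat{g}_{T}^{B}\right) \leq\frac{1}{\prod_{\ell =t}^{s}\kappa_{\ell}}\Delta\tilde{Q}_{s+1},
\end{align*}
where $\Delta\tilde{Q}_{s+1}$ is the stage-$(s+1)$ population regret of the estimated tail. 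The proof of this inequality needs two things simultaneously: Assumption \ref{asm:first-best} to guarantee that the \emph{conditional} difference $E_{P}[\tilde{Q}_{s+1}(g^{\ast}_{s+1},\ldots,g^{\ast}_{T})-\tilde{Q}_{s+1}(\hat{g}^{B}_{s+1},\ldots,\hat{g}^{B}_{T})\mid H_{s+1}]$ is nonnegative pointwise, and Assumption \ref{asm:overlap} to bound the inverse-propensity weight by $1/\prod_{\ell=t}^{s}\kappa_{\ell}$. Without the pointwise sign you cannot pull the weight out and preserve the inequality direction. The $2^{t-2}$ then arises not from a union bound but from unrolling the resulting linear recursion $\Delta\tilde{Q}_{T-k}\leq\sum_{s=T-k+1}^{T}\left(\prod_{\ell=T-k}^{s-1}\kappa_{\ell}\right)^{-1}\Delta\tilde{Q}_{s}+\Delta\tilde{Q}_{T-k}^{\dagger}$: each later-stage regret feeds into every earlier one, and the coefficient of $\Delta\tilde{Q}_{t}^{\dagger}$ doubles at each level (equivalently, it counts the $2^{t-2}$ chains of intermediate stages between $1$ and $t$). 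Your empirical-criterion swapping argument, as stated, would require additional uniform deviation bounds to return to population quantities and does not by itself produce the closed recursion; you flagged this as the principal obstacle, and it is exactly the piece that remains to be supplied.
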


\begin{proof}
See Appendix \ref{appendix:proofs-1}.
\end{proof}

\medskip{}

This theorem shows that the convergence rates of the worst-case average welfare regrets of the two methods are not slower than $n^{-1/2}$. The upper bounds increase with the VC-dimension of $\MG_{t}$, implying that as the candidate treatment rules become more complex, the estimated DTR tends to overfit the data (the distribution of welfare regret becomes more dispersed).\footnote{ 
When the VC-dimension $v_t$ increases with the sample size $n$, Theorem \ref{thm:upper bound} implies that the rate of convergence of the welfare regrets depends on this growth rate.}
The upper bound for the backward DEWM method is greater than that for the simultaneous DEWM method, though neither bound is necessarily sharp. Technically, the difference between these bounds arises from the property of the sequential estimation of the backward DEWM, which leads to additional uncertainty in the estimation.

The next theorem shows a lower bound on the maximum average welfare regret for any data-driven DTR. To present the theorem formally, let $v_{s:t}$, for $s\leq t$, denote the VC-dimension of the following class of indicator functions on $\MZ$:
\begin{align*}
    \left\{ f(z)=1\left\{ g_{s}\left(h_{s}\right)=d_{s},\ldots,g_{t}\left(h_{t}\right)=d_{t}\right\} :\left(g_{s},\ldots,g_{t}\right)\in\mathcal{G}_{s}\times\cdots\times\mathcal{G}_{t}\right\} .
\end{align*}
Note that $v_{s:t}\leq\sum_{\ell=s}^{t}v_{\ell}$
holds (see Lemma \ref{lem:vc_subclass}).

\medskip{}

\begin{theorem}\label{thm:lower bound} Suppose that Assumptions \ref{asm:sequential independence}, \ref{asm:bounded outcome}, and \ref{asm:overlap} hold for any distribution $P\in{\cal P}\left(M, \kappa, \MG\right)$ and Assumption \ref{asm:vc-class} holds for ${\cal G}$. Then, for any DTR $\hat{g}\in\mathcal{G}$ as a function of $\left(Z_{1},\ldots,Z_{n}\right)$, there holds
\begin{align*}
\sup_{P\in{\cal P}\left(M, \kappa, \MG\right)}E_{P^{n}}\left[W_{{\cal G}}^{\ast}-W\left(\hat{g}\right)\right] & \geq \frac{1}{2}\exp\left(-4\right)\max_{t \in \{1,\ldots,T\}}\left\{ \gamma_{t}M_{t}\sqrt{\frac{v_{1:t}}{n}}\right\} 
\end{align*}
for all $n\geq16v_{1:T}$. This result holds irrespective of whether or not Assumption \ref{asm:first-best} additionally holds for a pair of $\MG$ and any $P \in \MP(M, \kappa, \MG)$.
\end{theorem}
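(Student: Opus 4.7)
The plan is to reduce the dynamic lower bound to a single-stage binary-classification lower bound in the spirit of \citet{Kitagawa_Tetenov_2018a}, by selecting the stage that attains the maximum on the right-hand side and ``zeroing out'' the outcomes at all other stages. Let $t^\ast \in \{1,\ldots,T\}$ attain $\max_t \gamma_t M_t \sqrt{v_{1:t}/n}$, and set $m := v_{1:t^\ast}$. By definition of $v_{1:t^\ast}$ there exist $m$ points $z^{(1)},\ldots,z^{(m)} \in \MZ$ with $z^{(i)} = (\underline{d}^{(i)},\underline{h}^{(i)})$ that are shattered by the product-indicator class $\{\phi_g(z) := \prod_{s=1}^{t^\ast} 1\{g_s(h_s)=d_s\} : g \in \MG\}$; thus for every $\tau \in \{0,1\}^m$ some $g_\tau \in \MG$ realizes $\phi_{g_\tau}(z^{(i)}) = \tau_i$ for all $i$.

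For each $\sigma \in \{-1,+1\}^m$ I would construct a distribution $P_\sigma \in \MP(M,\kappa,\MG)$ as follows: draw a type $I \sim \mathrm{Uniform}\{1,\ldots,m\}$; conditional on $I=i$, take the covariate trajectory $\underline{X}_{t^\ast}$ to the deterministic value embedded in $\underline{h}^{(i)}$, draw each $D_s$ independently of all potential outcomes with propensity $1/2$ (feasible because $\kappa_s \leq 1/2$ in any non-degenerate class), impose $Y_s \equiv 0$ for $s \neq t^\ast$, and let $Y_{t^\ast}(\underline{d})$ be supported on $\{\pm M_{t^\ast}/2\}$ with $E_{P_\sigma}[Y_{t^\ast}(\underline{d}) \mid I=i] = \sigma_i u M_{t^\ast}$ when $\underline{d}=\underline{d}^{(i)}$ and $0$ otherwise, for a tuning parameter $u > 0$. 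Sequential independence, boundedness, and overlap then hold by construction. A short calculation using the inverse-propensity representation of $W$ yields
\begin{align*}
W(g) = \frac{\gamma_{t^\ast} u M_{t^\ast}}{m}\sum_{i=1}^{m}\sigma_i\,\phi_g(z^{(i)}),\qquad W_{\MG}^{\ast} = \frac{\gamma_{t^\ast} u M_{t^\ast}}{m}\sum_{i=1}^{m}\mathbf{1}\{\sigma_i=1\},
\end{align*}
so the regret of any $\hat g$ equals $(\gamma_{t^\ast} u M_{t^\ast}/m)$ times the Hamming distance between $(\phi_{\hat g}(z^{(i)}))_{i}$ and $(\mathbf{1}\{\sigma_i=1\})_i$, exactly the loss structure to which Assouad's lemma applies.

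I would close the argument with a standard Assouad/Bretagnolle--Huber computation. Two adjacent parameters $\sigma, \sigma'$ (differing in one coordinate $i$) give distributions that agree except on the event $\{I=i,\,\underline{D}_{t^\ast}=\underline{d}^{(i)}\}$, of probability $(1/m)\cdot 2^{-t^\ast}$, where the outcome law is a Bernoulli shift with per-draw $KL$ of order $u^2$; hence the $n$-fold $KL$ is bounded by a constant multiple of $n u^2/m$. Tuning $u \asymp \sqrt{m/n}$ so that $n\cdot KL(P_\sigma\|P_{\sigma'}) \leq 4$ and invoking Bretagnolle--Huber, which gives $1-\|P_\sigma^n-P_{\sigma'}^n\|_{TV} \geq \tfrac{1}{2}e^{-4}$, Assouad's lemma yields an expected Hamming loss of at least $\tfrac{m}{2}\cdot\tfrac{1}{2}e^{-4}$; multiplying by $\gamma_{t^\ast}uM_{t^\ast}/m$ and substituting the tuned $u$ recovers the claimed $\tfrac{1}{2}e^{-4}\gamma_{t^\ast}M_{t^\ast}\sqrt{v_{1:t^\ast}/n}$, and since $t^\ast$ was chosen to attain the outer maximum the theorem follows. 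The hypothesis $n \geq 16\,v_{1:T}$ is precisely what is needed to keep the tuned $u$ below the threshold at which the perturbed Bernoulli parameters $\tfrac{1}{2}\pm u$ remain valid probabilities. The main obstacle I foresee is the tight bookkeeping of constants so as to land on exactly $\tfrac{1}{2}e^{-4}$ rather than a generic constant; no conceptually new idea beyond the static argument is needed because the zeroing-out reduction and the lifting of the decision variable to the product indicator $\phi_g$ have already collapsed the dynamic problem into a single-stage classification problem of VC dimension $v_{1:t^\ast}$.
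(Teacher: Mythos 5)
Your construction coincides with the paper's up to the very last step: Lemma \ref{lem:lower bound} in Appendix A likewise reduces to a single distinguished stage by zeroing the weights at all other stages (Theorem \ref{thm:lower bound} then just picks the maximizing $t$), places the data uniformly on $v_{1:t}$ points shattered by the product-indicator class $1\{g_1(h_1)=d_1,\ldots,g_t(h_t)=d_t\}$, and perturbs the two-point outcome law on the distinguished treatment path by $\pm\delta$ with $\delta=\sqrt{v_{1:t}/n}$. Where you diverge is the final inequality: the paper runs an explicit Bayes-risk computation against the uniform prior on the bit vector (the device of \citet{massart2006risk} as used in \citet{Kitagawa_Tetenov_2018a}), writing the posterior odds as $a^{k_j^{+}-k_j^{-}}$ with $a=\frac{1+2\delta}{1-2\delta}$ and controlling $E\bigl|k_j^{+}-k_j^{-}\bigr|\le\sqrt{n/(4v_{1:t})}$ by Cauchy--Schwarz and Jensen, so the exponential penalty is $(a-1)\sqrt{n/(4v_{1:t})}=\frac{2}{1-2\delta}\le 4$; the role of $n\ge 16v_{1:T}$ is to keep $1-2\delta\ge\frac12$ in this exponent, not merely to keep the Bernoulli parameters valid as you suggest. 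You replace this with Assouad plus Bretagnolle--Huber.

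That substitution is where the proof fails to deliver the theorem as stated. By your own bookkeeping, for adjacent $\sigma,\sigma'$ one has $n\,\mathrm{KL}(P_\sigma\Vert P_{\sigma'})=\frac{n\,2^{-t^{\ast}}}{m}\cdot 2u\log\frac{1+2u}{1-2u}\ge \frac{8nu^{2}2^{-t^{\ast}}}{m}$, which already equals $4$ at $u=\sqrt{m/n}$ and $t^{\ast}=1$ and is only bounded above by $\frac{4}{1-2u}\le 8$; Bretagnolle--Huber then costs an additional prefactor $\frac12$ inside Assouad's $\frac{m}{2}$. Whether you keep $u=\sqrt{m/n}$ (exponent up to $8$) or shrink $u$ to force $n\,\mathrm{KL}\le 4$ (losing at least $\sqrt{2}$ in $u$), you land at something like $\frac{1}{4}e^{-8}$ or $\frac{1}{4\sqrt{2}}e^{-4}$ times $\gamma_{t^{\ast}}M_{t^{\ast}}\sqrt{v_{1:t^{\ast}}/n}$, short of the claimed $\frac12 e^{-4}$ by a constant factor; exploiting the $2^{-t^{\ast}}$ dilution to enlarge $u$ only helps when $t^{\ast}=1$ and still leaves the Bretagnolle--Huber factor of $\frac12$. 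The paper's route wins exactly because the penalty enters through the first absolute moment of the centered binomial count rather than through a squared, KL-type quantity. So your proposal is a correct proof of the minimax rate $\sqrt{v_{1:t}/n}$, and you correctly anticipated that the constant is the sticking point, but as written the step ``recovers the claimed $\frac12 e^{-4}$'' does not go through; to close the gap you would need to adopt the paper's posterior-odds computation (or an equivalent first-moment two-point bound) in place of Assouad--Bretagnolle--Huber. A minor further remark: your randomized $1/2$ propensities are a departure from the paper, which places the observables (treatments included) directly on the shattered points; your version is harmless for the lower bound but is the source of the $2^{-t^{\ast}}$ dilution you then have to carry.
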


\begin{proof}
See Appendix \ref{appendix:proof}.
\end{proof}

\medskip{}

This theorem, along with Theorem \ref{thm:upper bound}, shows that both $\hat{g}^{S}$ and $\hat{g}^{B}$ are minimax rate optimal over the class of DGPs $\MP\left(M, \kappa, \MG\right)$. Optimality here means that the convergence rates of the upper bounds of the worst-case average welfare regrets in Theorem \ref{thm:upper bound} align with the convergence rate of the universal lower bound concerning the sample size $n$. 
The convergence rate is also optimal with respect to the VC-dimension $v_{t}$ for each $t$.
In Theorem \ref{thm:lower bound}, the maximum of $\gamma_{t}M_{t}\sqrt{v_{1:t}/n}$ over $t=1,\ldots,T$, rather than its summation over $t=1,\ldots,T$, appears in the lower bound, which is due to the simplicity of the derivation of the lower bound in its proof.
\medskip{}

\begin{remark}
\label{rem:deman outcome}
The finite sample optimization problems (\ref{eq:optimization_backward DEWM}) and (\ref{eq:optim_SDEWM}) are not invariant to adding a constant, which can affect the estimated DTR by manipulating the outcome variables.  
Following \cite{Kitagawa_Tetenov_2018a}, we suggest using the demeaned outcomes $Y_{it}-(1/n)\sum_{i=1}^{n}Y_{it}$, instead of the original ones $Y_{t}$, in the optimization problems (\ref{eq:optimization_backward DEWM}) and (\ref{eq:optim_SDEWM}), because it is invariant to adding a constant to the original outcome.  
\end{remark}

\section{Budget/Capacity Constraints \label{sec:budget constraint}}

We consider budget/capacity constraints that limit the proportion of the population receiving treatment. In dynamic treatment policy, these constraints may be imposed intertemporally, meaning that they affect treatment assignment across multiple stages. A policymaker faces an intertemporal budget/capacity constraint when managing a budget that spans across multiple stages or a fixed amount of treatment to distribute over multiple stages.\footnote{In the static setting, \cite{Bhattacharya_Dupas_2012} propose a method to estimate the optimal treatment rule under a budget constraint. As its application, they estimate the optimal allocation policy for subsidies of anti-malaria bed nets under budget constraints.} For instance, the job training program studied by \cite{Rodriguez_et_al_2018} subsidizes training courses at off-site providers across multiple stages, where, when the subsidy budget is limited, the program faces intertemporal budget constraints, limiting the number of individuals participating in training across multiple stages.


Similar to the definition of $\widetilde{Y}_{t}\left(\underline{g}_{t}\right)$, we define a counterfactual history as 
\begin{align*}
    \widetilde{H}_t\left(\underline{g}_{t-1}\right) &\equiv \sum_{\underline{d}_{t-1} \in \{0,1\}^{t-1}}H_t(\underline{d}_{t-1})\cdot \prod_{s=1}^{t-1}1\left\{g_s\left(H_{s}(\underline{d}_{s-1})\right) = d_s\right\},
\end{align*} 
which is the counterfactual history in stage $t$ that is realized when the prior treatments $\underline{d}_{t-1}$ are decided by $\underline{g}_{t-1}$. We denote $\widetilde{H}_{1}\left(\underline{g}_0\right) = H_1$ when $t=1$.
We suppose that the policymaker faces the following $B$ constraints: 
\begin{align}
\sum_{t=1}^{T}K_{tb}E_{P}\left[g_{t}\left(\widetilde{H}_t\left(\underline{g}_{t-1}\right)\right)\right] & \leq C_{b}\ \ \mbox{for }\ b=1,\ldots,B,\label{eq:budget constraint}
\end{align}
where $K_{tb}\in\left[0,1\right]$ and $C_{b}\geq0$. As a scale
normalization, we assume $\sum_{t=1}^{T}K_{tb}=1$ for all $b$.
The left-hand side of equation (\ref{eq:budget constraint}) represents the implementation cost of the DTR $g$, where
the weights $K_{1b},\ldots,K_{Tb}$ represent the relative
costs of treatments across stages, and $C_{b}$ represents the total
budget or capacity.  If at least two of $K_{1b},\cdots,K_{Tb}$ take non-zero values, the $b$-th constraint is an intertemporal budget/capacity constraint; otherwise, the $b$-th constraint is a temporal one. In the context of the two-stage job training program with an intertemporal budget constraint ($B=1$), $k_{11}$ and $k_{21}$ represent costs of job training for the first and second stages, respectively, and $C_{1}$ represents the intertemporal budget of the program.\footnote{In reality, the time periods of individuals receiving the treatment would not be aligned. For example, different individuals take job training (for each stage) at different times. In such cases, the formulation (\ref{eq:budget constraint}) of budget constraints can be considered as follows. Suppose that a provider of the treatments (e.g., government) has a fixed budget that can be expended in a fixed fiscal period. The provider (correctly) predicts the number of participants of the program during the fiscal period. We also suppose that the budget can be expended on treatment for any stage for those who participate in the program in any time during the fiscal period. Subsequently, given the budget, the provider can decide the fraction of people who can receive treatment at each stage, as formulated as (\ref{eq:budget constraint}).}

Our aim is to maximize the welfare $W(g)$ under the budget/capacity constraints (\ref{eq:budget constraint}) across the class of feasible DTRs $\mathcal{G}$. The population welfare maximization problem is then formulated as 
\begin{align}
& W_{{\cal G}}^{\ast,bdgt}  =\max_{g\in{\cal G}}W\left(g\right)\label{eq:budget constrained welfare}\\
\mbox{s.t. } & \sum_{t=1}^{T}K_{tb}E_{P}\left[g_{t}\left(\widetilde{H}_{t}\left(\underline{g}_{t-1}\right)\right)\right]\leq C_{b}\ \mbox{for }b=1,\ldots,B.\nonumber 
\end{align}
The goal of the analysis is to choose a DTR from ${\cal G}$ that maximizes the welfare $W(\cdot)$ subject to the budget/capacity constraints (\ref{eq:budget constraint}).  

To this end, we incorporate the sample analogues of the budget/capacity constraints (\ref{eq:budget constraint}) into the simultaneous DEWM.\footnote{We here do not consider the backward DEWM with the budget/capacity constraints because the first-best rule is likely to be unachievable under such constraints.}
The simultaneous DEWM method with the budget/capacity constraints solves the following problem: 
\begin{align}
& \left(\hat{g}_{1}^{bdgt},\dots,\hat{g}_{T}^{bdgt}\right)  \in\argmax_{g\in{\cal {\cal G}}}\frac{1}{n}\sum_{i=1}^{n}\sum_{t=1}^{T}w_{t}^{S}\left(Z_{i},\underline{g}_{t}\right)\label{eq:budget constrained SDEWM}\\
\mbox{s.t. } & \sum_{t=1}^{T}K_{tb}\widehat{E}\left[g_{t}\left(\widetilde{H}_{t}\left(\underline{g}_{t-1}\right)\right)\right]\leq C_{b}+\alpha_{n}  \mbox{\ for }b=1,\ldots,B, \label{eq:empirical budget constraint}
\end{align}
where 
\begin{align*}
\widehat{E}\left[g_{t}\left(\widetilde{H}_{t}\left(\underline{g}_{t-1}\right)\right)\right]
\equiv \frac{\sum_{i=1}^{n}\left(\prod_{s=1}^{t-1}1\left\{ D_{is}=g_{s}\left(H_{is}\right)\right\}\right) g_{t}\left(H_{it}\right)}{\sum_{i=1}^{n}\left(\prod_{s=1}^{t-1}1\left\{ D_{is}=g_{s}\left(H_{is}\right)\right\}\right) }.
\end{align*}
We denote $\hat{g}^{bdgt} \equiv \left(\hat{g}_{1}^{bdgt},\dots,\hat{g}_{T}^{bdgt}\right)$.

The inequality constraints (\ref{eq:empirical budget constraint}) are empirical budget/capacity constraints, where $\alpha_{n}$ is a tuning parameter dependent on the sample size $n$. $\alpha_{n}$ may be either positive or negative, and converges to zero as $n$ increases. As $\alpha_{n}$ decreases, the empirical budget/capacity constraints become tighter. A sufficiently large value of $\alpha_{n}$ ensures that the optimal DTR (a solution of (\ref{eq:budget constrained welfare})) is attainable under the empirical budget/capacity constraints with high probability. When $\MG_{t}$ is the class of linear treatment rules for all $t$, the optimization problem (\ref{eq:budget constrained SDEWM}) can be formulated as an MILP problem (see Appendix \ref{appendix:computation}). 

Subsequently, we evaluate the resulting welfare regret $W_{{\cal G}}^{\ast,bdgt}-W\left(\hat{g}^{bdgt}\right)$ and the budget excess $\sum_{t=1}^{T}K_{tb}E_{P}\left[\hat{g}_{t}^{bdgt}\left(H_{t}\left(\underline{\hat{g}}_{t-1}^{bdgt}\right)\right)\right]-C_b$ of the estimated DTR with high probability, rather than evaluating their expected values, $E_{P^n}\left[W_{{\cal G}}^{\ast,bdgt}-W\left(\hat{g}^{bdgt}\right)\right]$ and $\sum_{t=1}^{T}K_{tb}E_{P^{n}}\left[E_{P}\left[\hat{g}_{t}^{bdgt}\left(H_{t}\left(\underline{\hat{g}}_{t-1}^{bdgt}\right)\right)\right]\right]-C_b$.\footnote{Note that $W\left(\hat{g}^{bdgt}\right)$ and $E_{P}\left[\hat{g}^{bdgt}\left(H_{t}\left(\underline{\hat{g}}_{t-1}^{S}\right)\right)\right]$ are random variables depending on the random sample $\{Z_i:i=1,\ldots,n\}$. } We adopt this approach because the actual value of the budget excess is typically of greater concern than its expected value in practice. 

The following theorem shows the finite-sample properties of the welfare regret and the budget excess of $\hat{g}^{bdgt}$.

 \medskip{}

\begin{theorem}\label{thm:budget constraint} Suppose that the underlying distribution $P$ satisfies Assumptions \ref{asm:sequential independence} and \ref{asm:bounded outcome}, $\MG$ satisfies Assumption $\ref{asm:vc-class}$, and that the pair $(P,\MG)$ satisfies Assumption \ref{asm:overlap}. 
Let $W_{{\cal G}}^{\ast,bdgt}$ be defined in (\ref{eq:budget constrained welfare}) and $\hat{g}^{bdgt}$ be a solution of (\ref{eq:budget constrained SDEWM}) subject to (\ref{eq:empirical budget constraint}). Let $\delta$ be any value in $(0,1)$ and $C$ be the same constant as in Theorem \ref{thm:upper bound}. Let $k_{(B,n,\delta)} := \sqrt{\log\left(6B/\delta\right)/\left(2n\right)}$, and $W_{\MG,\alpha_{n}}^{\ast,bdgt}$ be the optimal value of the optimization problem (\ref{eq:budget constrained welfare}) with $C_{b}$ replaced by $C_{b} - k_{(B,n,\delta)} + \alpha_n$, assuming that such an optimal value exists. Then the following holds with probability at least $1-\delta$:
\begin{align}
\left|W_{{\cal G}}^{\ast,bdgt}-W\left(\hat{g}^{bdgt}\right)\right|
&\leq  
\left(W_{{\cal G}}^{\ast,bdgt}-W_{\MG,\alpha_{n}}^{\ast,bdgt}\right) \notag \\ 
& +  \frac{1}{\sqrt{n}}
\sum_{t=1}^{T}\left[\left(\frac{\gamma_{t}M_{t}}{\prod_{s=1}^{t}\kappa_{s}}\right)\left(2C\sqrt{\sum_{s=1}^{t}v_{s}}+\sqrt{2\log\left(6/\delta\right)}\right)\right]  \label{eq:maximum welfare loss_2}
\end{align}
and, for any $b \in \{1,\ldots,B\}$,
\begin{align}
\sum_{t=1}^{T}K_{tb}E_{P}\left[\hat{g}_{t}^{S}\left(\widetilde{H}_{t}\left(\underline{\hat{g}}_{t-1}^{bgdt}\right)\right)\right]-C_{b} 
\leq  \frac{1}{\sqrt{n}} \sum_{t=1}^{T}\left[K_{tb}\left(C\sqrt{\sum_{s=1}^{t}v_{s}}+\sqrt{\frac{\log\left(2B/\delta\right)}{2}}\right)\right] + \alpha_n. \label{eq:excess budget_2}
\end{align}
\end{theorem}
\begin{proof}
See Appendix \ref{appendix:proofs-2}.
\end{proof}

\medskip{}

Equations (\ref{eq:maximum welfare loss_2}) and (\ref{eq:excess budget_2}) evaluate the welfare regret and the budget excess of the estimated DTR over the $b$-th budget/capacity, respectively. 
In equation (\ref{eq:maximum welfare loss_2}), $W_{{\cal G}}^{\ast,bdgt}-W_{\MG,\alpha_{n}}^{\ast,bdgt} \leq 0$ holds when $\alpha_n \leq k_{(B,n,\delta)}$, and $W_{{\cal G}}^{\ast,bdgt}-W_{\MG,\alpha_{n}}^{\ast,bdgt} \geq 0$ holds otherwise. Hence, when we set $\alpha_n$ such that $\alpha_n \leq k_{(B,n,\delta)}$, the results in Theorem \ref{thm:budget constraint} holds with equation (\ref{eq:maximum welfare loss_2}) replaced by
\begin{align*}
\left|W_{{\cal G}}^{\ast,bdgt}-W\left(\hat{g}^{bdgt}\right)\right|
&\leq  
\frac{1}{\sqrt{n}}
\sum_{t=1}^{T}\left[\left(\frac{\gamma_{t}M_{t}}{\prod_{s=1}^{t}\kappa_{s}}\right)\left(2C\sqrt{\sum_{s=1}^{t}v_{s}}+\sqrt{2\log\left(6/\delta\right)}\right)\right],
\end{align*}
where the welfare regret converges to zero as $n$ increases.
The theorem suggests that with sufficiently large sample sizes, both the welfare regret and the budget excess are likely to be small, diminishing at the rate of $1/\sqrt{n}$ when $\alpha_{n}$ is chosen such that $\alpha_n = O(1/\sqrt{n})$ and that $\alpha_n \leq  k_{(B,n,\delta)}$.\footnote{When $\alpha_n - k_{(B,n,\delta)} \searrow 0$,  whether $W_{\MG}^{\ast} - W_{\MG,\alpha_{n}}^{\ast,bdgt}$ in (\ref{eq:maximum welfare loss_2}) converges to zero depends on the properties of the class of DTR $\MG$ and distribution $P$. When $\MG$ consists of a finite number of functions, the maximum welfare subject to budget constraints may not be continuous with respect to the budget under some $P$, and hence $W_{\MG}^{\ast} - W_{\MG,\alpha_{n}}^{\ast}$ may not converge to zero when $\alpha_n - k_{(B,n,\delta)} \searrow 0$.}

The tuning parameter $\alpha_{n}$ decides the strictness of the budget constraint. A smaller $\alpha_{n}$ implies that the estimated DTR tightly satisfies the budget constraint, as seen in (\ref{eq:excess budget_2}), but leads to lower welfare. Conversely, a larger $\alpha_{n}$ results in less strict adherence to the budget constraint and higher welfare. Thus, the choice of $\alpha_n$ involves a trade-off between maximizing welfare and minimizing budget excess. 

We here propose two approaches to choose $\alpha_n$. When aiming to satisfy the budget constraints with a certain level of budget excesses and a particular probability, we can analytically choose the proper value of $\alpha_n$ through Theorem \ref{thm:budget constraint}. For example, for any $\varepsilon \in (0,1)$, Theorem \ref{thm:budget constraint} guarantees that the excess budget for the $b$-th constraint is equal to or smaller than $\varepsilon$ with probability at least $1-\delta$ when we choose $\alpha_n = \varepsilon - \frac{1}{\sqrt{n}} \sum_{t=1}^{T}\left[K_{tb}\left(C\sqrt{\sum_{s=1}^{t}v_{s}}+\sqrt{\log\left(2B/\delta\right)/2}\right)\right]$. We can also use cross-validation to choose $\alpha_n$, wherein the validation data evaluates the welfare and budget excess for each DTR estimated with candidate values of $\alpha_n$. Note that $k_{(B,n,\delta)}$ is not a tuning parameter to be selected. Theorem \ref{thm:budget constraint} also guides the selection of the sample size $n$ so that the budget excess is constrained to a certain level with a particular probability.

\section{Observational Study \label{sec:estimated propensity score}}

We consider the observational data setting where the propensity scores are not known but can be estimated from data. We modify the backward and simultaneous DEWM methods to use the estimated propensity scores, following the e-hybrid EWM rule proposed by \citet{Kitagawa_Tetenov_2018a}. We also discuss construction of a doubly robust approach for estimating the optimal DTRs with technical exposition and theoretical results presented.

Let $\hat{e}_{t}\left(d_{t},h_{t}\right)$ be an estimated version of the propensity score $e_{t}\left(d_{t},h_{t}\right)$. For the estimators of the propensity scores, we suppose the following high-level assumption.

\medskip{}

\begin{assumption}\label{asm:estimated propensity score} (i) Define
\begin{align*}
\tau_{t}\left(\underline{d}_{t},H_t\right)\equiv\left\{ \frac{\left(\prod_{s=1}^{t}1\left\{D_{s}=d_{s}\right\}\right) \gamma_{t}Y_{t}}{\prod_{s=1}^{t}e_{s}\left(d_{s},H_{s}\right)}\right\} \mbox{ and }\ \hat{\tau}_{t}\left(\underline{d}_{t},H_t\right)\equiv\left\{ \frac{\left(\prod_{s=1}^{t}1\left\{D_{s}=d_{s}\right\}\right) \gamma_{t}Y_{t}}{\prod_{s=1}^{t}\hat{e}_{s}\left(d_{s},H_{s}\right)}\right\} ,
\end{align*}
where $\hat{e}_{t}\left(d_{t},H_{t}\right)$ is an estimated propensity
score taking a value in $\left(0,1\right)$. For a class of data generating
processes ${\cal P}_{e}$, there exists a sequence $\phi_{n}\rightarrow\infty$
such that 
\begin{align*}
\underset{P\in{\cal P}_{e}}{\sup}\sup_{t\in\left\{ 1,\ldots,T\right\} }\sum_{\text{\ensuremath{\underline{d}}}_{t}\in\left\{ 0,1\right\} ^{t}}E_{P^{n}}\left[\frac{1}{n}\sum_{i=1}^{n}\left|\hat{\tau}_{t}\left(\text{\ensuremath{\underline{d}}}_{t},H_{it}\right)-\tau_{t}\left(\text{\ensuremath{\underline{d}}}_{t},H_{it}\right)\right|\right]  =O\left(\phi_{n}^{-1}\right).
\end{align*}
(ii) Define 
\begin{align*}
\eta_{t}\left(\text{\ensuremath{\underline{d}}}_{t:T},H_{T}\right) \equiv & \sum_{s=t}^{T}\left\{ \frac{\left(\prod_{\ell=t}^{s}1\left\{ D_{\ell}=d_{\ell}\right\}\right) \gamma_{s}Y_{s}}{\prod_{\ell=t}^{s}e_{\ell}\left(d_{\ell},H_{\ell}\right)}\right\},\\
\hat{\eta}_{t}\left(\text{\ensuremath{\underline{d}}}_{t:T},H_{T}\right)\equiv &  \sum_{s=t}^{T}\left\{ \frac{ \left(\prod_{\ell=t}^{s}1\left\{ D_{\ell}=d_{\ell}\right\}\right) \gamma_{s}Y_{s}}{\prod_{\ell=t}^{s}\hat{e}_{\ell}\left(d_{\ell},H_{\ell}\right)}\right\} .
\end{align*}
For a class of data-generating processes $\widetilde{\MP}_{e}$, there exists
a sequence $\xi_{n}\rightarrow\infty$ such that 
\begin{align*}
\underset{P\in \widetilde{\MP}_{e}}{\sup}\sup_{t\in\left\{ 1,\ldots,T\right\} }\sum_{\text{\ensuremath{\underline{d}}}_{t:T}\in\left\{ 0,1\right\} ^{T-t+1}}E_{P^{n}}\left[\frac{1}{n}\sum_{i=1}^{n}\left|\hat{\eta}_{t}\left(\text{\ensuremath{\underline{d}}}_{t:T},H_{iT}\right)-\eta_{t}\left(\text{\ensuremath{\underline{d}}}_{t:T},H_{iT}\right)\right|\right]  = O\left(\xi_{n}^{-1}\right).
\end{align*}
\end{assumption}

\medskip{}

Note that $E_{P}\left[\tau_{t}\left(\text{\ensuremath{\underline{d}}}_{t},H_{t}\right)\right]=E_{P}\left[\gamma_{t}Y_{t}(\underline{d}_{t})\right]$ and $E_{P}\left[\eta_{t}\left(\text{\ensuremath{\underline{d}}}_{t:T},H_{T}\right)\right]=E_{P}\left[\sum_{s=t}^{T}\gamma_{s}Y_{s}(\underline{d}_{s})\right]$ hold under Assumption \ref{asm:sequential independence} and $n^{-1}\sum_{i=1}^{n}\hat{\tau}_{t}\left(\text{\ensuremath{\underline{d}}}_{t}\right)$ and $n^{-1}\sum_{i=1}^{n}\hat{\eta}_{t}\left(\text{\ensuremath{\underline{d}}}_{t:T}\right)$ are estimators of these, respectively. 
We do not explore lower-level conditions that satisfy Assumption \ref{asm:estimated propensity score}. When the propensity scores are consistently estimated with parametric estimators, they are estimated at rate $n^{-1/2}$. 

When the estimated propensity scores are used, the backward DEWM method
solves the following problem, recursively, from $t=T$ to $1$: 
\begin{align*}
&\hat{g}_{t,e}^{B}  \in\argmax_{g_{t}\in{\cal G}_{t}}\frac{1}{n}\sum_{i=1}^{n}\hat{q}_{t}\left(H_{it},g_{t};\hat{g}_{t+1,e}^{B},\ldots,\hat{g}_{T,e}^{B}\right)\\
\mbox{with}\ \ &\hat{q}_{t}\left(h_{t},g_{t};g_{t+1},\ldots,g_{T}\right)\equiv  \sum_{s=t}^{T}\left\{ \frac{\left(\prod_{\ell=t}^{s}1\left\{ D_{\ell} = g_{\ell}\left(H_{\ell}\right)\right\}\right) \cdot\gamma_{s}Y_{s}}{\prod_{\ell=t}^{s}\hat{e}_{\ell}\left(D_{\ell},H_{\ell}\right)}\right\} .
\end{align*}
We denote by $\hat{g}_{e}^{B} \equiv \left(\hat{g}_{1,e}^{B},\ldots,\hat{g}_{T,e}^{B}\right)$ the DTR obtained by this procedure. 

Similarly, the simultaneous DEWM method solves the following
problem: 
\begin{align*}
&\left(\hat{g}_{1,e}^{S},\dots,\hat{g}_{T,e}^{S}\right) \in\argmax_{g\in{\cal {\cal G}}}\sum_{t=1}^{T}\left[\frac{1}{n}\sum_{i=1}^{n}\hat{w}_{t}^{S}(Z_{i},\text{\ensuremath{\underline{g}}}_{t})\right]
\end{align*}
where $\hat{w}_{t}^{S}(Z_{i},\text{\ensuremath{\underline{g}}}_{t})  \equiv \left\{\left(\prod_{s=1}^{t}1\left\{ D_{is} = g_{s}\left(H_{is}\right)\right\}\right) \cdot\gamma_{t}Y_{it}\right\}/\left\{\prod_{s=1}^{t}\hat{e}_{s}\left(D_{is},H_{is}\right)\right\}$
uses the estimated propensity scores.
We denote the resulting DTR by $\hat{g}_{e}^{S} \equiv \left(\hat{g}_{1,e}^{S},\ldots,\hat{g}_{T,e}^{S}\right)$.

The following theorem shows the uniform convergence rate bounds
on the worst-case average welfare regret for the two estimation methods.

\medskip{}

\begin{theorem}\label{thm:estimated propensity score}

Suppose that Assumptions \ref{asm:sequential independence}, \ref{asm:bounded outcome}, and \ref{asm:overlap}
hold for any distribution $P\in{\cal P}\left(M, \kappa, \MG\right)$ and
Assumption \ref{asm:vc-class} holds for ${\cal {G}}$.\\
 (i) Suppose further that Assumption \ref{asm:estimated propensity score} (i)
holds for any distribution $P\in \MP_{e}$. For the Simultaneous
DEWM method, there holds
\begin{align*}
\sup_{P\in \MP_{e}\bigcap{\cal P}\left(M, \kappa, \MG\right)}E_{P^{n}}\left[W_{{\cal G}}^{\ast}-W\left(\hat{g}_{e}^{S}\right)\right] & \leq C\sum_{t=1}^{T}\left\{ \frac{\gamma_{t}M_{t}}{\prod_{s=1}^{t}\kappa_{s}}\sqrt{\frac{\sum_{s=1}^{t}v_{s}}{n}}\right\} +O\left(\phi_{n}^{-1}\right),
\end{align*}
where $C$ is the same universal constant as that introduced in Theorem
\ref{thm:upper bound}.\\
 (ii) Suppose that Assumption \ref{asm:estimated propensity score} (ii)
holds for any distribution $P\in \widetilde{\MP}_{e}$ and Assumption
\ref{asm:first-best} holds for a pair $\left(P,{\cal {G}}\right)$
for any $P\in{\cal P}\left(M, \kappa, \MG\right)$. Then, for the backward
DEWM method, there holds
\begin{align*}
\sup_{P\in \widetilde{\MP}_{e}\bigcap{\cal P}\left(M, \kappa, \MG\right)}E_{P^{n}}\left[W_{{\cal G}}^{\ast}-W\left(\hat{g}_{e}^{B}\right)\right] &\leq  C\sum_{t=1}^{T}\left\{ \frac{\gamma_{t}M_{t}}{\prod_{s=1}^{t}\kappa_{s}}\sqrt{\frac{\sum_{s=1}^{t}v_{s}}{n}}\right\} \\
 & +C\sum_{t=2}^{T}\frac{2^{t-2}}{\prod_{s=1}^{t-1}\kappa_{s}}\left(\sum_{s=t}^{T}\left\{ \frac{\gamma_{s}M_{s}}{\prod_{\ell=t}^{s}\kappa_{\ell}}\sqrt{\frac{\sum_{\ell=t}^{s}v_{\ell}}{n}}\right\} \right) \\ &+O\left(\xi_{n}^{-1}\right).
\end{align*}
\end{theorem}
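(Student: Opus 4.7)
The plan is to reduce the Simultaneous DEWM with estimated propensity scores to Theorem \ref{thm:upper bound}(i) by separating propensity-score estimation error from empirical-process error. Let $g^{\ast}_{\MG} \in \argmax_{g\in\MG} W(g)$, and denote the empirical welfare with true and estimated propensity scores by $W_n(g) \equiv \sum_{t=1}^T E_n[w_t^S(Z, \underline{g}_t)]$ and $\hat{W}_n(g) \equiv \sum_{t=1}^T E_n[\hat{w}_t^S(Z, \underline{g}_t)]$, respectively. Since $\hat{g}_e^S$ maximizes $\hat{W}_n$, a standard ERM inequality yields
\begin{align*}
W_{\MG}^{\ast} - W(\hat{g}_e^S) \leq 2 \sup_{g\in\MG} |W_n(g) - W(g)| + 2 \sup_{g\in\MG} |\hat{W}_n(g) - W_n(g)|.
\end{align*}
The expectation of the first supremum is bounded by $(C/2) \sum_{t=1}^T (\gamma_t M_t / \prod_{s=1}^t \kappa_s) \sqrt{\sum_{s=1}^t v_s / n}$ exactly as in the proof of Theorem \ref{thm:upper bound}(i) via Rademacher/VC arguments. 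For the second supremum, observe that $w_t^S(Z,\underline{g}_t) = \tau_t(\underline{g}_t(\underline{H}_t), H_t)$ (and analogously for $\hat{w}_t^S$), so the pointwise difference is uniformly dominated by the $g$-free quantity $\sum_{\underline{d}_t \in \{0,1\}^t} |\hat{\tau}_t(\underline{d}_t, H_t) - \tau_t(\underline{d}_t, H_t)|$. Summing over $t$ and invoking Assumption \ref{asm:estimated propensity score}(i) delivers the $O(\phi_n^{-1})$ term.

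\textbf{Part (ii).} For the Backward DEWM, I would replay the backward-induction argument from the proof of Theorem \ref{thm:upper bound}(ii), carrying the propensity-score estimation error at each stage. The key observation is the analogous identity
\begin{align*}
\hat{q}_t(Z, g_t; \underline{g}_{t+1:T}) = \hat{\eta}_t(\underline{g}_{t:T}(\underline{H}_{t:T}), H_T), \qquad q_t(Z, g_t; \underline{g}_{t+1:T}) = \eta_t(\underline{g}_{t:T}(\underline{H}_{t:T}), H_T),
\end{align*}
so for any $(g_t,\ldots,g_T)$ the pointwise difference is uniformly dominated by the $g$-free envelope $\sum_{\underline{d}_{t:T} \in \{0,1\}^{T-t+1}} |\hat{\eta}_t(\underline{d}_{t:T}, H_T) - \eta_t(\underline{d}_{t:T}, H_T)|$, whose expected empirical average is controlled at rate $O(\psi_n^{-1})$ by Assumption \ref{asm:estimated propensity score}(ii). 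Running the same backward-in-$t$ ERM decomposition as in Theorem \ref{thm:upper bound}(ii) with $\hat{q}_t$ in place of $q_t$ produces, at each stage, the empirical-process terms of Theorem \ref{thm:upper bound}(ii) plus the above propensity-score envelope; summing over $t$ yields the first two terms of the bound unchanged plus an additive $O(\psi_n^{-1})$.

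\textbf{Main obstacle.} The primary difficulty is in Part (ii): the backward recursion of Theorem \ref{thm:upper bound}(ii) produces the coefficient $2^{t-2}/\prod_{s=1}^{t-1}\kappa_s$ multiplying later-stage empirical-process errors, and one must verify that this amplification does not similarly compound the propensity-score error. This is precisely why Assumption \ref{asm:estimated propensity score}(ii) is phrased in terms of the cumulative future reward $\eta_t$ rather than the stage-specific $\tau_t$ used in Part (i): because the envelope $\sum_{\underline{d}_{t:T}} |\hat{\eta}_t - \eta_t|$ is policy-free and already captures the full future-stage contribution through stage $T$, at each backward step it can be absorbed into a single uniform bound rather than being recursively propagated via downstream estimators $(\hat{g}_{t+1,e}^B, \ldots, \hat{g}_{T,e}^B)$. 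Verifying that this absorption indeed produces only an additive $O(\psi_n^{-1})$ contribution at the outermost stage, while preserving the coefficient structure of the empirical-process terms from Theorem \ref{thm:upper bound}(ii), is the main bookkeeping task.
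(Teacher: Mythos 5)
Your proposal is correct and follows essentially the same route as the paper: part (i) uses the same ERM decomposition into an empirical-process term plus a policy-free propensity-estimation envelope controlled by Assumption \ref{asm:estimated propensity score}(i), and part (ii) replays the backward recursion of Theorem \ref{thm:upper bound}(ii) with the $\eta_t$-envelope from Assumption \ref{asm:estimated propensity score}(ii) controlling the gap between the estimated- and true-propensity empirical objectives at each stage. One small clarification on your ``main obstacle'' discussion: in the paper's bookkeeping the per-stage $O(\psi_n^{-1})$ terms \emph{are} multiplied by the recursion coefficients $2^{t-2}/\prod_{s=1}^{t-1}\kappa_s$, but since $T$ is fixed these constants are absorbed into the single additive $O(\psi_n^{-1})$, exactly as you conclude.
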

\begin{proof}
See Appendix \ref{appendix:proof}.
\end{proof}

\medskip{}

The theorem implies that the convergence rate of the worst-case average regret for each method depends on that of the propensity scores' estimators. If the propensity scores are correctly specified and parametrically estimated, both methods achieve the optimal $n^{-1/2}$-convergence rate of the worst-case average regret.

When the propensity scores are not consistently estimated, the IPW approaches do not consistently estimate the optimal DTRs. We hence propose a doubly robust approach; it is robust to misspecification of either propensity scores or models relevant to outcomes, and can achieve the optimal $n^{-1/2}$-rate of the welfare regret even when nuisance components are nonparametrically estimated. The following remark briefly discusses this approach while technical exposition and theoretical results are presented in Appendix \ref{appendix:DR_estimation}.

\medskip

\begin{remark}[Doubly Robust Approach]\label{rem:doubly_robust}
For the static treatment choice problem, \citet{Athey_Wager_2020} and \cite{Zhou_et_al_2023} show that using the augmented inverse probability weighting (AIPW) estimator of the welfare function can improve the convergence rate of the welfare regret relative to the e-hybrid EWM rule.\footnote{\citet{Nie_et_al_2019} extend this approach to the problem of optimal starting/stopping decision.} In the dynamic setting, we consider extension of the simultaneous maximization approach to doubly robust approach.\footnote{\cite{Sakaguchi_2024} propose a doubly robust method with backward induction for estimating optimal DTRs.} 
The approach presented in Appendix \ref{appendix:DR_estimation} combines the estimated propensity scores and estimators of Q-functions, $$Q_{t}^{\underline{g}_{(t+1):T}}(h_{t},d_{t}) \equiv E_{P}\left[\gamma_{t}Y_{t} + \sum_{s=t+1}^{T}\gamma_{s}\widetilde{Y}_s(\underline{D}_{t},\underline{g}_{(t+1):s})\middle|H_t = h_t,A_t = d_t\right],$$ to construct an AIPW estimator of the welfare function $W(g)$, and then maximizes it over $\MG$ to estimate the optimal DTR. The cross-fitting is also used.
This approach consistently estimates the optimal DTR if either a propensity score or Q-function for each stage is consistently estimated.  
The results in Appendix \ref{appendix:DR_estimation} show that the welfare regret $W_{\MG}^{\ast} - W(\hat{g}^{AIPW})$ converges to $0$ with the optimal rate of $n^{-1/2}$ under mild conditions on the convergence rates of the estimators of the propensity scores and Q-functions. 

This approach, however, faces an optimization challenge. Since $Q_{t}^{\underline{g}_{(t+1):T}}(h_{t},d_{t})$ is specific to a sequence of treatment rules $\underline{g}_{(t+1):T}$, when we apply this approach, we have to estimate $\left\{Q_{t}^{\underline{g}_{(t+1):T}}(h_{t},d_{t})\right\}_{t=1,\ldots,T}$ for every possible DTR $g$ in $\MG$. This is computationally challenging unless the class of DTRs is sufficiently small (e.g., a finite class of a moderate number of DTRs).\footnote{When covariates are exongenous and intermediate outcomes are not used, a doubly robust approach with lower computational cost can be constructed. Appendix \ref{appendix:DR_approach_2} gives details.} 
\end{remark}

\section{Simulation Study \label{sec:simulation}}

We conduct a simulation study to examine the finite sample performance of the proposed methods. We compare the performance of backward DEWM, simultaneous DEWM, and Q-learning.

We consider DGPs that consist of two stages of treatment assignment ($D_{1},D_{2}$), associated
potential outcomes $\left(Y_{1}\left(d_{1}\right),Y_{2}\left(d_{1},d_{2}\right)\right)_{\left\{ d_{1},d_{2}\right\} \in\left\{ 0,1\right\} ^{2}}$,
and a covariate $X_{1}$ observed at the first stage. The potential
outcomes are generated as 
\begin{align*}
Y_{1}\left(d_{1}\right)&=  \phi_{01}+\phi_{11}X_{1}+\left(\psi_{01}+\psi_{11}X_{1}\right)d_{1}+U_{1},\\
Y_{2}\left(d_{1},d_{2}\right) &=  \phi_{02}+\phi_{12}Y_{1}\left(d_{1}\right)+\left(\psi_{02}+\psi_{12}d_{1}+\sum_{j=1}^{3}\psi_{j+1,2}\left(Y_{1}\left(d_{1}\right)\right)^{j}\right)d_{2}+U_{2}
\end{align*}
for $(d_{1},d_{2}) \in \{0,1\}^{2}$. We consider three DGPs labeled DGPs 1-3. In all the DGPs, $X_{1}$,
$U_{1}$, and $U_{2}$ are independently drawn from $N\left(0,1\right)$;
$D_{1}$ and $D_{2}$ are independently drawn from $Ber\left(1/2\right)$;
and $\left(\phi_{01},\phi_{11},\psi_{01},\psi_{11}\right)=\left(0.5,-1.0,1.0,1.5\right)$
and $\left(\phi_{02},\phi_{12},\psi_{02},\psi_{12}\right)=\left(0.5,0.5,0.5,0.5\right)$.
Regarding the other parameters, we set $\left(\psi_{22},\psi_{32},\psi_{42}\right)=\left(0,0,0\right)$
in DGP1, $\left(\psi_{22},\psi_{32},\psi_{42}\right)=\left(1,0,0\right)$
in DGP2, and $\left(\psi_{22},\psi_{32},\psi_{42}\right)=\left(0.3,0.3,-0.4\right)$ in DGP3. We set the target welfare to maximize as $W(g_1,g_2)=E_{P}\left[Y_{2}(g_{1},g_{2})\right]$. The treatment effect of $D_{2}$ does not depend on the past outcome in DGP1, but it does in DGPs 2 and 3.

For the backward and simultaneous DEWM methods, we use a class of
DTRs $\mathcal{G}=\mathcal{G}_{1}\times\mathcal{G}_{2}$ that consists
of the following classes of linear treatment rules: 
\begin{align*}
{\cal G}_{1}& =  \left\{ 1\left\{ \left(1,X_{1}\right)^{\prime}\boldsymbol{\beta}_{1}\geq0\right\} :\boldsymbol{\beta}_{1}=\left(\beta_{01},\beta_{11}\right)^{\prime}\in\mathbb{R}^{2}\right\} ,\\
{\cal G}_{2}& =  \left\{ 1\left\{ \left(1,D_{1},Y_{1}\right)^{\prime}\boldsymbol{\beta}_{2}\geq0\right\} :\boldsymbol{\beta}_{2}=\left(\beta_{02},\beta_{12},\beta_{22}\right)^{\prime}\in\mathbb{R}^{3}\right\} .
\end{align*}
$\MG_{2}$ contains the first-best rule under DGPs 1 and 2 but not under DGP3. Thus, the backward DEWM method can consistently estimate the optimal DTR under DGPs 1 and 2 but cannot under DGP3. We solve the optimization problems for each DEWM method through MILPs as discussed in Remark \ref{rem:MILP}.

For Q-learning, we assume that the conditional outcomes are specified as 
\begin{align*}
E\left[Y_{1}\mid H_{1},D_{1};\boldsymbol{\alpha}_{1},\boldsymbol{\gamma}_{1}\right]&=  \alpha_{01}+\alpha_{11}X_{1}+\left(\gamma_{01}+\gamma_{11}X_{1}\right)D_{1},\\
E\left[Y_{2}\mid H_{2},D_{2};\boldsymbol{\alpha}_{2},\boldsymbol{\gamma}_{2}\right]&=  \alpha_{02}+\alpha_{12}Y_{1}+\left(\gamma_{02}+\gamma_{12}D_{1}+\gamma_{22}Y_{1}\right)D_{2},
\end{align*}
where $\boldsymbol{\alpha}_{t}^{\prime}=\left(\alpha_{0t},\alpha_{1t}\right)^{\prime}$
for each $t=1,2$, $\boldsymbol{\gamma}_{1}^{\prime}=\left(\gamma_{01},\gamma_{11}\right)^{\prime}$,
and $\boldsymbol{\gamma}_{2}^{\prime}=\left(\gamma_{02},\gamma_{12},\gamma_{22}\right)^{\prime}$.
This specification is correct under DGPs 1 and 2 but is not under
DGP3.

Table \ref{table:Monte Carlo Simulation Results} presents the results of 500 simulations with sample sizes $n=$
200, 500, and 800. The table shows the mean and median welfare achieved by each estimated DTR calculated with 3,000 observations randomly drawn from the same DGP. The results show that Q-learning performs better than the backward and simultaneous DEWM methods in DGPs 1 and 2 in terms of the population mean welfare. However, both the backward and simultaneous DEWM methods exhibit superior performance to Q-learning in DGP3, where the outcome model used by Q-learning is misspecified. In DGPs 1 and 2, the backward and simultaneous DEWM methods demonstrate similar welfare performance. However, in DGP3, the simultaneous DEWM method achieves higher welfare than the backward DEWM method. 
Table \ref{table:Monte Carlo Simulation Results} also presents the average CPU time to calculate DTR per simulation iteration.\footnote{We use Julia version 1.8.1 with Gurobi Optimizer version 9.5.2. The hardware is 12th Gen Intel(R) Core(TM) i7-12700 2.10 GHz.} The simultaneous DEWM takes the longest time but remains feasible at these scales of simulation. Appendix \ref{appendix:simulation_results} provides additional simulation results for DGPs where $D_1$ and $D_2$ are not independent of $U_1$ and $U_2$ (i.e., the sequential independence assumption (Assumption \ref{asm:sequential independence}) does not hold).

\begin{table}[hh]
\centering \caption{Monte Carlo simulation results}
\label{table:Monte Carlo Simulation Results}
\scalebox{0.8}{ 
\begin{tabular}{cccccccccccccccc}
\hline 
 &  &  & n=200  &  &  &  & & n=500  &  &  &  & & n=800  &  & \tabularnewline
\hline 
 & DGP  & Mean  & Med  & SD & Time  &  & Mean  & Med  & SD & Time  &  & Mean  & Med  & SD & Time  \tabularnewline
\hline 
Q-learning  & 1  & 2.27  & 2.27  & 0.04 & 0.01 &  & 2.28  & 2.28  & 0.07 & 0.01 &  & 2.27  & 2.27  & 0.07  & 0.01 \tabularnewline
B-DEWM      & 1  & 2.05  & 2.15  & 0.27 & 0.75 &  & 2.15  & 2.21  & 0.21 & 5.35 &  & 2.15  & 2.21  & 0.23  & 13.69 \tabularnewline
S-DEWM      & 1  & 2.01  & 2.12  & 0.29 & 9.13 &  & 2.15  & 2.21  & 0.25 & 61.95 &  & 2.18  & 2.22  & 0.18  & 209.82 \tabularnewline
\hdashline 
Q-learning  & 2  & 3.97  & 3.97  & 0.07 & 0.01 &  & 3.98  & 3.97  & 0.12 & 0.01 &  & 3.98  & 3.98  & 0.12  & 0.01 \tabularnewline
B-DEWM      & 2  & 3.56  & 3.72  & 0.50 & 0.60 &  & 3.71  & 3.78  & 0.38 & 3.17 &  & 3.69  & 3.79  & 0.43  & 9.31 \tabularnewline
S-DEWM      & 2  & 3.50  & 3.67  & 0.53 & 6.11 &  & 3.71  & 3.80  & 0.37 & 38.54 &  & 3.75  & 3.82  & 0.39  & 97.29 \tabularnewline
\hdashline 
Q-learning  & 3  & 1.64  & 1.63  & 0.11 & 0.01 &  & 1.63  & 1.61  & 0.15 & 0.01 &  & 1.63  & 1.61  & 0.15  & 0.01 \tabularnewline
B-DEWM      & 3  & 1.77  & 1.80  & 0.13 & 0.51 &  & 1.79  & 1.80  & 0.13 & 4.68 &  & 1.79  & 1.81  & 0.09  & 14.32 \tabularnewline
S-DEWM      & 3  & 1.86  & 1.89  & 0.15 & 6.32 &  & 1.89  & 1.90  & 0.17 & 42.26 &  & 1.91  & 1.92  & 0.13  & 143.70 \tabularnewline
\hline 
\end{tabular}
}
\begin{tablenotes} \footnotesize
\item Notes: Mean and Med represent the mean
and median of the population mean welfares achieved by the estimated
DTRs across the simulations, respectively. SD is the standard deviation of the population mean welfares across the simulations.
The population mean welfare is calculated using 3,000 observations
randomly drawn from the corresponding DGP. B-DEWM and S-DEWM represent the Backward and Simultaneous DEWM methods, respectively. The columns of ``Time'' show average CPU time to estimate DTR per iteration for each method, DGP, and sample size.
\end{tablenotes} 
\end{table}

\section{Empirical Application \label{sec:empirical application}}

We apply the proposed methods to data from Project STAR  \citep[e.g.,][]{krueger_1999, Gerber_et_al_2001, Schanzenbach_2006, Chetty_et_al_2011}. In this experimental project, out of 1,346 kindergarten students not belonging to small classes, 672 students were randomly allocated to regular-size classes with a full-time teacher aide, and the others were allocated to regular-size classes without a teacher aide. Upon their progression to grade 1, the enrolled students were randomly shuffled into regular class-size classes with or without a teacher aide and remained in the allocated classes until the end of grade 3.

We study optimal allocation of students to two types of classes (regular-size classes with or without a teacher aide) in grades $K$ and $1$, based on their socioeconomic information and intermediate academic performance.\footnote{We focus on allocating students to regular-size classes with a teacher aide, rather than small-size classes, because the allocation of students to regular-size classes with or without a teacher aide in the experiment matches the sequential randomization design; however, the allocation to small-size classes in the experiment does not.} 
We aim to maximize the population average of scores from mathematics test that students took at the end of grade 1.\footnote{We focus on the test score at the end of grade 1 rather than at the end of grade 3 because we found attending a class with a teacher aide in kindergarten has little effect on the test score at the end of grade 3 even when treatment effect heterogeneity is considered.}
We set the first and second stages ($t=1$ and $2$) to grades $K$ and $1$, respectively. The treatment variable $D_{t}$, for $t=1,2$, takes the value one if the student is assigned to a class with a teacher aide at stage $t$ and zero otherwise. The potential intermediate outcome $Y_{1}(d_1)$ and final outcome $Y_{2}(d_1,d_1)$ represent the mathematics test scores at the end of grades K and 1 for treatments $d_1$ and $d_2$.

Because it is plausible that assignment to a class with a teacher aide outperforms assignment to a class without one, we consider the cost of having a teacher aide.
We consider a fictitious cost for teacher aide. 
This cost for each grade is set as $c \equiv E[Y_2(1,1) - Y_2(0,0) ]/2$, half of the expected welfare gain of assigning every student to teacher-aide classes in both grades.  We use the estimate of $E[Y_2(1,1) - Y_2(0,0) ]/2$, which is $13.57$, for the cost $c$. Letting $Y_{2}^{c}(d_1,d_2) := Y_2(d_1,d_2)  - c(d_1 + d_2)$ represent the individual welfare contribution, we aim to maximize welfare defined as
$$W(g_1,g_2) = E\left[
    \sum_{(d_1,d_2) \in \{0,1\}^2} Y_{2}^{c}(d_1,d_2)\cdot 1\left\{g_1(H_1)=d_1,g_2(H_2(d_1))=d_2\right\}\right],$$ which represents the average of the total test score at the end of grade 1 with subtraction of the cost.
    Note that in this setting, uniformly assigning every student to a class with a teacher aide in both grades results in zero welfare gain.

The socioeconomic information we use for treatment choice are the qualification for free or reduced-price school lunches and school location (rural or non-rural).\footnote{While we have access to student information such as sex and race, using such information in treatment choice is discriminatory and prohibited.}
A binary variable $X_{Lunch,t}$ takes $1$ if the student is eligible for free or reduced-price school lunch at stage $t$ and $0$ otherwise. A binary variable $X_{Rural,t}$ takes $1$ if the student attends a school located in a rural area at stage $t$ and $0$ otherwise.

We employ a set of class-allocation policies represented by treatment rules $\mathcal{G}=\mathcal{G}_{1}\times\mathcal{G}_{2}$, where $\mathcal{G}_{1}$ and $\mathcal{G}_{2}$ constitute a class of linear treatment rules:\footnote{It is testable whether $\MG_2$ contains the first-best rule or not. For example, we can estimate the first-best rule for the second stage as $\hat{g}_{2}^{\ast,FB}(h_{2}) = 1\{\hat{\tau}_{2}(h_2) \geq 2\}$ with $\hat{\tau}_{2}(h_2)$ being a (nonparametric) estimator of the conditional average treatment effect $\tau_{2}(h_2) = E[Y_{2}(D_1,1) - Y_2(D_1,0) | H_2 = h_2]$. We can then check whether $\MG_2$ contains the first-best rule by examining if the optimal policy in $\MG_2$ achieves the same expected outcome value as $\hat{g}_{2}^{\ast,FB}(h_{2})$.}
\begin{align*}
{\cal G}_{1}& =  \left\{ 1\left\{ \beta_{1}x_{Lunch,1} + \beta_{2}x_{Rural,1} \geq c_{1}\right\} :\beta_{1} \geq 0,\ \left(\beta_{2},c_{1}\right)^{\prime}\in\mathbb{R}^{2}  \right\} ,\\
{\cal G}_{2} &=  \left\{ 1\left\{ \gamma_{1}x_{Lunch,2}+ \gamma_{2}x_{Rural,2} +\gamma_{3}\left(1-d_{1}\right)y_{1}+\gamma_{4}d_{1}y_{1}\geq c_{2}\right\} :\gamma_{1}\geq 0,\ \left(\gamma_{2},\gamma_{3},\gamma_{4},c_{2}\right)^{\prime}\in\mathbb{R}^{4}\right\} .
\end{align*}
In the formulations of $\MG_{1}$ and $\MG_{2}$, the coefficients of $x_{Lunch,1}$ and $x_{Lunch,2}$ are constrained to be non-negative. This ensures that students eligible for free or reduced-price school lunches are not less likely to be allocated to a class with a teacher aide, given that the other information is fixed.
The interaction terms $\left(1-d_{1}\right)y_{1}$ and $d_{1}y_{1}$ in $\mathcal{G}_{2}$ enable the eligibility score to evaluate the intermediate outcome differently based on class allocation at kindergarten.
We solve the optimization problems for the backward and simultaneous DEWM through MILPs as discussed in Remark \ref{rem:MILP}. 

For a DTR $g\in\mathcal{G}$, we define the welfare gain of $g$ as $W\left(g\right)-E[Y_{2}^{c}(0,0)]$, the welfare increase achieved by allocating students according to the DTR $g$ rather than allocating every student to regular classes without teacher aides at all stages.\footnote{Two factors enhance the students' academic achievement in classroom allocation: optimal matching between each student and classroom type (with or without a teacher aide) and peer effects among students. The optimal DTR considered here exploits the former but not the latter, as it does not utilize peer effects among students to determine classroom allocation.}
Applying the backward and simultaneous DEWM methods, we estimate the optimal DTR over $\mathcal{G}$ and its welfare gain as well as the treatment ratio at each stage. 
To avoid overfitting biases, we adopt two-fold random sample splitting with a fixed seed: one third of the sample is used as the training set to estimate the optimal DTRs, and the remaining is used as the test set to estimate the welfare gains and treatment ratios.

The DTR estimated by the backward DEWM method is $\hat{g}^{B}=\left(\hat{g}_{1}^{B},\hat{g}_{2}^{B}\right)$ where $\hat{g}_{1}^{B}(h_{1})=  1$ and $\hat{g}_{2}^{B}(h_{2})=  1\left\{-0.104\left(1-d_{1}\right)y_{1}+0.382d_{1}y_{1}\geq - 0.449\right\}$.
The DTR estimated by the simultaneous DEWM method is $\hat{g}^{S}=\left(\hat{g}_{1}^{S},\hat{g}_{2}^{S}\right)$ where $\hat{g}_{1}^{S}(h_{1}) =  1\left\{x_{Rural,1}= 1\right\}$ and $\hat{g}_{2}^{S}(h_{2}) = 1\left\{ 0.985x_{Rural,2} + 0.148d_{1}y_{1}\geq 0\right\}$. 
$\hat{g}_{1}^B$ assigns every student to a class with a teacher aide in grade K, while $\hat{g}_{1}^S$ assigns only students in rural areas to classes with teacher aides in grade K. Under both treatment rules $\hat{g}_{2}^{B}$ and $\hat{g}_{2}^{S}$ for grade 1, a student who attends a class with a teacher aide and attains a high test score in grade K is more likely to be assigned to a class with a teacher aide in grade 1.

Table \ref{table:Estimated welfare gain} reports the estimated welfare gains and shares of the population to be treated at each stage for the estimated DTRs $\hat{g}^{B}$ and $\hat{g}^{S}$ and three uniform DTRs $\left(g_{1},g_{2}\right)=\left(1,0\right),\left(0,1\right),\left(1,1\right)$.
\footnote{With some abuse of notation, we denote by $(g_1,g_2)=(d_1,d_2)$ the uniform DTR that allocates every student to class types $d_1$ and $d_2$ in stages 1 and 2, respectively.} For example, the DTR $\left(g_{1},g_{2}\right)=\left(1,0\right)$ assigns every student to a class with a teacher aide in kindergarten but assigns none in grade 1. The results indicate that both the backward and simultaneous DEWM methods lead to higher welfare gains than all the uniform DTRs.  

\begin{table}[h]
\begin{centering}
\caption{Estimated welfare gains}
\label{table:Estimated welfare gain}
\begin{tabular}{cccc}
\hline 
  & \multicolumn{2}{c}{Share of population to be treated} &  \\ \cline{2-3}
Dynamic treatment regime & 1st stage  & 2nd stage  & Estimated welfare gain \\
\hline 
$\left(g_{1},g_{2}\right)=\left(1,0\right)$  & $1$  & $0$  & $1.93$\\

$\left(g_{1},g_{2}\right)=\left(0,1\right)$  & $0$  & $1$  & $3.58$\\

$\left(g_{1},g_{2}\right)=\left(1,1\right)$  & $1$  & $1$  & $0.00$\\

$\left(\hat{g}_{1}^{B},\hat{g}_{2}^{B}\right)$  & $1.0$  & $0.50$  & $16.68$\\

$\left(\hat{g}_{1}^{S},\hat{g}_{2}^{S}\right)$  & $0.67$  & $0.73$  & $14.91$\\
\hline 
\end{tabular}
\par\end{centering}
\centering{}
\begin{tablenotes} 
\footnotesize 
\item Notes: The SD of $Y_{2}$ in the sample is 83.27. We use the two-fold sample splitting with a fixed seed. The training sample is used to estimate the DTRs $\hat{g}^{B}$ and $\hat{g}^{S}$. The test sample is used to estimate shares of population to be treated and welfare gains of $\hat{g}^{B}$ and $\hat{g}^{S}$.
\end{tablenotes} 
\end{table}

Next, we consider the decision problem of when each student should begin attending a class with a teacher aide. To this aim, we impose a constraint $g_{2}\left(h_{2}\right)\geq d_{1}$ for all $h_{2}\in\mathcal{H}_{2}$ on $\MG_2$. Under this constraint, the DTR estimated by the backward DEWM method is $\hat{g}^{B}=\left(\hat{g}_{1}^{B},\hat{g}_{2}^{B}\right)$ with $\hat{g}_{1}^{B}(h_{1})  = 0$ and $\hat{g}_{2}^{B}(h_{2}) =  1\left\{ -0.104\left(1-d_{1}\right)y_{1} + 0.382 d_{1}y_{1}\geq -0.449\right\}$;  
the DTR estimated by the simultaneous DEWM method is $\hat{g}^{S}=\left(\hat{g}_{1}^{S},\hat{g}_{2}^{S}\right)$ with $\hat{g}_{1}^{S}(h_{1})  =  1\left\{ x_{Rural,1}= 1\right\}$ and $\hat{g}_{2}^{S}(h_{2})  =  1$.

Table \ref{table:Estimated welfare gain for start time decision problem} reports the estimated welfare gains and shares of population to be treated by $\hat{g}^{B}$ and $\hat{g}^{S}$ and two uniform DTRs $(g_{1},g_{2})=(0,1),(1,1)$, which satisfy the monotonicity constraint. 
Both the Simultaneous and backward DEWM methods lead to higher welfare gains than the uniform DTRs. The simultaneous DEWM method leads to a slightly higher welfare gain than the backward DEWM method.

\begin{table}[h]
\begin{centering}
\caption{Estimated welfare gains for the start-time decision problem}
\label{table:Estimated welfare gain for start time decision problem}
\begin{tabular}{cccc}
\hline 
 & \multicolumn{2}{c}{Share of population to be treated} & \\ \cline{2-3}
Dynamic treatment regime & 1st stage  & 2nd stage  & Estimated welfare gain \\
\hline 
$\left(g_{1},g_{2}\right)=\left(0,1\right)$  & $0$  & $1$  & $3.58$\\

$\left(g_{1},g_{2}\right)=\left(1,1\right)$  & $1$  & $1$  & $0.00$\\

$(\hat{g}_{1}^{B},\hat{g}_{2}^{B})$  & $0.0$  & $0.50$  & $9.05$\\

$(\hat{g}_{1}^{S},\hat{g}_{2}^{S})$  & $0.67$ & $1.0$ & $10.43$ \\
\hline 
\end{tabular}
\par\end{centering}
\centering{}
\begin{tablenotes} 
\footnotesize  
\item Notes: The SD of $Y_{2}$ in the sample is 83.27. We use the two-fold sample splitting with a fixed seed. The training sample is used to estimate the DTRs $\hat{g}^{B}$ and $\hat{g}^{S}$. The test sample is used to estimate shares of population to be treated and welfare gains of $\hat{g}^{B}$ and $\hat{g}^{S}$. The welfare gains of the uniform policies are estimated with the whole sample.
\end{tablenotes} 
\end{table}


\section{Conclusion \label{sec:conclusion}}

This study proposes empirical methods to estimate the optimal DTR over a pre-specified class of feasible DTRs based on the EWM approach. We proposed two estimation methods, the backward DEWM and simultaneous DEWM methods, which estimate the optimal DTR through backward induction and simultaneous maximization, respectively. The former is computationally efficient, but it may not consistently estimate the optimal DTR when the class of feasible DTRs does not include the first-best rule at all stages except for the first stage. Conversely, the latter method can consistently estimate the optimal DTR irrespective of the feasibility of the first-best rule, though it is computationally less efficient. These methods can accommodate exogenous constraints on the class of DTRs and specify different types of dynamic treatment choice problems. 
We show that each method can achieve the optimal $n^{-1/2}$ rate of convergence of the regret in the experimental data setting. We also modified the simultaneous DEWM to accommodate intertemporal budget/capacity constraints.


\bibliographystyle{ecta}
\bibliography{ref_DTR,ref_DTR_appendix,ref_surrogate_loss}

\newpage

\appendix
\part*{Appendix}


\section{Proof of Theorem \ref{thm:upper bound} \label{appendix:proofs-1}}

This appendix presents the proof of Theorem \ref{thm:upper bound} along with some auxiliary lemmas.
Let $\mbox{SG}\left(\mathcal{F}\right)\equiv \{\mbox{SG}\left(f\right): f\in \mathcal{F} \}$ be a collection of subgraphs over a class of functions $\mathcal{F}$, where the subgraph of a real-valued function
$f:\mathcal{Z}\rightarrow\mathbb{R}$ is defined as the set $\mbox{SG}\left(f\right)\equiv  \left\{ \left(z,t\right)\in\MZ \times\mathbb{R}:t\leq f\left(z\right)\right\}$.
We consider the VC-dimension of $\mbox{SG}(\MF)$ as a complexity measure of $\mathcal{F}$, where its definition is given in Appendix \ref{appendix:proof}.

The following lemma establishes the link between the VC-dimension of a class of feasible DTRs and the VC-dimension of a class of subgraphs of functions on ${\cal Z}$.

\begin{lemma}\label{lem:vc_subclass} 
Suppose that Assumption \ref{asm:vc-class} holds. Let $r:\MZ \rightarrow \Real$
be any function.
For any integers $s$ and $t$ with $1\leq s\leq t\leq T$, a class of functions
from ${\cal Z}$ to $\mathbb{R}$ 
\begin{align*}
\MF_{s:t}\equiv  \left\{ f\left(z\right)=1\left\{ g_{s}\left(h_{s}\right)=d_{s},\ldots,g_{t}\left(h_{t}\right)=d_{t}\right\} \cdot r(z):\left(g_{s},\ldots,g_{t}\right)\in\MG_{s}\times\cdots\times\mathcal{G}_{t}\right\} 
\end{align*}
is a VC-subgraph class of functions with $VC(\mbox{SG}(\MF_{s:t})) \leq \sum_{j=s}^{t}v_{j}$. 
\end{lemma}

\begin{proof}
    The proof is presented in Appendix \ref{appendix:proof}.
\end{proof}

The next lemma, which corresponds to Lemma A.4 of \cite{Kitagawa_Tetenov_2018a}, gives a uniform upper bound for the mean of a supremum of centered empirical processes indexed by a VC-subgraph class of functions. This is a fundamental result in the literature on empirical process theory and its proof can be found, for example, in \cite{van_der_Vaart_Wellner_1996} and \cite{Kitagawa_Tetenov_2018a}. 

\medskip

\begin{lemma} \label{lem:KT} (Lemma A.4 in \cite{Kitagawa_Tetenov_2018a}) Let ${\cal F}$ be a class of uniformly bounded functions on $\MZ$, that is, there exists $\bar{F}<\infty$ such that
$\left\Vert f\right\Vert _{\infty}\leq\bar{F}$ for all $f\in{\cal F}$.
Assume that ${\cal F}$ is a VC-subgraph of functions with VC-dimension $v<\infty$.
Then there is a universal constant $C$ such that 
\begin{align}
E_{P^{n}}\left[\sup_{f\in{\cal F}}\left|E_{n}\left(f\right)-E_{P}\left(f\right)\right|\right] & \leq C\bar{F}\sqrt{\frac{v}{n}}\nonumber
\end{align}
holds for all $n\geq1$. \end{lemma}

\medskip

Before proceeding to the proofs of the main theorems, we define 
\begin{align}
\tilde{Q}_{t}\left(g_{t},\ldots,g_{T}\right) & \equiv E_{P}\left[q_{t}\left(Z,g_{t},\ldots,g_{T}\right)\right] \notag \\
 & =E_{P}\left[\sum_{s=t}^{T}\left\{ \frac{(\prod_{\ell=t}^{s}1\left\{ D_{\ell}=g_{\ell}\left(H_{\ell}\right)\right\}) \gamma_{s}Y_{s}}{\prod_{\ell=t}^{s}e_{\ell}\left(D_{\ell},H_{\ell}\right)}\right\} \right],
 \label{eq:Q-function with propensity scofre}\\
 \tilde{Q}_{nt}\left(g_{t},\ldots,g_{T}\right) & \equiv E_{n}\left[q_{t}\left(Z,g_{t},\ldots,g_{T}\right)\right]\notag\\
 & =E_{n}\left[\sum_{s=t}^{T}\left\{ \frac{(\prod_{\ell =t}^{s}1\left\{ D_{\ell} = g_{\ell}\left(H_{\ell}\right)\right\}) \gamma_{s}Y_{s}}{\prod_{\ell =t}^{s}e_{\ell}\left(D_{\ell},H_{\ell}\right)}\right\} \right].\notag
\end{align}
We further define 
\begin{align*}
 \Delta\tilde{Q}_{t}&\equiv\tilde{Q}_{t}\left(g_{t}^{\ast},\ldots,g_{T}^{\ast}\right)-\tilde{Q}_{t}\left(\hat{g}_{t}^{B},\ldots,\hat{g}_{T}^{B}\right), \\   
 \Delta \tilde{Q}_{t}^{\dagger}&\equiv\tilde{Q}_{t}\left(g_{t}^{\ast},\hat{g}_{t+1}^{B},\ldots,\hat{g}_{T}^{B}\right)-\tilde{Q}_{t}\left(\hat{g}_{t}^{B},\ldots,\hat{g}_{T}^{B}\right).
\end{align*}

The following lemma will be used in the proof of Theorem \ref{thm:upper bound} (ii) for the backward DEWM method.

\medskip

\begin{lemma}\label{lem:backward DEWM} Suppose that Assumptions
\ref{asm:sequential independence}, \ref{asm:overlap}, and \ref{asm:first-best} hold for a pair $(P,\MG)$. Then the following hold: (i) for any $t=1,\ldots,T-1$ and $s=t+1,\ldots,T$, 
\begin{align*}
\tilde{Q}_{t}\left(g_{t}^{\ast},\ldots,g_{T}^{\ast}\right)-\tilde{Q}_{t}\left(g_{t}^{\ast},\ldots,g_{s}^{\ast},\hat{g}_{s+1}^{B},\ldots,\hat{g}_{T}^{B}\right) & \leq\frac{1}{\prod_{\ell =t}^{s}\kappa_{\ell}}\Delta\tilde{Q}_{s+1};
\end{align*}
(ii) 
\begin{align*}
\Delta\tilde{Q}_{1} & \leq \Delta \tilde{Q}_{1}^{\dagger}+\sum_{s=1}^{T-1}\frac{2^{s-1}}{\prod_{t=1}^{s}\kappa_{t}}\Delta \tilde{Q}_{s+1}^{\dagger}.
\end{align*}

\end{lemma}

\begin{proof}
 
\noindent (i) Let $\tilde{Q}_{t}\left(g_{t},\ldots,g_{T};h_t\right) \equiv E_{P}\left[q_{t}\left(Z,g_{t},\ldots,g_{T}\right)| H_t = h_t\right]$.
For any integers $s$ and $t$ such that $1 \leq t <s \leq T$, it follows that
\begin{align*}
 & \tilde{Q}_{t}\left(g_{t}^{\ast},\ldots,g_{T}^{\ast}\right)-\tilde{Q}_{t}\left(g_{t}^{\ast},\ldots,g_{s}^{\ast},\hat{g}_{s+1}^{B},\ldots,\hat{g}_{T}^{B}\right)\notag \\
&=  E_{P}\left[\frac{\prod_{\ell =t}^{s}1\left\{ D_{\ell}=g_{\ell}^{\ast}\left(H_{\ell}\right)\right\} }{\prod_{\ell =t}^{s}e_{\ell}\left(D_{\ell},H_{\ell}\right)}\left(\tilde{Q}_{s+1}\left(g_{s+1}^{\ast},\ldots,g_{T}^{\ast};H_{s+1}\right)-\tilde{Q}_{s+1}\left(\hat{g}_{s+1}^{B},\ldots,\hat{g}_{T}^{B};H_{s+1}\right) \right)\right] \\
&\leq  \frac{1}{\prod_{\ell =t}^{s}\kappa_{\ell}}E_{P}\left[\tilde{Q}_{s+1}\left(g_{s+1}^{\ast},\ldots,g_{T}^{\ast};H_{s+1}\right)-\tilde{Q}_{s+1}\left(\hat{g}_{s+1}^{B},\ldots,\hat{g}_{T}^{B};H_{s+1}\right)\right]\\
&=  \frac{1}{\prod_{\ell =t}^{s}\kappa_{\ell}}\Delta\tilde{Q}_{s+1},
\end{align*}
where the first equality follows from Assumption \ref{asm:sequential independence} and the inequality follows from Assumption \ref{asm:overlap} and because
$\tilde{Q}_{s+1}\left(g_{s+1}^{\ast},\ldots,g_{T}^{\ast};H_{s+1}\right)-\tilde{Q}_{s+1}\left(\hat{g}_{s+1}^{B},\ldots,\hat{g}_{T}^{B};H_{s+1}\right) \geq0$
holds a.s. under Assumptions \ref{asm:sequential independence} and \ref{asm:first-best}. 

\medskip
\noindent
(ii) Note that 
\begin{align*}
\Delta\tilde{Q}_{T} & =\tilde{Q}_{T}\left(g_{T}^{\ast}\right)-\tilde{Q}_{T}\left(\hat{g}_{T}^{B}\right)=\Delta \tilde{Q}_{T}^{\dagger}.
\end{align*}
Then, for $t=T-1$, we have
\begin{align*}
\Delta\tilde{Q}_{T-1} & =\tilde{Q}_{T-1}\left(g_{T-1}^{\ast},g_{T}^{\ast}\right)-\tilde{Q}_{T-1}\left(\hat{g}_{T-1}^{B},\hat{g}_{T}^{B}\right)\\
 & =\tilde{Q}_{T-1}\left(g_{T-1}^{\ast},g_{T}^{\ast}\right)-\tilde{Q}_{T-1}\left(g_{T-1}^{\ast},\hat{g}_{T}^{B}\right)+\tilde{Q}_{T-1}\left(g_{T-1}^{\ast},\hat{g}_{T}^{B}\right)-\tilde{Q}_{T-1}\left(\hat{g}_{T-1}^{B},\hat{g}_{T}^{B}\right)\\
 & \leq\frac{1}{\kappa_{T-1}}\Delta \tilde{Q}_{T}^{\dagger}+\Delta \tilde{Q}_{T-1}^{\dagger},
\end{align*}
where the inequality follows from Lemma \ref{lem:backward DEWM} (i).

 Generally, for any $k=1,\ldots,T-1$, it follows that 
\begin{align*}
\Delta\tilde{Q}_{T-k} & =\tilde{Q}_{T-k}\left(g_{T-k}^{\ast},\ldots,g_{T}^{\ast}\right)-\tilde{Q}_{T-k}\left(\hat{g}_{T-k}^{B},\ldots,\hat{g}_{T}^{B}\right)\\
 & =\sum_{s=T-k}^{T}\left[\tilde{Q}_{T-k}\left(g_{T-k}^{\ast},\ldots,g_{s}^{\ast},\hat{g}_{s+1}^{B},\ldots,\hat{g}_{T}^{B}\right)-\tilde{Q}_{T-k}\left(g_{T-k}^{\ast},\ldots,g_{s-1}^{\ast},\hat{g}_{s}^{B},\ldots,\hat{g}_{T}^{B}\right)\right]\\
 & \leq\sum_{s=T-k}^{T}\left[\tilde{Q}_{T-k}\left(g_{T-k}^{\ast},\ldots,g_{T}^{\ast}\right)-\tilde{Q}_{T-k}\left(g_{T-k}^{\ast},\ldots,g_{s-1}^{\ast},\hat{g}_{s}^{B},\ldots,\hat{g}_{T}^{B}\right)\right]\\
 & =\sum_{s=T-k+1}^{T}\left[\tilde{Q}_{T-k}\left(g_{T-k}^{\ast},\ldots,g_{T}^{\ast}\right)-\tilde{Q}_{T-k}\left(g_{T-k}^{\ast},\ldots,g_{s-1}^{\ast},\hat{g}_{s}^{B},\ldots,\hat{g}_{T}^{B}\right)\right]+\Delta \tilde{Q}_{T-k}^{\dagger}\\
 & \leq\sum_{s=T-k+1}^{T}\frac{1}{\prod_{\ell=T-k}^{s-1}\kappa_{\ell}}\Delta\tilde{Q}_{s}+\Delta \tilde{Q}_{T-k}^{\dagger},
\end{align*}
where the second line follows by taking a telescope sum;
the third line follows from the fact that $\left(g_{s+1}^{\ast},\ldots,g_{T}^{\ast}\right)$
maximizes $\tilde{Q}_{T-k}\left(g_{T-k}^{\ast},\ldots,g_{s}^{\ast},\cdot,\ldots,\cdot\right)$
over ${\cal G}_{s+1}\times\cdots\times{\cal G}_{T}$ under Assumption
\ref{asm:first-best}; the last line follows from Lemma \ref{lem:backward DEWM} (i). 

Then, recursively, the following hold:
\begin{align*}
\Delta\tilde{Q}_{T-1} & \leq\frac{1}{\kappa_{T-1}}\Delta\tilde{Q}_{T}+\Delta \tilde{Q}_{T-1}^{\dagger}=\frac{1}{\kappa_{T-1}}\Delta \tilde{Q}_{T}^{\dagger} + \Delta \tilde{Q}_{T-1}^{\dagger},\\
\Delta\tilde{Q}_{T-2} & \leq\frac{1}{\kappa_{T-2}}\Delta\tilde{Q}_{T-1}+\frac{1}{\kappa_{T-2}\kappa_{T-1}}\Delta\tilde{Q}_{T}+ \Delta \tilde{Q}_{T-2}^{\dagger}\\
 & \leq\frac{2}{\kappa_{T-2}\kappa_{T-1}}\Delta \tilde{Q}_{T}^{\dagger}+\frac{1}{\kappa_{T-2}}\Delta \tilde{Q}_{T-1}^{\dagger}+ \Delta \tilde{Q}_{T-2}^{\dagger},\\
 & \vdots\\
\Delta\tilde{Q}_{T-k} & \leq\sum_{s=1}^{k}\frac{2^{k-s}}{\prod_{t=T-k}^{T-s}\kappa_{t}}\Delta \tilde{Q}_{T-s+1}^{\dagger}+\Delta \tilde{Q}_{T-k}^{\dagger}.
\end{align*}
Therefore, when $k=T-1$, we have 
\begin{align*}
\Delta\tilde{Q}_{1} & \leq\Delta \tilde{Q}_{1}^{\dagger}+\sum_{s=1}^{T-1}\frac{2^{T-1-s}}{\prod_{t=1}^{T-s}\kappa_{t}}\Delta \tilde{Q}_{T-s+1}^{\dagger}\\
 & =\Delta \tilde{Q}_{1}^{\dagger}+\sum_{s=1}^{T-1}\frac{2^{s-1}}{\prod_{t=1}^{s}\kappa_{t}}\Delta \tilde{Q}_{s+1}^{\dagger}.
\end{align*}
\end{proof}

\medskip

We are now prepared to give the proof of Theorem \ref{thm:upper bound}. We first give the proof for the simultaneous DEWM method.

\medskip

\begin{proof}[Proof of Theorem \ref{thm:upper bound} (i).]
Let $P\in{\cal P}\left(M, \kappa, \MG\right)$ be fixed. Define $W_{t}(\text{\ensuremath{\underline{g}}}_{t}) \equiv E_{P}\left[\gamma_t \widetilde{Y}_t\left(\underline{g}_t \right)\right]$. Note that $W_{t}(\text{\ensuremath{\underline{g}}}_{t})=E_{P}\left[w_{t}^{S}(Z,\text{\ensuremath{\underline{g}}}_{t})\right]$ holds under Assumption \ref{asm:sequential independence}, where $w_{t}^{S}(Z,\underline{g}_{t})$ is defined in Section \ref{sec:simultaneous DEWM}. Note also that
$W\left(g\right)=\sum_{t=1}^{T}W_{t}(\text{\ensuremath{\underline{g}}}_{t})$. 
Let $W_{nt}(\underline{g}_{t})$
and $W_{n}\left(g\right)$ be defined as 
$W_{nt}(\text{\ensuremath{\underline{g}}}_{t})\equiv \frac{1}{n}\sum_{i=1}^{n}w_{t}^{S}(Z_{i},\text{\ensuremath{\underline{g}}}_{t})$ and $
 W_{n}\left(g\right)\equiv\sum_{t=1}^{T}W_{nt}(\text{\ensuremath{\underline{g}}}_{t})$, respectively.

It follows, for any $g\in \MG$, that 
\begin{align}
E_{P^n}\left[W\left(g\right)-W\left(\hat{g}^{S}\right)\right] & =E_{P^n}\left[W\left(g\right)-W_{n}\left(g\right)\right] + E_{P^n}\left[W_{n}\left(g\right)-W\left(\hat{g}^{S}\right)\right]\nonumber \\
 & \leq E_{P^n}\left[W\left(g\right)-W_{n}\left(g\right)\right] + E_{P^n}\left[W_{n}\left(\hat{g}^{S}\right)-W\left(\hat{g}^{S}\right)\right]\nonumber \\
 & \leq2E_{P^n}\left[\sup_{g\in{\cal G}}\left|W_{n}\left(g\right)-W\left(g\right)\right|\right]\nonumber \\
  &=2E_{P^n}\left[\sup_{g\in{\cal G}}\left| \sum_{t=1}^{T}\left( W_{nt}( \underline{g}_{t}) - W_{t}( \underline{g}_{t})\right) \right|\right]\nonumber \\
 & \leq2\sum_{t=1}^{T}E_{P^n}\left[\sup_{\text{\ensuremath{\underline{g}}}_{t} \in \MG_{1}\times \cdots \times \MG_{t}} \left|W_{nt}(\text{\ensuremath{\underline{g}}}_{t})-W_{t}(\text{\ensuremath{\underline{g}}}_{t})\right|\right],\label{eq:decomposition}
\end{align}
where the second line follows from the fact that $\hat{g}^{S}$ maximizes $W_{n}\left(\cdot\right)$ over ${\cal G}$, and the fourth line follows
from the definition of $W_{n}\left(\cdot\right)$ and equation
(\ref{eq:original welfare}). 

Applying Lemma \ref{lem:KT}, combined with Lemma \ref{lem:vc_subclass}, to each term in (\ref{eq:decomposition}) leads to the following: for each $t=1,\ldots,T$,
\begin{align*}
E_{P^{n}}\left[\sup_{\text{\ensuremath{\underline{g}}}_{t} \in \MG_{1}\times \cdots \times \MG_{t}}\left|W_{nt}(\text{\ensuremath{\underline{g}}}_{t})-W_{t}(\text{\ensuremath{\underline{g}}}_{t})\right|\right] & \leq C\frac{\gamma_{t}M_{t}/2}{\prod_{s=1}^{t}\kappa_{s}}\sqrt{\frac{\sum_{s=1}^{t}v_{s}}{n}},
\end{align*}
where $C$ is the same universal constant that appears in Lemma \ref{lem:KT}. Combining this result with (\ref{eq:decomposition}),
we obtain
\begin{align*}
E_{P^{n}}\left[W_{{\cal G}}^{\ast}-W\left(\hat{g}^{S}\right)\right] & \leq C\sum_{t=1}^{T}\left\{ \frac{\gamma_{t}M_{t}}{\prod_{s=1}^{t}\kappa_{s}}\sqrt{\frac{\sum_{s=1}^{t}v_{s}}{n}}\right\} .
\end{align*}
Since this upper bound does not depend on $P\in{\cal P}\left(M, \kappa, \MG\right)$,
the upper bound is uniform over ${\cal P}\left(M, \kappa, \MG\right)$. 
\end{proof}
 \medskip
We next present the proof for the backward DEWM method.
\medskip

\begin{proof}[ Proof of Theorem \ref{thm:upper bound} (ii).]
Let $P\in{\cal P}\left(M, \kappa, \MG\right)$ be fixed. Let $g^\ast$ be defined in Section \ref{sec:backward DEWM}. It follows under Assumptions \ref{asm:sequential independence} and \ref{asm:first-best} that 
\begin{align*}
W_{\MG}^{\ast}-W\left(\hat{g}^{B}\right)=  \tilde{Q}_{1}\left(g^\ast\right)-\tilde{Q}_{1}\left(\hat{g}^{B}\right)\leq\Delta\tilde{Q}_{1}.
\end{align*}
Then, from Lemma \ref{lem:backward DEWM} (ii),
\begin{align*}
W_{\MG}^{\ast}-W\left(\hat{g}^{B}\right) & \leq\Delta \tilde{Q}_{1}^{\dagger}+\sum_{s=1}^{T-1}\frac{2^{s-1}}{\prod_{t=1}^{s}\kappa_{t}}\Delta \tilde{Q}_{s+1}^{\dagger}.
\end{align*}
Thus, we have
\begin{align}
    E_{P^{n}}\left[W_{{\cal G}}^{\ast}-W\left(\hat{g}^{B}\right)\right]&\leq
    E_{P^n}\left[\Delta \tilde{Q}_{1}^{\dagger}\right] + \sum_{s=1}^{T-1}\frac{2^{s-1}}{\prod_{t=1}^{s}\kappa_{t}} E_{P^n}\left[\Delta \tilde{Q}_{s+1}^{\dagger}\right].
    \label{eq:BDEWM_proof}
\end{align}
Regarding $\Delta \tilde{Q}_{t}^{\dagger}$ for each $t$, it follows that 
\begin{align}
\Delta \tilde{Q}_{t}^{\dagger}&=  \tilde{Q}_{t}\left(g_{t}^{\ast},\hat{g}_{t+1}^{B},\ldots,\hat{g}_{T}^{B}\right)-\tilde{Q}_{t}\left(\hat{g}_{t}^{B},\ldots,\hat{g}_{T}^{B}\right) \notag \\
&=  \tilde{Q}_{t}\left(g_{t}^{\ast},\hat{g}_{t+1}^{B},\ldots,\hat{g}_{T}^{B}\right)-\tilde{Q}_{nt}\left(g_{t}^{\ast},\hat{g}_{t+1}^{B},\ldots,\hat{g}_{T}^{B}\right) +\tilde{Q}_{nt}\left(g_{t}^{\ast},\hat{g}_{t+1}^{B},\ldots,\hat{g}_{T}^{B}\right)-\tilde{Q}_{t}\left(\hat{g}_{t}^{B},\ldots,\hat{g}_{T}^{B}\right) \notag \\
&\leq  \tilde{Q}_{t}\left(g_{t}^{\ast},\hat{g}_{t+1}^{B},\ldots,\hat{g}_{T}^{B}\right)-\tilde{Q}_{nt}\left(g_{t}^{\ast},\hat{g}_{t+1}^{B},\ldots,\hat{g}_{T}^{B}\right)+\tilde{Q}_{nt}\left(\hat{g}_{t}^{B},\ldots,\hat{g}_{T}^{B}\right)-\tilde{Q}_{t}\left(\hat{g}_{t}^{B},\ldots,\hat{g}_{T}^{B}\right) \notag \\
& \leq  2\sup_{\left(g_{t},\ldots,g_{T}\right)\in{\cal G}_{t}\times\cdots\times{\cal G}_{T}}\left|\tilde{Q}_{nt}\left(g_{t},\ldots,g_{T}\right)-\tilde{Q}_{t}\left(g_{t},\ldots,g_{T}\right)\right|, \label{eq:centered empirical process for BDEWM}
\end{align}
where the first inequality follows from the fact that $\hat{g}_{t}^{B}$
maximizes $\tilde{Q}_{nt}\left(\cdot,\hat{g}_{t+1}^{B},\ldots,\hat{g}_{T}^{B}\right)$ over ${\cal G}_{t}$. Because $\left\Vert \tilde{Q}_{t}\left(g_{t},\ldots,g_{T}\right)\right\Vert _{\infty}\leq  \sum_{s=t}^{T}(\gamma_{s}M_{s}/2)/\left(\prod_{\ell =t}^{s}\kappa_{\ell}\right)$
holds under Assumptions \ref{asm:bounded outcome} and \ref{asm:overlap}, by applying Lemmas \ref{lem:vc_subclass} and \ref{lem:KT} to the following class of
functions: 
\begin{align*}
    \left\{ \sum_{s=t}^{T}\left\{ \frac{\left(\prod_{\ell =t}^{s}1\left\{ D_{\ell} = g_{\ell}\left(H_{\ell}\right)\right\}\right) \gamma_{s}Y_{s}}{\prod_{\ell =t}^{s}e_{\ell}\left(D_{\ell},H_{\ell}\right)}\right\} :\left(g_{t},\ldots,g_{T}\right)\in\MG_{t}\times \cdots \times \MG_{T}\right\}, 
\end{align*}
we have 
\begin{align*}
E_{P^{n}}\left[\sup_{\left(g_{t},\ldots,g_{T}\right)\in{\cal G}_{t}\times\cdots\times{\cal G}_{T}}\left|\tilde{Q}_{nt}\left(g_{t},\ldots,g_{T}\right)-\tilde{Q}_{t}\left(g_{t},\ldots,g_{T}\right)\right|\right] & \leq C\left(\sum_{s=t}^{T}\frac{\gamma_{s}M_{s}/2}{\prod_{\ell =t}^{s}\kappa_{\ell}}\right)\sqrt{\frac{\sum_{s=t}^{T}v_{s}}{n}}.
\end{align*}
Combining this with equations (\ref{eq:BDEWM_proof}) and (\ref{eq:centered empirical process for BDEWM}) leads to 
\begin{align*}
E_{P^{n}}\left[W_{{\cal G}}^{\ast}-W\left(\hat{g}^{B}\right)\right] & \leq  C\sum_{t=1}^{T}\left\{ \frac{\gamma_{t}M_{t}}{\prod_{s=1}^{t}\kappa_{s}}\sqrt{\frac{\sum_{s=1}^{t}v_{s}}{n}}\right\} \\
 & +\sum_{t=2}^{T}\frac{2^{t-2}}{\prod_{s=1}^{t-1}\kappa_{s}}\left(C\sum_{s=t}^{T}\left\{ \frac{\gamma_{s}M_{s}}{\prod_{\ell =t}^{s}\kappa_{\ell}}\sqrt{\frac{\sum_{\ell =t}^{s}v_{\ell}}{n}}\right\} \right),
\end{align*}
where $C$ is the same universal constant that appears in Lemma
\ref{lem:KT}. Since this upper bound does not depend on $P\in{\cal P}\left(M, \kappa, \MG\right)$,
the upper bound is uniform over ${\cal P}\left(M, \kappa, \MG\right)$. 
\end{proof}

\section{Proof of Theorem \ref{thm:budget constraint}
\label{appendix:proofs-2}}

This appendix presents the proof of Theorem \ref{thm:budget constraint}. We first introduce several lemmas that will be used in the proof of Theorem \ref{thm:budget constraint}. 

The following lemma provides a concentration inequality that is frequently used in the literature on statistical learning theory, the proof of which can be found, for example, in \cite{Mohri_et_al_2012}.

\medskip

\begin{lemma}{(McDiarmid's Inequality):}\label{lem:McDiarmid's inequality}
Let $S=(Z_{1},\ldots,Z_{n})\in{\cal Z}^{n}$ be a set of $n$ independent
random variables, and $g$ be a mapping from ${\cal Z}^{n}$ to $\mathbb{R}$
such that there exist $c_{1},\ldots,c_{n}>0$ that satisfy the following
conditions: 
\begin{align*}
\left|g\left(z_{1},\ldots,z_{i},\ldots,z_{n}\right)-g\left(z_{1},\ldots,z_{i}^{\prime},\ldots,z_{n}\right)\right|  <c_{i}
\end{align*}
for any $n+1$ points $z_{1},\ldots,z_{n},z_{i}^{\prime}$ in $\MZ$ and all $i\in\left\{ 1,\ldots,n\right\}$.
Let $g\left(S\right)$ denote $g\left(Z_{1},\ldots,Z_{n}\right)$.
Then the following inequalities hold for all $\epsilon>0$: 
\begin{align*}
\Pr\left(g\left(S\right)-E\left[g\left(S\right)\right]\geq\epsilon\right) & \leq\exp\left(\frac{-2\epsilon^{2}}{\sum_{i=1}^{n}c_{i}^{2}}\right),\\
\Pr\left(g\left(S\right)-E\left[g\left(S\right)\right]\leq-\epsilon\right) & \leq\exp\left(\frac{-2\epsilon^{2}}{\sum_{i=1}^{n}c_{i}^{2}}\right).
\end{align*}
\end{lemma}

\medskip

The following lemma gives a finite-sample upper bound on $\sup_{g \in \MG}\left|W\left(g\right)-W_{n}\left(g\right)\right|$ that holds with a high probability.
\medskip
\begin{lemma}\label{prop:upper bound} Suppose that the underlying distribution $P$ satisfies Assumptions \ref{asm:sequential independence}, \ref{asm:bounded outcome}, and \ref{asm:overlap}
and that ${\cal {G}}$ satisfies Assumption \ref{asm:vc-class}. Then,
for any $\delta\in\left(0,1\right)$, the following holds with probability
at least $1-\delta$: 
\begin{align}
  \sup_{g \in \MG}\left|W\left(g\right)-W_{n}\left(g\right)\right|
\leq  \frac{1}{\sqrt{n}}\sum_{t=1}^{T}\left[\frac{\gamma_{t}M_{t}}{\prod_{s=1}^{t}\kappa_{s}} \cdot \left( C \sqrt{\sum_{s=1}^{t}v_{s}}+\sqrt{\frac{\log\left(1/\delta\right)}{2}}\right) \right]. \label{eq:from prop-1}
\end{align}
\end{lemma}

\begin{proof}

The proof follows a similar argument as that of Corollary 3.4 of \cite{Mohri_et_al_2012}.
We will evaluate $\sup_{g\in{\cal G}}\left|W\left(g\right)-W_{n}\left(g\right)\right|$.
Let $S=\left(Z_{1},\ldots,Z_{n}\right)$ be the sample and define $A\left(S\right) \equiv\sup_{g\in{\cal G}}\left\{ W\left(g\right)-W_{S}\left(g\right)\right\}$, where, for any sample $S$ with size $n$, $W_{S}\left(g\right)$ is defined as $W_{n}\left(g\right)$ that uses the sample $S$. 

Introduce $S^{\prime}=\left(Z_{1},\ldots,Z_{n-1},Z_{n}^{\prime}\right)$,
an i.i.d. sample that is different from $S$ with respect to the final component.
Then, it follows that 
\begin{align*}
A\left(S\right)-A\left(S^{\prime}\right) & =\sup_{g\in{\cal G}}\inf_{g^{\prime}\in{\cal G}}\left\{ W\left(g\right)-W_{S}\left(g\right)-W\left(g^{\prime}\right)+W_{S^{\prime}}\left(g^{\prime}\right)\right\} \\
 & \leq\sup_{g\in{\cal G}}\left\{ W\left(g\right)-W_{S}\left(g\right)-W\left(g\right)+W_{S^{\prime}}\left(g\right)\right\} \\
 & =\frac{1}{n}\sup_{g\in{\cal G}}\left\{ \sum_{t=1}^{T}w_{t}^{S}\left(Z_{n},\underline{g}_{t}\right)-\sum_{t=1}^{T}w_{t}^{S}\left(Z_{n}^{\prime},\underline{g}_{t}\right)\right\} \\
 & \leq\frac{1}{n}\sum_{t=1}^{T}\sup_{g\in \MG}\left\{ w_{t}^{S}\left(Z_{n},\underline{g}_{t}\right)-w_{t}^{S}\left(Z_{n}^{\prime},\underline{g}_{t}\right)\right\} \\
 & \leq\frac{1}{n}\sum_{t=1}^{T}\left(\frac{\gamma_{t}M_{t}}{\prod_{s=1}^{t}\kappa_{s}}\right),
\end{align*}
where the last inequality follows from the fact that under Assumptions \ref{asm:bounded outcome} and \ref{asm:overlap}, $w_{t}^{S}\left(Z_{i},\underline{g}_{t}\right)$ is bounded from above by $\left(\gamma_{t}M_{t}/2\right)/\left(\prod_{s=1}^{t}\kappa_{s}\right)$.

Since we have 
\begin{align*}
    \left| A\left(S\right)-A\left(S^{\prime}\right)\right|\leq \frac{1}{n}\sum_{t=1}^{T}\frac{\gamma_{t}M_{t}}{\prod_{s=1}^{t}\kappa_{s}},
\end{align*}
applying Lemma \ref{lem:McDiarmid's inequality} leads to
\begin{align*}
P\left( \left|A\left(S\right)-E_{P^{n}}\left[A\left(S\right)\right]\right|\geq\epsilon\right)   \leq\exp\left(\frac{-2n\epsilon^{2}}{\left( \sum_{t=1}^{T}\frac{\gamma_{t}M_{t}}{\prod_{s=1}^{t}\kappa_{s}}\right) ^{2}}\right)
\end{align*}
for any $\epsilon >0$. This is equivalent to the following inequality: for any $\delta\in\left(0,1\right)$,
\begin{align}
P\left( \left|A\left(S\right)-E_{P^{n}}\left[A\left(S\right)\right]\right|\leq\left(\sum_{t=1}^{T}\frac{\gamma_{t}M_{t}}{\prod_{s=1}^{t}\kappa_{s}}\right)\sqrt{\frac{\log\left(1/\delta\right)}{2n}}\right)   \geq1-\delta.\label{eq:proof-prop-2}
\end{align}

Subsequently, we will evaluate $E_{P^{n}}\left[A\right(S\left)\right]$.
Since 
\begin{align*}
E_{P^{n}}[A(S)] & = E_{P^n}\left[\sup_{g\in{\cal G}}\left| \sum_{t=1}^{T}\left( W_{nt}( \underline{g}_{t}) - W_{t}( \underline{g}_{t})\right) \right|\right]\nonumber \\
 & \leq \sum_{t=1}^{T}E_{P^n}\left[\sup_{\text{\ensuremath{\underline{g}}}_{t} \in \MG_{1}\times \cdots \times \MG_{t}} \left|W_{nt}(\text{\ensuremath{\underline{g}}}_{t})-W_{t}(\text{\ensuremath{\underline{g}}}_{t})\right|\right],
\end{align*}
applying Lemma \ref{lem:KT} combined with Lemma \ref{lem:vc_subclass} leads to
\begin{align}
E_{P^{n}}[A(S)] \leq C \sum_{t=1}^{T}\left\{ \frac{\gamma_{t}M_{t}}{\prod_{s=1}^{t}\kappa_{s}}\sqrt{\frac{\sum_{s=1}^{t}v_{s}}{n}}\right\}, \label{eq:proof-prop-3}
\end{align}
where $C$ is the same constant that appears in Lemma \ref{lem:KT}.

Consequently, combining (\ref{eq:proof-prop-2})
and (\ref{eq:proof-prop-3}), for any $\delta\in\left(0,1\right)$, it follows 
with probability at least $1-\delta$ that 
\begin{align*}
\sup_{g \in \MG}\left|W\left(g\right)-W_{n}\left(g\right)\right|
&\leq  C\sum_{t=1}^{T}\left[\frac{\gamma_{t}M_{t}}{\prod_{s=1}^{t}\kappa_{s}}\sqrt{\frac{\sum_{s=1}^{t}v_{s}}{n}}\right]+\left(\sum_{t=1}^{T}\frac{\gamma_{t}M_{t}}{\prod_{s=1}^{t}\kappa_{s}}\right)\sqrt{\frac{\log\left(1/\delta\right)}{2n}}\\
&=  \frac{1}{\sqrt{n}}\sum_{t=1}^{T}\left[\frac{\gamma_{t}M_{t}}{\prod_{s=1}^{t}\kappa_{s}} \cdot \left( C \sqrt{\sum_{s=1}^{t}v_{s}}+\sqrt{\frac{\log\left(1/\delta\right)}{2}}\right) \right].
\end{align*}
\end{proof}
 \medskip

The following lemma shows that a class of feasible DTRs that satisfy the empirical budget/capacity constraints (\ref{eq:empirical budget constraint}) contains the optimal DTR with high probability. 

\medskip

\begin{lemma} \label{lem:budget constraint} Suppose that the underlying distribution $P$ satisfies Assumption \ref{asm:sequential independence} and that
$\sum_{t=1}^{T}K_{tb}=1$ holds for all $b=1,\ldots,B$. For $k>0$, let $\tilde{g}^{\ast} = (\tilde{g}_{1}^{\ast},\ldots,\tilde{g}_{T}^{\ast})$ be a solution of the constrained maximization problem (\ref{eq:budget constrained welfare}) with $C_b$ replaced by $C_{b} - k + \alpha_{n}$, where we suppose that such a solution exists. Define 
\begin{align*}
\MG_{\alpha_{n}}^{S}\equiv &\left\{  g\in\MG:  \sum_{t=1}^{T}K_{tb}\hat{E}\left[g_{t}\left(\widetilde{H}_{t}\left(\underline{g}_{t-1}\right)\right)\right]\leq C_{b}+\alpha_{n}  \mbox{\mbox{ for }}b=1,\ldots,B \right\} ,
\end{align*}
which is a subset of DTRs that satisfy the sample budget constraints
(\ref{eq:empirical budget constraint}). Then, for any $\delta\in\left(0,1\right)$,
$   P\left(g^{\ast} \in {\cal G}_{\alpha_{n}}^{S}\right) \geq 1- B\cdot \exp\left(-2nk^{2}\right)$ holds. 
\end{lemma}

\begin{proof}
It follows that 
\begin{align*}
 P\left(\tilde{g}^{\ast}\notin{\cal G}_{\alpha_{n}}^{S}\right)  &= P\left(\max_{b=1,\ldots,B}\left\{\sum_{t=1}^{T}K_{tb}\hat{E}\left[\tilde{g}_{t}^{\ast}\left(\widetilde{H}_{t}\left(\underline{\tilde{g}}_{t-1}^{\ast}\right)\right)\right]-C_{b}\right\}>\alpha_{n} \right)\\
  &\leq \sum_{b=1}^{B}P\left(\sum_{t=1}^{T}K_{tb}\hat{E}\left[\tilde{g}_{t}^{\ast}\left(\widetilde{H}_{t}\left(\underline{\tilde{g}}_{t-1}^{\ast}\right)\right)\right]-C_{b}>\alpha_{n}\right)\\
  &\leq  \sum_{b=1}^{B}P\left(\sum_{t=1}^{T}K_{tb}\hat{E}\left[\tilde{g}_{t}^{\ast}\left(\widetilde{H}_{t}\left(\underline{\tilde{g}}_{t-1}^{\ast}\right)\right)\right] -\sum_{t=1}^{T}K_{tb}E_{P}\left[\tilde{g}_{t}^{\ast}\left(\widetilde{H}_{t}\left(\underline{\tilde{g}}_{t-1}^{\ast}\right)\right)\right]>k\right),
\end{align*}
where the second inequality follows from the fact that $\tilde{g}^{\ast}$ satisfies
the population budget/capacity constraints (\ref{eq:budget constraint}) with $C_b$ replaced by $C_{b} - k + \alpha_{n}$.

By Hoeffding's inequality, it follows for each $b=1,\ldots,B$ that 
\begin{align*}
&P\left(\sum_{t=1}^{T}K_{tb}\hat{E}\left[\tilde{g}_{t}^{\ast}\left(\widetilde{H}_{t}\left(\underline{\tilde{g}}_{t-1}^{\ast}\right)\right)\right] -\sum_{t=1}^{T}K_{tb}E_{P}\left[g_{t}^{\ast}\left(\widetilde{H}_{t}\left(\underline{g}_{t-1}^{\ast}\right)\right)\right]>k \right) \\
 \leq & \exp\left\{ -\frac{2n k^{2}}{\left(\sum_{t=1}^{T}K_{tb}\right)^{2}}\right\} 
  = \exp\left(-2n\alpha_{n}^{2}\right),
\end{align*}
where the equality follows from the scale
normalization $\sum_{t=1}^{T}K_{tb}=1$. Thus, we have $P\left(g^{\ast} \notin {\cal G}_{\alpha_{n}}^{S}\right) \leq  B\cdot \exp\left(-2nk^{2}\right)$.
Therefore, $ P\left(g^{\ast} \in {\cal G}_{\alpha_{n}}^{S}\right) = 1 - P\left(g^{\ast} \notin {\cal G}_{\alpha_{n}}^{S}\right) \geq 1- B\cdot \exp\left(-2nk^{2}\right)$.
\end{proof}
 \medskip

\begin{proof}[ Proof of Theorem \ref{thm:budget constraint} (i).]

We use the notation $A\leq_{\delta}B$ to denote that $A\leq B$ holds with probability at least $1-\delta$. 
Let $g_{\alpha_n}^{\ast}$ be a solution of the constrained maximization problem (\ref{eq:budget constrained welfare}) with $C_b$ replaced by $C_{b} - k_{(B,n,\delta)} + \alpha_{n}$. 

By Lemma \ref{lem:budget constraint}, 
we have $P\left(g_{\alpha_n}^{\ast}\in{\cal G}_{\alpha_{n}}^{S}\right) \geq 1 - \delta/6$. Thus,  $W_{n}\left(g_{\alpha_n}^{\ast}\right)\leq_{\delta/6}W_{n}\left(\hat{g}^{bdgt}\right)$ holds
because $\hat{g}^{bdgt}$ maximizes $W_{n}\left(\cdot\right)$ over ${\cal G}_{\alpha_{n}}^{S}$. Note that $W\left(g_{\alpha_n}^{\ast}\right) = W_{{\cal G}}^{\ast,bdgt}$.
By combining  the fact that $W_{n}\left(g_{\alpha_n}^{\ast}\right)\leq_{\delta/6}W_{n}\left(\hat{g}^{bdgt}\right)$ with (\ref{eq:from prop-1}), it follows that
\begin{align*}
  W_{{\cal G}}^{\ast,bdgt} &= W\left(g_{\alpha_n}^{\ast}\right) \\
& \leq_{\delta/6}  W_{n}\left(g_{\alpha_n}^{\ast}\right)+\frac{1}{\sqrt{n}}\sum_{t=1}^{T}\left[\frac{\gamma_{t}M_{t}}{\prod_{s=1}^{t}\kappa_{s}}\cdot \left( C \sqrt{\sum_{s=1}^{t}v_{s}}+\sqrt{\frac{\log\left(6/\delta\right)}{2}}\right) \right]\\
&\leq_{\delta/6}  W_{n}\left(\hat{g}^{bdgt}\right)+\frac{1}{\sqrt{n}}\sum_{t=1}^{T}\left[\frac{\gamma_{t}M_{t}}{\prod_{s=1}^{t}\kappa_{s}}\cdot \left( C \sqrt{\sum_{s=1}^{t}v_{s}}+\sqrt{\frac{\log\left(6/\delta\right)}{2}}\right) \right]\\
&\leq_{\delta/6}  W\left(\hat{g}^{bdgt}\right)+\frac{2}{\sqrt{n}}\sum_{t=1}^{T}\left[\frac{\gamma_{t}M_{t}}{\prod_{s=1}^{t}\kappa_{s}}\cdot \left( C \sqrt{\sum_{s=1}^{t}v_{s}}+\sqrt{\frac{\log\left(6/\delta\right)}{2}}\right) \right].
\end{align*}
The first inequality follows from the inequality in (\ref{eq:from prop-1});
the second inequality follows from the fact that $\hat{g}^{bdgt}$ maximizes $W_{n}\left(\cdot\right)$ over ${\cal G}_{\alpha_{n}}^{S}$ and $g_{\alpha_n}^{\ast}\in{\cal G}_{\alpha_{n}}^{S}$ holds with probability at least $1-\delta/6$; the third inequality follows from the inequality in (\ref{eq:from prop-1}).  Overall, we have
\begin{align}
 W_{{\cal G}}^{\ast,bdgt} 
\leq_{\delta/2}  W\left(\hat{g}^{bdgt}\right) 
+\frac{1}{\sqrt{n}}\sum_{t=1}^{T}\left[\frac{\gamma_{t}M_{t}}{\prod_{s=1}^{t}\kappa_{s}}\cdot \left( 2C \sqrt{\sum_{s=1}^{t}v_{s}}+\sqrt{2\log\left(6/\delta\right)}\right) \right]. \label{eq:probability bound_welfare deviation}
\end{align}

Applying the same argument as in the proof of Lemma \ref{prop:upper bound}, it follows for each $b=1,\ldots,B$ that 
\begin{align}
 & \left|E_{n}\left[\sum_{t=1}^{T}K_{tb}\hat{g}_{t}^{bdgt}\left(\widetilde{H}_{t}\left(\underline{\hat{g}}_{t-1}^{bgdt}\right)\right)\right]-E_{P}\left[\sum_{t=1}^{T}K_{tb}\hat{g}_{t}^{bdgt}\left(\widetilde{H}_{t}\left(\underline{\hat{g}}_{t-1}^{bgdt}\right)\right)\right]\right|\nonumber \\
\leq & \sum_{t=1}^{T} \sup_{\underline{g}_{t} \in \MG_{1} \times \cdots \times \MG_{t}}\left|E_{n}\left[K_{tb}g_{t}\left(\widetilde{H}_{t}\left(\underline{g}_{t-1}\right)\right)\right]-E_{P}\left[K_{tb}g_{t}\left(\widetilde{H}_{t}\left(\underline{g}_{t-1}\right)\right)\right]\right|\nonumber \\
\leq_{\delta} & \frac{1}{\sqrt{n}}\sum_{t=1}^{T}\left[K_{tb}\cdot \left( C \sqrt{\sum_{s=1}^{t}v_{s}}+\sqrt{\frac{\log\left(1/\delta\right)}{2}}\right) \right].\label{eq:from prop-2}
\end{align}
Furthermore, for each $b=1,\ldots,B$, 
\begin{align*}
E_{P}\left[\sum_{t=1}^{T}K_{tb}\hat{g}_{t}^{bdgt}\left(\widetilde{H}_{t}\left(\underline{\hat{g}}_{t-1}^{bgdt}\right)\right)\right]
 & \leq_{\delta/\left(2B\right)}E_{n}\left[\sum_{t=1}^{T}K_{tb}\hat{g}_{t}^{bdgt}\left(\widetilde{H}_{t}\left(\underline{\hat{g}}_{t-1}^{bgdt}\right)\right)\right] \\ &+\frac{1}{\sqrt{n}}\sum_{t=1}^{T}\left[K_{tb}\cdot \left( C \sqrt{\sum_{s=1}^{t}v_{s}}+\sqrt{\frac{\log\left(2B/\delta\right)}{2}}\right) \right]\\
 & \leq C_{b}+\alpha_{n}+\frac{1}{\sqrt{n}}\sum_{t=1}^{T}\left[K_{tb}\cdot \left( C \sqrt{\sum_{s=1}^{t}v_{s}}+\sqrt{\frac{\log\left(2B/\delta\right)}{2}}\right) \right],
\end{align*}
where the first inequality follows from the inequality in (\ref{eq:from prop-2}), and the second inequality follows from the fact that $\hat{g}^{bdgt}\in{\cal G}_{\alpha_{n}}^{S}$. Thus, the following holds with probability at least $1-\delta$: for any $b \in \{1,\ldots,B\}$,
\begin{align}
 & E_{P}\left[\sum_{t=1}^{T}K_{tb}\hat{g}_{t}^{bdgt}\left(\widetilde{H}_{t}\left(\underline{\hat{g}}_{t-1}^{bgdt}\right)\right) - C_{b}\right] \notag \\
 & \leq_{\delta/(2B)} \alpha_{n}+\frac{1}{\sqrt{n}}\sum_{t=1}^{T}\left[K_{tb}\cdot \left( C\sqrt{\sum_{s=1}^{t}v_{s}}+\sqrt{\frac{\log\left(2B/\delta\right)}{2}}\right) \right].
 \label{eq:probabiloty bound_budget deviation}
\end{align}

The result follows from combining the probability inequalities (\ref{eq:probability bound_welfare deviation}) and (\ref{eq:probabiloty bound_budget deviation}) for all $b=1,\ldots,B$.
\end{proof}

\section{Non-additive Welfare Function} \label{appendix:non-additive_welfare_function}

In this appendix, we consider a non-additive \textit{social welfare function} (SWF) and provide a simultaneous dynamic EWM approach to estimate the optimal DTR. We consider the 
equality-minded rank-dependent SWFs introduced by \cite{Meyer_1995} and \cite{Weymark_1981} and studied by \cite{Kitagawa_Tetenov_2021}: 
\begin{align}
    W_{\Lambda}(F) \equiv \int_{0}^{\infty}\Lambda(F(y))dy, \label{eq:rank-dependent_welfare_function}
\end{align}
where $F(y)$ is the distribution of an outcome and $\Lambda(\cdot):[0,1] \rightarrow [0,1]$ is a non-increasing, non-negative functions with $\Lambda(0)=1$ and $\Lambda(1)=0$. 

An important family of SWFs represented by (\ref{eq:rank-dependent_welfare_function}) is the extended Gini family \citep{Donaldson_Weymark_1980,Donaldson_Weymark_1983,Aaberge_et_al_2013}:
\begin{align*}
    W_k(F) &\equiv \int_{0}^{\infty}(1 - F(y))^{k-1}dy = \int_{0}^{\infty}\Lambda_{k}\left(F(y)\right)dy \\
    & = \int_{0}^{1}F^{-1}(\tau)\omega_{k}(\tau)d\tau,
\end{align*}
where $\Lambda_{k}(\tau) \equiv (1 - \tau)^{k-1}$ and $\omega_{k}(\tau)\equiv (k-1)(1-\tau)^{k-2}$. The standard Gini social welfare function \citep{Blackorby_Donaldson_1978,Weymark_1981} corresponds to the extended Gini social welfare function when $k=3$, which can also be written as
\begin{align*}
    W_{Gini}(F) = E(Y)(1 - I_{Gini}(F)),
\end{align*}
where $I_{Gini}(F) = 1 - \frac{\int_{0}^{1}F^{-1}(\tau)\cdot 2(1-\tau)d\tau}{E(Y)}$ is the widely used Gini inequality index.

Without loss of generality, we suppose that the target outcome is $\sum_{t=1}^{T}Y_{t}$. For any DTR  $g=(g_1, \ldots, g_T)$, let $F_{g}(\cdot)$ denote the distribution of $\sum_{t=1}^{T}\widetilde{Y}_{t}(\underline{g}_{t})$.
We define the rank-dependent SWF of $g$ by
\begin{align}
    W_{\Lambda}(g) &\equiv W_{\Lambda} (F_g). \label{eq:rank-dependent_SWF}
\end{align}
Our goal is to estimate the optimal DTR that maximizes $W_{\Lambda}(g)$ over the pre-specified class of DTRs $\MG$. 

We estimate the optimal DTR by simultaneously maximizing the sample analogue of the population welfare function $W_{\Lambda}(g)$ over $g \in \MG$. Let 
\begin{align*}
    \widehat{F}_{g}(y) \equiv 1 - \frac{1}{n}\sum_{i=1}^{n} \left(\frac{\prod_{t=1}^{T}1\{D_{it} = g_{t}(H_{it})\}}{\prod_{t=1}^{T}e_{t}(D_{it},H_{it})}\cdot 1\left\{\sum_{t=1}^{T}Y_{it}>y\right\}\right),
\end{align*}
which is the inverse probability weighting estimator of the distribution of $\sum_{t=1}^{T}\widetilde{Y}_{t}(\underline{g}_{t})$. The sample analogue of the population welfare $W_{\Lambda}(g)$ is 
\begin{align*}
    \widehat{W}_{\Lambda}(g) \equiv \int_{0}^{\infty} \Lambda\left(\widehat{F}_{g}(y) \vee 0\right) dy, 
\end{align*}
where the maximum ($\vee$) of $\widehat{F}_{g}(y)$ and $0$ is taken because $\widehat{F}_{g}(y)$ may take values smaller than $0$, for which $\Lambda(\cdot)$ is not defined. The simultaneous DEWM approach estimates the optimal DTR by solving
\begin{align*}
    \hat{g}^{S} \in \argmax_{g\in \MG} \widehat{W}_{\Lambda}(g).
\end{align*}

Let $\MP$ be a class of distributions of $\left(\underline{A}_T,\{\underline{X}_{T}(\underline{d}_{T-1})\}_{\underline{d}_{T-1} \in \{0,1\}^{T-1}},\{\underline{Y}_{T}(\underline{d}_{T})\}_{\underline{d}_{T} \in \{0,1\}^{T}}\right)$. The following theorem derives a uniform upper bound of the average welfare loss of $\hat{g}^{S}$.

\bigskip

\begin{theorem}\label{thm:equality_mindied_welfare}
Suppose that Assumptions \ref{asm:sequential independence} and \ref{asm:overlap} hold for any distribution $P \in \MP$ and Assumption \ref{asm:vc-class} holds for $\MG$. Furthermore, suppose that the following hold:
\begin{itemize}
    \item $\Lambda(\cdot):[0,1] \rightarrow [0,1]$ is a non-increasing, convex function with $\Lambda(0)=1$, $\Lambda(1)=0$, and its right derivative at $0$ is finite;
    \item there exists $\Upsilon < \infty$ such that for all $P \in \MP$ and any $\underline{d}_{T} \in \{0,1\}^T$,
    \begin{align}
        \bigintsss_{0}^{\infty}\sqrt{P\left(\sum_{t=1}^{T}Y_{t}(\underline{d}_{t}) > y\right)} dy \leq \Upsilon. \label{eq:tail_bound}
    \end{align}
\end{itemize}
Then the average welfare loss of $\hat{g}^{S}$ satisfies
\begin{align}
    \sup_{P \in \MP} E_{P^n}\left[\sup_{g \in \MG} W_{\Lambda}(g) - W(\hat{g}^{S})\right] \leq 2C \mid \Lambda^{\prime}(0)\mid \frac{\Upsilon}{\prod_{t=1}^{T}\kappa_{t}}\sqrt{\frac{\sum_{t=1}^{T}v_{t}}{n}}
\end{align}
for all $n>1$, where $C$ is a universal constant.
\end{theorem}

\begin{proof}
See Appendix \ref{sec:proof_equality_minded_welfare}.
\end{proof}

\medskip

This theorem shows that for a large class of data-generating processes, the rank-dependent SWF of the simultaneous DEWM converges to the optimal welfare no slower than $n^{-1/2}$ rate. This uniform convergence rate of $n^{-1/2}$ coincides with that of the DEWM methods for the linear SWF shown in Theorem \ref{thm:upper bound}. The convergence rate of $n^{-1/2}$ also coincides with the minimax optimal convergence rate for the rank-dependent SWF in the static treatment case (Theorems 3.1 and 3.2 in \cite{Kitagawa_Tetenov_2021}).


\section{Multiple Treatment} \label{appendix:multiple}

In the main text, we consider the setting of binary treatment assignment for each stage. However, more than a few examples of DTRs involve multiple treatments in practice. In this section, we extend the DEWM to the case of multiple treatment.

Suppose that there are $K$ treatments in each stage. Let $\MD \equiv \{1,2,\ldots,K\}$ denote the treatment space in each stage, and $D_t \in \MD$ denote the observed treatment in stage $t$. Using the same notations as in Section \ref{sec:setup}, we define the potential outcomes and the potential covariates as $Y_t(\underline{d}_t)$ and $X_{t}(\underline{d}_{t-1})$, respectively. The observed outcomes and observed covariates are denoted as $Y_t \equiv Y_t(\underline{D}_t)$ and $X_t \equiv X_t(\underline{D}_{t-1})$, respectively. We define the i.i.d. sample $\{Z_i \equiv (D_{it},X_{it},Y_{it})_{t=1}^{T}:i=1,\ldots,n\}$.
Additionally, we define the history $H_t$ and $H_{it}$ in the same manner as described in the main text.

We suppose that the sequential independence assumption holds for multiple treatment.

\begin{assumption}(Sequential Independence Assumption) For any $t=1,\ldots,T$ and $\underline{d}_t \in \MD^t$,
\begin{align*}
    \left(Y_t(\underline{d}_t),\ldots,Y_T(\underline{d}_T),X_{t+1}(\underline{d}_t),\ldots,X_T(\underline{d}_{T-1})\right) \indep D_{t} \mid H_{t} \mbox{\ a.s.}
\end{align*}
\end{assumption}

In the multiple treatment setting, the treatment rule $g_t$ in each stage is a map from $\MH_t$ to $\MD$. The DTR denoted by $g$ is the sequence $g = (g_1,\ldots,g_T)$. We denote the class of feasible DTRs by $\MG \equiv \MG_1 \times \cdots \times \MG_T$, where $\MG_t$ is a class of feasible treatment rules at stage $t$.


In the following subsections, we describe the backward DEWM and simultaneous DEWM for multiple treatment in the experimental data setting.

\subsection{Backward DEWM for Multiple Treatments}

To guarantee the consistency of the backward estimation procedure, we suppose that the $\MG_t$ for $t \geq 2$ contains the first-best rule. 
As the same way as in the main text, for any $s<t$, we define 
\begin{align*}
\widetilde{Y}_{t}\left(\underline{d}_{s},\underline{g}_{(s+1):t}\right) \equiv \sum_{\underline{d}_{(s+1):t} \in \MD^{t-s}} Y_t(\underline{d}_{s},\underline{d}_{(s+1):t})\cdot \prod_{\ell=s+1}^{t}1\left\{g_{\ell}\left(H_{\ell}\left(\underline{d}_{\ell-1}\right)\right) = d_\ell\right\},
\end{align*}
which is the outcome in stage $t$ that is realized when the treatment assignments from stage $1$ to stage $s-1$ are fixed to $\underline{d}_{s}$ and the subsequent sequential treatment assignment follows $\underline{g}_{(s+1):t}$. We denote $\widetilde{Y}_{t}\left(\underline{d}_{t},\underline{g}_{(t+1):t}\right) = Y_{t}\left(\underline{d}_{t}\right)$ when $s=t$.

We suppose that the first-best treatment rule is available for a multiple treatment case in the following sense.

\smallskip

\begin{assumption}[First-Best Treatment Rule] 
\label{asm:first-best_multiple} 
For any $t=2,\ldots,T$, there exists $g_{t,FB}^{\ast}\in{\cal G}_{t}$ such that the following holds:
\begin{align*}
&E_{P}\left[\sum_{s=t}^{T}\gamma_{s}\widetilde{Y}_{s}\left(\text{\underline{D}}_{t-1},\underline{g}_{t:s,FB}^{\ast}\right) \middle| H_{t}\right] \geq \max_{d_t \in \MD^{t}}E_{P}\left[\sum_{s=t}^{T}\gamma_{s}\widetilde{Y}_{s}\left(\text{\underline{D}}_{t-1},d_t,\underline{g}_{(t+1):T,FB}^{\ast}\right) \middle| H_{t} \right] \mbox{\ a.s.}
\end{align*}
\end{assumption}

\smallskip

Then, we can consistently estimate the optimal DTR through the backward induction approach in the same manner as in Section \ref{sec:backward DEWM}. Given the propensity scores $\left\{ e_{t}\left(D_{t},H_{t}\right)\right\} _{t=1}^{T}$, let
\begin{align}
q_{t}\left(Z,g_{t};g_{t+1},\ldots,g_{T}\right)\equiv  \sum_{s=t}^{T}\left\{ \frac{\left(\prod_{\ell =t}^{s}1\left\{ D_{\ell} = g_{\ell}\left(H_{\ell}\right)\right\}\right) \gamma_{s}Y_{s}}{\prod_{\ell =t}^{s}e_{\ell}\left(D_{\ell},H_{\ell}\right)}\right\} .\nonumber
\end{align}
With the backward DEWM, the optimal DTR for multiple treatment is sequentially estimated as follows. In the first step, for the last stage $T$, the optimal treatment rule in the last stage is estimated as 
\begin{align*}
\hat{g}_{T}^{B} & \in\argmax_{g_{T}\in{\cal G}_{T}}\frac{1}{n}\sum_{i=1}^{n}q_{T}\left(Z_{i},g_{T}\right).
\end{align*}
Then, recursively, from $t=T-1$ to $1$, the method estimates $g_{t}^{\ast}$ by 
\begin{align*}
\hat{g}_{t}^{B} & \in\argmax_{g_{t}\in{\cal G}_{t}}\frac{1}{n}\sum_{i=1}^{n}q_{t}\left(Z_{i},g_{t};\hat{g}_{t+1}^{B},\ldots,\hat{g}_{T}^{B}\right). 
\end{align*}
Throughout this procedure, we obtain the DTR $\hat{g}^{B}\equiv \left(\hat{g}_{1}^{B},\ldots,\hat{g}_{T}^{B}\right)$.

\subsection{Simultaneous DEWM for Multiple Treatments}

The simultaneous DEWM can also be constructed in the same way as described in Section \ref{sec:simultaneous DEWM}. This approach estimates the optimal DTR through the following maximization problem: 
\begin{align*}
\left(\hat{g}_{1}^{S},\dots,\hat{g}_{T}^{S}\right) & \in\argmax_{g\in{\cal {\cal G}}}\sum_{t=1}^{T}\left[\frac{1}{n}\sum_{i=1}^{n}w_{t}^{S}(Z_{i},\text{\ensuremath{\underline{g}}}_{t})\right],
\end{align*}
with
\begin{align*}
w_{t}^{S}(Z_{i},\text{\ensuremath{\underline{g}}}_{t}) & \equiv \frac{\left(\prod_{s=1}^{t}1\left\{ g_{s}\left(H_{is}\right)=D_{is}\right\}\right) \gamma_{t}Y_{it}}{\prod_{s=1}^{t}e_{s}\left(D_{is},H_{is}\right)}.
\end{align*}


\section{Doubly Robust Estimation Using Observational Data}\label{appendix:DR_estimation}


We consider doubly robust estimation of the optimal DTR. In Section \ref{sec:DR_approach_1}, we discuss extension of the simultaneous maximization approach to doubly robust approach.\footnote{Doubly robust estimators for the optimal DTRs are also proposed by \cite{zhang_et_al_2013}, \cite{Wallace_et_al_2015}, and \cite{Ertefaie_et_al_2021}. \cite{Wallace_et_al_2015} and \cite{Ertefaie_et_al_2021} propose methods with backward induction, 
which requires the correct specification of the model for conditional treatment effects at each stage, referred to by \cite{Wallace_et_al_2015} as the blip function. In contrast, the doubly robust approach with simultaneous maximization proposed in my paper does not require the correct specification of the blip function or optimal treatment rules. 
\cite{zhang_et_al_2013} also consider a doubly robust estimation with simultaneous optimization but do not study the theoretical properties of their proposed method.}
However, as discussed in Remark \ref{rem:doubly_robust} in the main text, doubly robust approach with the simultaneous maximization estimation is computationally challenging unless the class of DTR $\MG$ is not small. In Section \ref{appendix:DR_approach_2}, we construct another doubly-robust simultaneous-maximization approach with computational feasibility under the setting that treatment choice at each stage depends only on the exogenous variables and past treatments.

\subsection{Simultaneous Maximization Method}\label{sec:DR_approach_1}


We here consider extending the simultaneous maximization approaches to doubly robust policy learning.
Following the doubly robust policy learning of \citeApp{Athey_Wager_2020} and \citeApp{Zhou_et_al_2023}, we employ cross-fitting to make the estimation of the welfare function and estimation of the optimal DTRs independent; whereby, to reduce the over-fitting. We randomly divide the data set $\{Z_i:i=1,\ldots,n\}$ into $K$ evenly-sized folds (e.g., $K=5$). Let $I_k$ be a set of indices of the data in the $k$-th fold and $I_{-k}$ be a set of indices of the data excluded from the $k$-th fold. Hereafter, for any statistics $\hat{f}$, we denote by $\hat{f}^{-k}$ the corresponding statistics calculated using data excluded from the $k$-th fold. We denote by $k(i)$ the number of the fold that containts the $i$-th observartion.

In the general dynamic setting, as proposed by \cite{Jiang_Li_2016} and \cite{Thomas_Brunskill_2016}, we can construct an AIPW estimator for the welfare function $W(g)$ of a fixed DTR $g$ as follows:\footnote{Note that \cite{Jiang_Li_2016} and \cite{Thomas_Brunskill_2016} consider the estimation of the value of a fixed DTR $g$, but not the estimation of the optimal DTR.}
    \begin{align}
    \widehat{W}^{AIPW}(g)=\frac{1}{n}\sum_{i=1}^{n}&\left(\sum_{t=1}^{T}\hat{\psi}_{it}^{-k(i)}\left(\underline{g}_{t}\right)\gamma_{t}Y_{it} \right. \nonumber\\
    &\left.-\sum_{t=1}^{T}\left(\hat{\psi}_{it}^{-k(i)}\left(\underline{g}_{t}\right)- \hat{\psi}_{i,t-1}^{-k(i)}\left(\underline{g}_{t-1}\right)\right)\cdot \widehat{Q}_{t}^{\underline{g}_{(t+1):T},-k(i)}\left(H_{it},D_{it}\right) \right), \label{eq:AIPW_simultaneous}
\end{align}
where $\hat{\psi}_{it}^{-k(i)}(\underline{g}_{t}) \equiv \left(\prod_{s=1}^{t}1\left\{ D_{is}=g_{s}(H_{is})\right\}\right)/\left(\prod_{s=1}^{t}\hat{e}_{t}^{-k(i)}\left(H_{is},g_{s}\right)\right)$ is an estimator of the sequential propensity weights, and 
$\widehat{Q}_{t}^{\underline{g}_{(t+1):T},-k(i)}(h_{t},d_{t})$ is an estimator of the action-value function (Q-function) for $\underline{g}_{(t+1):T}$:
$$Q_{t}^{\underline{g}_{(t+1):T}}(h_{t},d_{t}) \equiv E_{P}\left[\gamma_{t}Y_{t} + \sum_{s=t+1}^{T}\gamma_{s}\widetilde{Y}_s(\underline{D}_{t},\underline{g}_{(t+1):s})\middle|H_t = h_t,A_t = d_t\right].$$ 
We denote $\hat{\psi}_{i,0}^{-k(i)}(\underline{g}_0)=1$ when $t=1$ and $\widehat{Q}_{T}^{\underline{g}_{(T+1):T},-k(i)}(\cdot,\cdot)=\widehat{Q}_{T}^{-k(i)}(\cdot,\cdot)$ when $t=T$. The Q-functions $\left\{Q_{t}^{\underline{g}_{(t+1):T}}(h_{t},d_{t})\right\}_{t=1,\ldots,T}$ can be estimated by a sequential step-wise algorithm such as the fitted Q-evaluation \citep{Munos_2008,Le_et_al_2019} from the reinforcement learning literature.
The estimator (\ref{eq:AIPW_simultaneous}) generalizes the AIPW of \cite{Robins_et_al_1994} beyond the static case, and it is a consistent estimator of the population welfare $W(g)$ if either the the propensity weights $\{\psi_{t}(\cdot)\}_{t=1}^{T}$ or the Q-functions $\{Q_{t}^{\underline{g}_{t:T}}(\cdot)\}_{t=1}^{T}$ are consistently estimated.

Using the AIPW estimator (\ref{eq:AIPW_simultaneous}), we can estimate the optimal DTR as a solution of the estimated welfare maximization: $ \hat{g}^{AIPW} \in \argmax_{g \in \MG} \widehat{W}^{AIPW}(g)$. Remark \ref{rem:doubly_robust} in the main text describes the computational challenge of this method unless the class of DTR $\MG$ is not small.

In what follows, we show the statistical property of $\hat{g}^{AIPW}$. Specifically, we will show the convergence rate of the welfare regret $W_{\MG}^{\ast} - W(\hat{g}^{AIPW})$.  Without loss of generality, we suppose that $\gamma_1=\cdots=\gamma_T =1$. 

Let $\widehat{Q}_{t}^{\underline{g}_{(t+1):T},(n)}(\cdot,\cdot)$ and $\hat{e}_{t}^{(n)}(\cdot,\cdot)$, respectively, denote the estimators of the Q-function $Q_{t}^{\underline{g}_{(t+1):T}}(\cdot,\cdot)$ for $\underline{g}_{(t+1):T}$ and the propensity score $e_t(\cdot,\cdot)$ using size $n$ sample randomly drawn from the population $P$. We denote $\widehat{Q}_{T}^{\underline{g}_{(T+1):T},(n)}(\cdot,\cdot) = \widehat{Q}_{T}^{(n)}(\cdot,\cdot)$ when $t=T$.
We suppose that $\{\widehat{Q}_{t}^{\underline{g}_{(t+1):T},(n)}(\cdot,\cdot)\}_{t=1,\ldots,T}$ and $\{\hat{e}_{t}^{(n)}(\cdot,\cdot)\}_{t=1,\ldots,T}$ satisfy the following assumption.

\medskip

\begin{assumption}\label{asm:rate_of_convergence_AIPW}
(i) There exists $\tau >0$ such that the following holds: For all $t=1,\ldots,T$, $s=1,\ldots,t$, and $m \in \{0,1\}$,
\begin{align*}
    \sup_{\underline{d}_{s:t} \in \{0,1\}^{t-s+1}} & E\left[ \sup_{\underline{g}_{(t+1):T} \in \MG_{(t+1):T}}\left(\widehat{Q}_{t}^{\underline{g}_{(t+1):T},(n)}(H_{t},d_{t}) - Q_{t}^{\underline{g}_{(t+1):T}}(H_{t},d_{t})\right)^{2}\right] \\
    &\times E\left[\left(\frac{1}{\prod_{\ell=s}^{t-m}\hat{e}_{\ell}^{(n)}(H_{\ell},d_{\ell})} - \frac{1}{\prod_{\ell=s}^{t-m}e_{\ell}(H_{\ell},d_{\ell})} \right)^{2}\right] = \frac{o(1)}{n^{\tau}}.
\end{align*}
(ii) There exists $n_0 \in \Natural$ such that for any $n \geq n_0$ and $t=1,\ldots,T$, 
\begin{align*}
     \sup_{d_{t} \in \{0,1\},\ \underline{g}_{(t+1):T} \in \MG_{(t+1):T}}\widehat{Q}_{t}^{\underline{g}_{(t+1):T},(n)}(H_{t},d_{t}) < \infty \mbox{\ \ and\ \ }
     \sup_{d_t \in \{0,1\}}\hat{e}_{t}^{(n)}(H_{t},d_{t}) > 0.
\end{align*}
hold a.s.
\end{assumption}

\medskip

Assumption \ref{asm:rate_of_convergence_AIPW} (i) encompasses the property of double robustness; that is, Assumption \ref{asm:rate_of_convergence_AIPW} (i) is satisfied if either $\widehat{Q}_{t}^{\underline{g}_{(t+1):T},(n)}(\cdot,\cdot)$ is uniformly consistent or $\prod_{s=t}^{T}\hat{e}_{s}^{(n)}(\cdot,\cdot)$ is consistent.  As we will see later, the $\sqrt{n}$-consistency of the regret to zero can be achieved when Assumption \ref{asm:rate_of_convergence_AIPW} (i) holds with $\tau=1$. 
This condition is not very restrictive. For example, Assumption \ref{asm:rate_of_convergence_AIPW} (i) is satisfied when
\begin{align*}
&\sup_{d_{t} \in \{0,1\}}  E\left[\sup_{\underline{g}_{(t+1):T} \in \underline{g}_{(t+1):T}}\left(\widehat{Q}_{t}^{\underline{g}_{(t+1):T},(n)}(H_{t},d_{t}) - Q_{t}^{\underline{g}_{(t+1):T}}(H_{t},d_{t})\right)^{2}\right] = \frac{o(1)}{\sqrt{n}}\mbox{\ and\ }\\
    &\sup_{\underline{d}_{s:t} \in \{0,1\}^{t-s+1}} E\left[\left(\frac{1}{\prod_{\ell=s}^{t}\hat{e}_{\ell}^{(n)}(H_{\ell},d_{\ell})} - \frac{1}{\prod_{\ell=s}^{t}e_{\ell}(H_{\ell},d_{\ell})} \right)^{2}\right] = \frac{o(1)}{\sqrt{n}}
\end{align*}
hold for all $t=1,\ldots,T$ and $s=1,\ldots,t$.  

The following theorem shows the convergence rate of the welfare regret $W_{\MG}^{\ast} - W(\hat{g}^{AIPW})$.

\medskip

\begin{theorem}
\label{thm:main_theorem_AIPW}
Suppose that Assumptions \ref{asm:sequential independence}--\ref{asm:overlap} and \ref{asm:rate_of_convergence_AIPW} hold. Then
\begin{align}
    W_{\MG}^{\ast} - W(\hat{g}^{AIPW}) = O_{p}(n^{-\min\{1/2,\tau/2\}}). \label{eq:DML_result}
\end{align}
\end{theorem}

\begin{proof}
See Appendix \ref{sec:proof_DR_estimator}.
\end{proof}

\bigskip

When Assumption \ref{asm:rate_of_convergence_AIPW} holds with $\tau = 1$, the doubly robust estimator $\hat{g}^{AIPW}$ achieves the minimax optimal convergence rate $n^{-1/2}$ of welfare regret. This result is comparable with those of \cite{Athey_Wager_2020} and \cite{Zhou_et_al_2023} who study doubly robust policy learning in the static setting.

\subsection{Doubly Robust Estimation with Exogenous Variables \label{appendix:DR_approach_2}}

We say that time-varying variables are exogenous when they are not influenced by past treatment assignments. 
When treatment choice at each stage $t$ depends solely on exogenous variables and past treatment information, we can construct a doubly robust approach to estimate the optimal DTRs with less computational cost. 
This scenario is prevalent in various contexts. For example, in the context of sequential job training, variables representing exogenous economic conditions (e.g., the unemployment rate in a country where an individual resides) are not influenced by one's job training and are thus considered exogenous. 
The following assumption formalizes the exogeneity of the covariates.

\begin{assumption}\label{asm:exogenous_covariates}
    For any $t=2,\ldots,T$, $X_t(\underline{d}_{t-1}) = X_t(\underline{d}_{t-1}^{\prime})$ a.s. for any $\underline{d}_{t-1},\underline{d}_{t-1}^{\prime} \in \{0,1\}^{t-1}$.
\end{assumption}


Given our focus on using exogenous variables and past treatments exclusively for treatment choice, we redefine the observed and potential history as $H_t \equiv (\underline{D}_{t-1},\underline{X}_{t})$ and $H_t(\underline{d}_{t-1}) \equiv (\underline{d}_{t-1},\underline{X}_{t}(\underline{d}_{t-1}))$, respectively, where the history used in treatment choice does not include past outcomes.
We suppose that all underlying assumptions (Assumptions \ref{asm:sequential independence}--\ref{asm:overlap}) for the simultaneous maximization approach hold with this revised definition of $H_t$.

We denote the conditional expectation function of the weighted outcome $\gamma_t Y_t$ given $\underline{d}_{t}$ and $\underline{x}_{t}$ by $\mu_{t}(\underline{d}_{t},\underline{x}_{t}) \equiv E_{P}\left[\gamma_t Y_{t} \mid  \underline{D}_{t} = \underline{d}_{t},\underline{X}_{t} = \underline{x}_{t}\right]$.
Using this function, we can identify $W_t(\underline{g}_t)$ as follows.

\begin{lemma}\label{prop:mu_idntification}
    Under Assumptions \ref{asm:sequential independence} and \ref{asm:exogenous_covariates},
    \begin{align*}
        W_t\left(\underline{g}_t\right) = \sum_{\underline{d}_{t} \in \{0,1\}^t}E\left[\mu_t(\underline{d}_{t},\underline{X}_t)\cdot \prod_{s=1}^{t}1\{d_s = g_s(\underline{d}_{s-1},\underline{X}_s)\}\right].
    \end{align*}
\end{lemma}

\begin{proof}
    See Appendix \ref{sec:proof_DR_estimator}.
\end{proof}

For each cross-fitting fold $k$, we estimate $\mu_{t}(\underline{d}_{t},\underline{x}_{t})$ and $e_{t}(h_{t},\underline{a}_{t})$ by $\hat{\mu}_{t}^{-k}(\underline{a}_{t},s_{t})$ and $\hat{e}_{t}^{-k}(d_{t},h_{t})$, respectively, using the observations not included in the $k$-th fold. Any estimation methods, including semi/nonparametric estimators and machine learning methods, can be applied to estimate the nuisance functions $\mu_{t}(\underline{d}_{t},\underline{x}_{t})$ and $e_{t}(h_{t},\underline{a}_{t})$.

For a fixed DTR $g$, we construct an AIPW estimator of the welfare function $W(g)$ as
\begin{align*}
    \widehat{W}^{DR}(g) \equiv \frac{1}{n}\sum_{t=1}^{T}\sum_{i=1}^{n}&\left[  \frac{\gamma_{t}Y_{it}-\hat{\mu}_{t}^{-k(i)}(\underline{D}_{it},\underline{X}_{it})}{\prod_{s=1}^{t}\hat{e}_{s}^{-k(i)}(D_{is},H_{is})} \cdot
 \prod_{s=1}^{t}1\left\{D_{is}=g_{s}(H_{is})\right\}  \right. \\
 &\left. +\ \sum_{\underline{d}_{t} \in \{0,1\}^t}\left\{\hat{\mu}_{t}^{-k(i)}(\underline{d}_{t},\underline{X}_{it})\cdot \prod_{s=1}^{t}1\{d_s = g_s(\underline{d}_{s-1},\underline{X}_{is})\}\right\}\right] .
\end{align*}
We estimate the optimal DTR by maximizing $\widehat{W}^{DR}(g)$ simultaneously over $\MG$. We obtain the DTR estimator $\hat{g}^{DR}\equiv (\hat{g}_{1}^{DR},\ldots,\hat{g}_{T}^{DR})$ as the solution
\begin{align*}
\hat{g}^{DR} \in \arg\max_{g \in \MG} \widehat{W}^{DR}(g).
\end{align*}

In what follows, we show the statistical property of this approach. Let $\hat{\mu}_{t}^{(n)}(\cdot,\cdot)$ and $\hat{e}_{t}^{(n)}(\cdot,\cdot)$ denote estimators of the nuisance functions $\mu_t(\cdot,\cdot)$ and $e_{t}(\cdot,\cdot)$, respectively, using size $n$ sample randomly drawn from the underlying population.
We suppose that the following assumption holds.

\bigskip{}

\begin{assumption}\label{asm:rate_of_convergence_simultaneous}
(i) There exists $\tau >0$ such that the estimators $\left\{\hat{\mu}_{t}^{(n)}(\underline{d}_{t},\underline{x}_{t}):t=1,\ldots,T\right\}$ and $\left\{\hat{e}_{t}^{(n)}(d_{t},h_{t}):t=1,\ldots,T\right\}$ satisfy
\begin{align*}
    \sup_{\underline{d}_{t} \in \{0,1\}^t} &E_{P^n}\left[\left(\hat{\mu}_{t}^{(n)}(\underline{d}_{t},\underline{x}_{t}) - \mu_{t}(\underline{d}_{t},\underline{x}_{t})\right)^{2}\right] \\
    &\times E_{P^n}\left[\left(\frac{1}{\prod_{s=1}^{t}\hat{e}_{s}^{(n)}(d_{s},H_{is})} - \frac{1}{\prod_{s=1}^{t}e_{s}(d_{s},H_{is})} \right)^{2}\right] = \frac{o(1)}{n^{\tau^{\prime}}}
\end{align*}
for all $t=1,\ldots,T$.\\
(ii) There exists $n_0 \in \Natural$ such that for any $n \geq n_0$, $\hat{\mu}_{t}^{(n)}(\underbar{D}_{t}, \underbar{X}_{t}) < \infty$ and $\hat{e}_{t}^{(n)}(D_{t}, H_{t}) > 0$ hold a.s. for any $t$. 
\end{assumption}

\bigskip

As we will see later, the optimal $1/\sqrt{n}$ rate of convergence for the welfare regret of $\hat{g}^{DR}$ can be achieved when Assumption \ref{asm:rate_of_convergence_simultaneous} holds with $\tau^{\prime}=1$. This is not a strong or restrictive condition. For example, Assumption \ref{asm:rate_of_convergence_simultaneous} is satisfied when 
\begin{align*}
    &\sup_{\underline{d}_{t} \in \{0,1\}^t}E_{P^n}\left[\left(\hat{\mu}_{t}^{(n)}(\underline{d}_{t},\underline{X}_{it}) - \mu_{t}(\underline{d}_{t},\underline{X}_{it})\right)^{2}\right]  = \frac{o(1)}{\sqrt{n}} \mbox{ and } \\ 
    &\sup_{\underline{d}_{t} \in \{0,1\}^{t}}E_{P^n}\left[\left(\frac{1}{\prod_{s=1}^{t}\hat{e}_{s}^{(n)}(d_{s},H_{is})} - \frac{1}{\prod_{s=1}^{t}e_{s}(d_{s},H_{is})} \right)^{2}\right] = \frac{o(1)}{\sqrt{n}}
\end{align*}
hold for all $t=1,\ldots,T$. 
These conditions on the convergence rate of the mean squared errors can be satisfied even with nonparametric estimators with relatively mild conditions \citep[see, e.g.,][]{chernozhukov_et_al_2018}. Note also that Assumption \ref{asm:rate_of_convergence_simultaneous} encompasses double-robustness of the estimation of the nuisance components.

The following theorem shows the convergence rate of the welfare regret of the doubly robust estimation of the optimal DTR.

\begin{theorem}\label{thm:main_theorem_simultaneous}
Suppose that Assumptions \ref{asm:sequential independence}--\ref{asm:overlap}, \ref{asm:exogenous_covariates}, and \ref{asm:rate_of_convergence_simultaneous} hold with the redefinitions $H_t \equiv (\underline{D}_{t-1},\underline{X}_{t})$ and $H_t(\underline{d}_{t-1}) \equiv (\underline{d}_{t-1},\underline{X}_{t}(\underline{d}_{t-1}))$. Then
\begin{align}
    W_{\MG}^{\ast} - W(\hat{g}^{DR}) = O_{p}(n^{-\min\{1/2,\tau^{\prime}/2\}}). \label{eq:DML_result}
\end{align}
\end{theorem}

\begin{proof}
See Appendix \ref{sec:proof_DR_estimator}.
\end{proof}

\bigskip

When Assumption \ref{asm:rate_of_convergence_simultaneous} holds with $\tau^{\prime} = 1$, the doubly robust estimator $\hat{g}^{DR}$ achieves the minimax optimal convergence rate $1/\sqrt{n}$ of welfare regret. This result is comparable with those of \citeApp{Athey_Wager_2020} and \citeApp{Zhou_et_al_2023} who study doubly robust policy learning in static settings. 


\section{Proofs \label{appendix:proof}}

This appendix provides the proofs of Lemma \ref{lem:vc_subclass} and Theorems \ref{thm:lower bound}, \ref{thm:estimated propensity score}, \ref{thm:equality_mindied_welfare}, \ref{thm:main_theorem_AIPW}, and \ref{thm:main_theorem_simultaneous}.

\subsection{Proof of Lemma \ref{lem:vc_subclass}}

We first present the definitions of the VC-dimension of a class of indicator functions and relevant concepts as follows.

\medskip
\begin{definition}[VC-dimension of a Class of Indicator Functions]\label{def:VC-dimension of a class of binary functions}
Let $\MZ$ be an arbitrary space and $\MF$ be a class of indicator functions from $\MZ$ to $\{0,1\}$.
For a finite sample $S=\left(z_{1},\ldots,z_{m}\right)$ of $m\ge1$
points in $\MZ$, we define the set of dichotomies as $\Pi_{\mathcal{F}}\left(S\right)\equiv\left\{ \left(f\left(z_{1}\right),\ldots,f\left(z_{m}\right)\right):f\in\mathcal{F}\right\} $,
which is all possible assignments of $S$ by functions in $\MF$.
We say that $S$ is shattered by $\mathcal{F}$ when $\left|\Pi_{\mathcal{F}}\left(S\right)\right|=2^{m}$;
that is, $\mathcal{F}$ realizes all possible dichotomies of $S$.
Then the VC-dimension of $\mathcal{F}$, denoted by $VC(\MF)$, is defined to be the size of the largest sample $S$ shattered by $\mathcal{F}$, i.e., 
\begin{align*}
 VC(\MF) \equiv \max\left\{ m:\max_{S=\left(z_{1},\ldots,z_{m}\right)\subseteq\MZ}\left|\Pi_{\mathcal{F}}\left(S\right)\right|=2^{m}\right\}.    
\end{align*}
We say that $\mathcal{F}$ is a VC-class of indicator functions if $VC(\MF)< \infty$.\footnote{For example, the class of linear treatment rules
\begin{align*}
{\cal G}_{t}=  \left\{ 1\left\{ \beta_{1t}^{\prime}\text{x}_{t}+\beta_{2t}^{\prime}\text{\ensuremath{\underline{d}}}_{t-1}+\beta_{3t}^{\prime}\text{\ensuremath{\underline{y}}}_{t-1}\geq c_{t}\right\} :\left(\beta_{1t}^{\prime},\beta_{2t}^{\prime},\beta_{3t}^{\prime},c_{t}\right)^{\prime}\in\mathbb{R}^{k+2t-1}\right\}.
\end{align*}
has VC-dimension of at most $k+2t-1$.}
\end{definition}
\medskip

We next introduce the VC-dimension for a class of subsets. Let $\MZ$ be any space, and let $\mathbf{z}_{\ell}=\left(z_{1},\ldots,z_{\ell}\right)$
be a finite set of $\ell \geq1$ points in $\MZ$. Given a class
of subsets $\MG \subseteq 2^{\MZ}$ and a subset $\mathbf{\tilde{\Bz}}$
of $\Bz_{\ell}$, we say that $\MG$ picks out $\mathbf{\tilde{\Bz}}$
when $\tilde{\mathbf{z}}\cap G=\tilde{\mathbf{z}}$ holds for some $G\in\mathcal{G}$.
We say that $\MG$ shatters $\mathbf{z}_{\ell}$ when $\left|\left\{ \mathbf{z}_{\ell}\cap G:G\in\MG\right\} \right|=2^{\ell}$ holds,
that is all subsets of $\mathbf{z}_{\ell}$ are picked out by $\mathcal{G}$.
The VC-dimension of the class of subsets $\MG$, denoted by $VC(\MG)$, is defined
as the cardinality of the largest subset $\mathbf{z}_{\ell}$ contained in $\mathcal{Z}$
and shattered by $\MG$, i.e.,
\begin{align*}
   VC(\MG) \equiv \max\{\ell : \max_{\mathbf{z}_{\ell}\subseteq\mathcal{Z}}\left|\left\{ \mathbf{z}_{\ell}\cap G:G\in\mathcal{G}\right\} \right|=2^{\ell}\}.
\end{align*}
We say that a class of subsets $\MG$ is a VC-class of subsets if $VC(\MG)< \infty$.

We next introduce a concept of the subgraph of a real-valued function
$f:\mathcal{Z}\rightarrow\mathbb{R}$ that is the set 
\begin{align*}
\mbox{SG}\left(f\right)\equiv  \left\{ \left(z,t\right)\in\MZ \times\mathbb{R}:t\leq f\left(z\right)\right\} .
\end{align*}
Let $\mbox{SG}\left(\mathcal{F}\right)\equiv \{\mbox{SG}\left(f\right): f\in \mathcal{F} \}$ be a collection of subgraphs over a class of functions $\mathcal{F}$.
We here consider the VC-dimension of $\mbox{SG}(\MF)$ as a complexity measure of $\mathcal{F}$. Note that in the case of $\MF$ being a class of indicator functions, the VC-dimension of $\mbox{SG}\left(\mathcal{F}\right)$ corresponds to the VC-dimension of $\mathcal{F}$ in the sense of Definition \ref{def:VC-dimension of a class of binary functions}. 
We say that a class of functions $\mathcal{F}$ is a VC-subgraph class of functions if $VC(\mbox{SG}\left(\mathcal{F}\right)) < \infty$.

The following lemmas are auxiliary lemmas for Lemma \ref{lem:vc_subclass}.

\medskip

\begin{lemma} \label{lem:Sauer's lemma} (Sauer's lemma; see, for example, Theorem 3.6.2 of \cite{Gine_Nickl_2016}) 
Let $\MZ$ be any space, and $\left(z_{1},\ldots,z_{\ell}\right)$
be a finite set of $\ell \geq1$ points in $\MZ$. Let $\MG$ be a VC-class of subsets in $\MZ$ with $VC(\MG)=v < \infty$. Let $\Delta_{\ell}(\MG,(z_1,\ldots,z_\ell))$ denote the number of subsets of $(z_1,\ldots,z_\ell)$ that are picked out by $\MG$, i.e.,
\begin{align*}
    \Delta_{\ell}\left(\MG,(z_1,\ldots,z_\ell)\right) \equiv 
    \left| \left\{(z_1,\ldots,z_\ell) \cap G: G \in \MG \right\} \right|.
\end{align*}
Then the following holds:
\begin{align*}
    \max_{(z_1,\ldots,z_\ell) \subseteq \MZ}\Delta_{\ell}\left(\mathcal{G},\left(z_1,\ldots,z_\ell\right)\right) \leq \sum_{j=0}^{v}\left(\begin{array}{c}
\ell\\
j
\end{array}\right)
\leq \left(\frac{\ell e}{v}\right)^{v}.
\end{align*}
\end{lemma}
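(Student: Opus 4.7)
The plan is to prove the first inequality $\max \Delta_{\ell}(\MG,\cdot)\leq \sum_{j=0}^{v}\binom{\ell}{j}$ by induction on $\ell+v$, and then to deduce the second inequality via an elementary binomial manipulation. The base cases are immediate: if $\ell=0$, then $\Delta_{0}=1=\binom{0}{0}$; if $v=0$, then $\MG$ cannot pick out both $\emptyset$ and a singleton from any single point, so $\Delta_{\ell}\leq 1 = \binom{\ell}{0}$ for all $\ell$.

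For the inductive step ($\ell,v\geq 1$), fix $(z_{1},\ldots,z_{\ell})$, write $A\equiv \{z_{1},\ldots,z_{\ell-1}\}$, and let $\mathcal{T}\equiv \{\{z_{1},\ldots,z_{\ell}\}\cap G:G\in \MG\}$. Define two auxiliary families of subsets of $A$, namely $\mathcal{T}_{1}\equiv \{T\cap A:T\in \mathcal{T}\}$ and $\mathcal{T}_{2}\equiv \{S\subseteq A:S\in \mathcal{T}\text{ and }S\cup\{z_{\ell}\}\in \mathcal{T}\}$. Projecting each element of $\mathcal{T}$ onto $A$ and counting how many traces in $\mathcal{T}$ have a given projection (either one or two) yields the key identity $|\mathcal{T}|=|\mathcal{T}_{1}|+|\mathcal{T}_{2}|$. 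Each $\mathcal{T}_{i}$ is realized as the trace on $A$ of a subfamily $\MG_{i}\subseteq \MG$, and clearly $VC(\MG_{1})\leq VC(\MG)=v$. The crucial shifting observation is that $VC(\MG_{2})\leq v-1$: if $\mathcal{T}_{2}$ shattered some $S\subseteq A$, then for every $S'\subseteq S$ both $S'$ and $S'\cup\{z_{\ell}\}$ would appear in $\mathcal{T}$, so $\MG$ would shatter the $(|S|+1)$-point set $S\cup\{z_{\ell}\}$, contradicting $VC(\MG)=v$ as soon as $|S|\geq v$. Applying the induction hypothesis to $\MG_{1}$ and $\MG_{2}$ on the $\ell-1$ points of $A$ and using Pascal's identity $\binom{\ell-1}{j}+\binom{\ell-1}{j-1}=\binom{\ell}{j}$ yields
\begin{align*}
|\mathcal{T}| \leq \sum_{j=0}^{v}\binom{\ell-1}{j}+\sum_{j=0}^{v-1}\binom{\ell-1}{j}=\sum_{j=0}^{v}\binom{\ell}{j}.
\end{align*}

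For the second inequality, multiplying each $\binom{\ell}{j}$ by $(\ell/v)^{v-j}\geq 1$ for $0\leq j\leq v$ and then extending the sum to all $j\leq \ell$ gives
\begin{align*}
\sum_{j=0}^{v}\binom{\ell}{j}\leq \left(\frac{\ell}{v}\right)^{v}\sum_{j=0}^{\ell}\binom{\ell}{j}\left(\frac{v}{\ell}\right)^{j}=\left(\frac{\ell}{v}\right)^{v}\left(1+\frac{v}{\ell}\right)^{\ell}\leq \left(\frac{\ell e}{v}\right)^{v},
\end{align*}
where the final step uses $(1+v/\ell)^{\ell}\leq e^{v}$.

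The main obstacle is the shifting step controlling $VC(\MG_{2})$, together with the careful choice of a representative $G\in \MG$ with $z_{\ell}\in G$ for each $S\in \mathcal{T}_{2}$; these combinatorial bookkeeping details are the heart of Sauer--Shelah. Once the decomposition $|\mathcal{T}|=|\mathcal{T}_{1}|+|\mathcal{T}_{2}|$ and the VC-dimension drop for $\MG_{2}$ are in hand, both the induction and the final binomial estimate are routine.
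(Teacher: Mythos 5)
The paper does not prove this lemma at all: it is stated as a known result with a pointer to Theorem 3.6.2 of Gin\'{e} and Nickl (2016), so there is no in-paper argument to compare against. Your proposal supplies a correct, self-contained proof via the standard Sauer--Shelah double induction on $\ell+v$: the fiber-counting identity $|\mathcal{T}|=|\mathcal{T}_{1}|+|\mathcal{T}_{2}|$ is right, and the key step --- that shattering of $S\subseteq A$ by $\mathcal{T}_{2}$ forces $\MG$ to shatter $S\cup\{z_{\ell}\}$, hence $VC(\mathcal{T}_{2})\leq v-1$ --- is exactly the shifting argument that makes the induction close, with Pascal's identity finishing the count. This is essentially the same combinatorial proof found in the cited reference, so you have not taken a different route so much as filled in the proof the paper outsources. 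Two minor points worth being explicit about: the fiber argument tacitly assumes $z_{1},\ldots,z_{\ell}$ are distinct (repeated points can only decrease $\Delta_{\ell}$, so this is harmless), and your derivation of the second inequality uses $(\ell/v)^{v-j}\geq 1$, which requires $\ell\geq v$; the bound $\left(\ell e/v\right)^{v}$ indeed fails for $\ell<v$ (e.g.\ $\ell=1$, $v=2$), so that hypothesis --- standard for this form of Sauer's lemma and satisfied in every application in the paper, where $\ell$ is the sample size --- should be stated rather than left implicit.
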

\medskip

\begin{lemma}\label{lem:lemma A1_KT2018}  Let $\mathcal{Z}=\mathcal{Z}_{1}\times\mathcal{Z}_{2}$
be any product space, and $\mathcal{G}$ be a class of indicator functions from $\MZ_{2}$ to $ \{0,1\}$. Suppose that $\mathcal{G}$ has VC-dimension $v\geq0$ in the sense of Definition \ref{def:VC-dimension of a class of binary functions}.
Fix a function $f$ on $\mathcal{Z}$, and define a class of functions
on $\mathcal{Z}$:
\begin{align*}
\mathcal{F}_{\mathcal{G}}= \left\{ f \cdot g:g\in\MG \right\} .
\end{align*}
Then $\mathcal{F}_{\mathcal{G}}$ is a VC-subgraph class of functions
with $VC(\mbox{SG}(\MF_\MG)) \leq v$. 
\end{lemma}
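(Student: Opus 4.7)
The plan is to argue by contradiction: assume $\mbox{SG}(\MF_{\MG})$ shatters some set of $v+1$ points $\{(z_i,t_i)\}_{i=1}^{v+1}$ in $\MZ\times\Real$, where each $z_i=(z_{i,1},z_{i,2})\in\MZ_1\times\MZ_2$, and derive a contradiction with $VC(\MG)=v$ by exhibiting a shattered set of $v+1$ points in $\MZ_2$.

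The key observation is that membership of $(z_i,t_i)$ in $\mbox{SG}(f\cdot g)$ reduces to a Boolean condition in $g(z_{i,2})\in\{0,1\}$: the point lies in $\mbox{SG}(f\cdot g)$ iff $t_i\leq 0$ when $g(z_{i,2})=0$, and iff $t_i\leq f(z_i)$ when $g(z_{i,2})=1$. Comparing $t_i$ with $0$ and with $f(z_i)$ gives four cases. In two of them, namely $t_i\leq\min\{0,f(z_i)\}$ (always in) and $t_i>\max\{0,f(z_i)\}$ (always out), the membership is independent of $g$, so the dichotomy flipping the status of $i$ cannot be realized by any $g\in\MG$ — contradicting the assumed shattering. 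Hence every index $i$ falls in one of the remaining two cases, in each of which membership is a fixed bijective Boolean function of $g(z_{i,2})$: either $\mathbf{1}\{(z_i,t_i)\in\mbox{SG}(f\cdot g)\}=g(z_{i,2})$ or $\mathbf{1}\{(z_i,t_i)\in\mbox{SG}(f\cdot g)\}=1-g(z_{i,2})$.

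The conclusion then follows by translating the shattering hypothesis into $\MZ_2$. Partition $\{1,\ldots,v+1\}$ into the sets $D$ and $C$ of indices belonging to the matching and opposite cases, respectively. For any target pattern $(b_1,\ldots,b_{v+1})\in\{0,1\}^{v+1}$, take the subset $S\subseteq\{1,\ldots,v+1\}$ defined by $i\in S$ iff ($i\in D$ and $b_i=1$) or ($i\in C$ and $b_i=0$); shattering of $\{(z_i,t_i)\}$ produces $g_S\in\MG$ with $g_S(z_{i,2})=b_i$ for every $i$. This forces the points $z_{1,2},\ldots,z_{v+1,2}$ to be distinct (two coinciding points could not accommodate a pattern with different bits at those coordinates) and shows that $\MG$ shatters them, contradicting $VC(\MG)=v$. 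Therefore $VC(\mbox{SG}(\MF_{\MG}))\leq v$.

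I do not expect any deep obstacle in this proof; the argument is essentially a clean case analysis plus a bijection between dichotomies of $\{(z_i,t_i)\}$ and dichotomies of $\{z_{i,2}\}$. The only item requiring care is the bookkeeping at the boundary cases (e.g.\ $f(z_i)=0$, which immediately puts $i$ in an "always in/out" case and is therefore ruled out), and the observation that distinctness of the $z_{i,2}$'s need not be assumed up front — it is a consequence of the shattering conclusion.
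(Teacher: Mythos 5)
Your proposal is correct and follows essentially the same route as the paper's proof: argue by contradiction, observe that $(z_i,t_i)\in\mbox{SG}(f\cdot g)$ iff $t_i\le f(z_i)\,g(z_{i,2})$ so that membership is a fixed Boolean function of $g(z_{i,2})$, discard the ``always in/always out'' points as incompatible with shattering, and transfer the shattering of the subgraph points to a shattering of the $\MZ_2$-coordinates by $\MG$ (the paper phrases this step via one non-realizable dichotomy and your indicator $\delta_j$-style bookkeeping matches its $\delta_j$ exactly). Your write-up is, if anything, a cleaner and more explicit rendering of the same argument, including the correct handling of the boundary case $f(z_i)=0$.
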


\begin{proof}
We prove the statement by contradiction. Suppose that there exist some $(v+1)$-points $\left\{ \left(z_{1},t_{1}\right),\ldots,\left(z_{v+1},t_{v+1}\right)\right\} \equiv \left\{ \left(z_{1,1},z_{2,1},t_{1}\right),\ldots,\left(z_{1,v+1},z_{2,v+1},t_{v+1}\right)\right\} \subset \MZ_{1}\times \MZ_{2}\times\mathbb{R}$ that are shattered by $\mbox{SG}\left(\mathcal{F}_{\mathcal{G}}\right)$.

When $t\leq f\left(z\right)\wedge0$ or $t>f\left(z\right)\vee0$
for some $\left(z,t\right)\in\left\{ \left(z_{1},t_{1}\right),\ldots, \left(z_{v+1},t_{v+1}\right)\right\} $,
$\mbox{SG}\left(\mathcal{F}_{\mathcal{G}}\right)$ cannot pick out
$\left\{ \left(z_{1},t_{1}\right),\ldots\left(z_{v+1},t_{v+1}\right)\right\} \backslash\{\left(z,t\right)\}$
or $\left\{ \left(z,t\right)\right\} $. Thus, we need to consider
only the case that $\left(f\left(z\right)\wedge0\right)<t\leq\left(f\left(z\right)\vee0\right)$
for all $\left(z,t\right)\in\left\{ \left(z_{1},t_{1}\right),\ldots\left(z_{v+1},t_{v+1}\right)\right\} $.
In the remaining case, we indicate $\delta_{j}=1$ if $t_{j}\leq f(z_{j})$ and $\delta_{j}=0$ otherwise.
Since the VC-dimension of $\MG$ is at most $v$ in the sense of Definition \ref{def:VC-dimension of a class of binary functions}, there exists a subset $S\equiv (\tilde{z}_{2,1},\ldots,\tilde{z}_{2,m})$ (for some $m>0$) of $\left\{ z_{2,1},\ldots,z_{2,v+1}\right\}$ such that $(g(\tilde{z}_{2,1}),\ldots,g(\tilde{z}_{2,m}))\neq (1,\ldots,1)$ and $(g(z_{2,1}),\ldots,g(z_{2,v+1})\backslash(g(\tilde{z}_{2,1}),\ldots,g(\tilde{z}_{2,m}))\neq (0,\ldots,0)$ for any $g \in \MG$.
Then $\mbox{SG}\left(\mathcal{F}_{\mathcal{G}}\right)$
cannot pick out the following subset: 
\begin{align*}
\left\{ \left(z_{j},t_{j}\right):\left(z_{2,j}\in S\mbox{ and }\delta_{j}=1\right)\mbox{ or }\left(z_{2,j}\notin S\mbox{ and }\delta_{j}=0\right)\right\} ,
\end{align*}
because this set of points could be contained in $\mbox{SG}\left(f\cdot g\right)$
only when $\mbox{sign}(t_{j})=\mbox{sign}(g\left(z_{2,j}\right)-1/2)$ for all $j=1,\ldots,v+1$. 
This contradicts the assumption that $\left\{ \left(z_{1},t_{1}\right),\ldots\left(z_{v+1},t_{v+1}\right)\right\} \subset\mathcal{Z}\times\mathbb{R}$ is shattered by $\mbox{SG}\left(\mathcal{F}_{\mathcal{G}}\right)$.
\end{proof}
\medskip
\begin{proof}[Proof of Lemma \ref{lem:vc_subclass}]
We prove for the case that $s=1$ and $t=T$. The result follows for the remaining cases by a similar argument. Let $m$ be an arbitrary integer and $\left(z_{1},\ldots,z_{m}\right)$ be $m$ arbitrary points on
$\mathcal{Z}$. For each $t$, fixing $g_{s}\in\MG_{s}$ for all $s\neq t$, define a class of functions 
\begin{align*}
\widetilde{{\cal F}}_{t}\equiv  \left\{ f\left(z\right)=1\left\{ g_{1}\left(h_{1}\right)=d_{1},\ldots,g_{T}\left(h_{T}\right)=d_{T}\right\} \cdot r(z):g_{t}\in\mathcal{G}_{t}\right\}, 
\end{align*}
 and, fixing $g_{s}\in\mathcal{G}_{s}$ for all $s> t$, define
\begin{align*}
\widetilde{{\cal F}}_{1:t} \equiv & \left\{ f\left(z\right)=1\left\{ g_{1}\left(h_{1}\right)=d_{1},\ldots,g_{T}\left(h_{T}\right)=d_{T}\right\} \cdot r(z):\left(g_{1},\ldots,g_{t}\right)\in\mathcal{G}_{1}\times\cdots\times\mathcal{G}_{t}\right\} .
\end{align*}

We first consider $\widetilde{\MF}_{1}$, or equivalently $\widetilde{\MF}_{1:1}$.
Applying Lemma \ref{lem:lemma A1_KT2018} to $\widetilde{\MF}_{1}$ shows that $\widetilde{\MF}_{1}$ is a VC-subgraph of functions with $VC(\mbox{SG}(\widetilde{\MF}_{1})) \leq v_{1}$.
Therefore, from Lemma \ref{lem:Sauer's lemma}, SG$(\widetilde{\MF_{1}})$
can pick out at most $O(m^{v_{1}})$ subsets from $(z_{1},\ldots,z_{m})$.

Next we study $\widetilde{\MF}_{2}$ and then $\widetilde{\MF}_{1:2}$.
Let $(z_{1},\ldots,z_{m^{\prime}})$ be an arbitrary subset
picked out by $\mbox{SG}(\tilde{f_{1}})$ where $\tilde{f}_{1}\in\widetilde{\MF}_{1:1}$
has a fixed $g_{1}\in{\cal G}_{1}$. Lemmas \ref{lem:Sauer's lemma} and \ref{lem:lemma A1_KT2018} show that SG$(\widetilde{\mathcal{F}}_{2})$ can pick out at most $O(m^{v_{2}})$ subsets from $(z_{1},\ldots,z_{m^{\prime}})$. Because SG$(\widetilde{\mathcal{F}}_{2})$ can pick out at most $O(m^{v_{2}})$
subsets from each subset of $(z_{1},\ldots,z_{m})$ and
SG$(\widetilde{{\cal F}}_{1:1})$ can pick out at most $O(m^{v_{1}})$
subsets from $(z_{1},\ldots,z_{m})$, by varying $(g_{1},g_{2})$
over $\mathcal{G}_{1}\times\mathcal{G}_{2}$, SG$(\widetilde{\mathcal{F}}_{1:2})$ picks out at most $O(m^{v_{1}+v_{2}})$ subsets from $(z_{1},\ldots,z_{m})$.

For $s\geq2$, suppose that $\widetilde{\MF}_{1:s-1}$
can pick out at most $O(m^{\sum_{t=1}^{s-1}v_{t}})$ subsets from $(z_{1},\ldots,z_{m})$.
Let $(z_{1},\ldots,z_{m^{\prime}})$ be an arbitrary subset
picked out by SG$(\tilde{f}_{s-1})$ where $\tilde{f}_{s-1}\in\widetilde{\MF}_{1:s-1}$ has fixed $g_{1},\ldots,g_{s-1}$. From $(z_{1},\ldots,z_{m^{\prime}})$,
SG$(\tilde{\MF}_{s})$ can pick out at most $O(m^{v_{s}})$
subsets. Combining this result with the fact that $\widetilde{\MF}_{1:s-1}$
can pick out at most $O(m^{\sum_{t=1}^{s-1}v_{t}})$ subsets from
$(z_{1},\ldots,z_{m})$ leads to the conclusion that $\widetilde{\MF}_{1:s}$
picks out at most $O(m^{\sum_{t=1}^{s}v_{t}})$ subsets from
$(z_{1},\ldots,z_{m})$.

Recursively, we can prove that $\widetilde{\mathcal{F}}_{1:T}$ picks
out at most $O(m^{\sum_{t=1}^{T}v_{t}})$ subsets from $(z_{1},\ldots,z_{m})$. Hence, SG$(\widetilde{\mathcal{F}}_{1:T})$ is a VC-subgraph class of functions with VC-dimension less than or equal to $\sum_{t=1}^{T}v_{t}$.
\end{proof}
\medskip

\subsection{Proof of Theorem \ref{thm:lower bound}.}

This appendix present the proof of Theorem \ref{thm:lower bound}. The following is its auxiliary lemma, where we use the same strategy as the proofs of Theorem 2 of \citet{massart2006risk} and Theorem 2.2 of \cite{Kitagawa_Tetenov_2018a}, but extend it to the dynamic treatment setting.

\medskip

\begin{lemma}\label{lem:lower bound}
Suppose that Assumptions \ref{asm:sequential independence}, \ref{asm:bounded outcome}, and \ref{asm:overlap} hold for any distribution $P\in{\cal P}\left(M, \kappa, \MG\right)$ and Assumption \ref{asm:vc-class} holds for ${\cal G}$. Fix $t \in \{1,\ldots,T\}$, and let $\gamma_{t}=1$ and $\gamma_{s}=0$ for all $s\neq t$. Then, for
any DTR $\hat{g}\in\mathcal{G}$ as a function of $\left(Z_{1},\ldots,Z_{n}\right)$,
\begin{align*}
\sup_{P\in{\cal P}\left(M, \kappa, \MG\right)}E_{P^{n}}\left[W_{{\cal G}}^{\ast}-W\left(\hat{g}\right)\right] & \geq2^{-1}\exp\left(-2\right)M_{t}\sqrt{\frac{v_{1:t}}{n}}
\end{align*}
holds for all $n\geq16v_{1:t}$. This result holds irrespective of whether Assumption \ref{asm:first-best} additionally holds for a pair of $\MG$ and any $P \in \MP(M, \kappa, \MG)$ or not.
\end{lemma}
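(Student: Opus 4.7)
Because $\gamma_s = 0$ for all $s\neq t$, the welfare collapses to $W(g) = E_P[Y_t(g_1(H_1),\ldots,g_t(H_t))]$, so only the rules $g_1,\ldots,g_t$ are relevant and the effective capacity of the model is $v_{1:t}$. My plan is to adapt the Assouad-type minimax lower bound used in the proof of Theorem 2.2 of \citet{Kitagawa_Tetenov_2018a} (originating from Theorem 2 of \citet{massart2006risk}) to the dynamic setting. By definition of $v_{1:t}$, there exist $v_{1:t}$ points $z^{(1)},\ldots,z^{(v_{1:t})}\in\MZ$---with associated histories $h_t^{(j)}$ and treatment sequences $\underline{d}_t^{(j)}$---that are shattered: every dichotomy $b\in\{0,1\}^{v_{1:t}}$ is realised by some $(g_1^b,\ldots,g_t^b)\in\MG_1\times\cdots\times\MG_t$ via $b_j = \prod_{s\leq t}\mathbf{1}\{g_s^b(h_s^{(j)}) = d_s^{(j)}\}$.

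For each $\sigma\in\{-1,+1\}^{v_{1:t}}$ I will build $P_\sigma\in\MP(M,\kappa,\MG)$ in which (i) $H_t$ is uniformly distributed on $\{h_t^{(j)}\}_j$; (ii) propensities at each stage are chosen to satisfy Assumption \ref{asm:overlap}; and (iii) conditionally on $H_t = h_t^{(j)}$, $Y_t(\underline{d}_t^{(j)})$ takes values in $\{-M_t/2,+M_t/2\}$ with mean $\sigma_j\epsilon M_t$ for a parameter $\epsilon>0$ to be chosen, while $Y_t(\underline{d}_t)\equiv 0$ for $\underline{d}_t\neq \underline{d}_t^{(j)}$. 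Writing $b_j(g) := \prod_{s\leq t}\mathbf{1}\{g_s(h_s^{(j)})=d_s^{(j)}\}$, the welfare reduces to $W(g) = (\epsilon M_t/v_{1:t})\sum_j\sigma_jb_j(g)$, and shattering yields $W_\MG^{\ast} = (\epsilon M_t/v_{1:t})\,|\{j:\sigma_j=1\}|$. Subtracting gives $W_\MG^{\ast} - W(g) = (\epsilon M_t/v_{1:t})\sum_j\mathbf{1}\{b_j(g)\neq\mathbf{1}\{\sigma_j=+1\}\}$, so the welfare-regret equals $\epsilon M_t$ times the fraction of shattered coordinates at which $g$ errs.

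Averaging over $\sigma$ uniformly and exchanging sums reduces the supremum to a sum of $v_{1:t}$ binary testing problems between mixtures of $P_\sigma^n$ over $\sigma$'s with $\sigma_j=\pm 1$. For $\sigma,\sigma'$ differing only in coordinate $j$, the distributions coincide outside an event of probability $1/v_{1:t}$, on which the outcome's Bernoulli parameter shifts by $2\epsilon$, so $KL(P_\sigma,P_{\sigma'}) = O(\epsilon^2/v_{1:t})$. Tensorising to $n$ draws and applying a sharp TV bound (Bretagnolle-Huber, tighter than Pinsker) produces a tail of the form $1 - \tfrac{1}{2}\exp(-Cn\epsilon^2/v_{1:t})$, which is the origin of the $\exp(\cdot)$ factor in the stated constant. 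Taking $\epsilon\asymp\sqrt{v_{1:t}/n}$---with the condition $n\ge 16 v_{1:t}$ keeping $\epsilon$ small enough for the two-point construction to remain valid---and optimising constants as in \citet{massart2006risk} produces the bound $\tfrac{1}{2}\exp(-2)M_t\sqrt{v_{1:t}/n}$. Since shattering alone supplies the required maximisers, the argument never invokes Assumption \ref{asm:first-best}, yielding the final sentence of the lemma. The main technical obstacle will be ensuring that every $P_\sigma$ lies in $\MP(M,\kappa,\MG)$---in particular verifying the overlap bound uniformly and arranging propensities compatibly with both members of each shattered dichotomy---while tracking constants sharply enough to recover $\exp(-2)/2$ rather than a generic universal constant.
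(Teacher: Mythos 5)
Your proposal is correct in its overall architecture and, in fact, shares the paper's construction of the hard family of distributions: a set of $v_{1:t}$ trajectories shattered by the product class, a bit vector indexing which treatment path is beneficial at each shattered point, outcomes supported on $\{-M_t/2, M_t/2\}$ with mean $\pm\epsilon M_t$, and the identity expressing the regret as $\epsilon M_t$ times the fraction of shattered coordinates on which $\hat{g}$ misclassifies the bit. Where you genuinely diverge from the paper is in the core probabilistic step. The paper puts an i.i.d.\ $\mathrm{Ber}(1/2)$ prior on the bits, computes the posterior $\pi(b_j=1\mid Z_1,\ldots,Z_n)$ in closed form, lower-bounds the Bayes classification risk by $\tfrac{\delta}{2}a^{-|k_j^+-k_j^-|}$ with $a=(1+2\delta)/(1-2\delta)$, and then applies Cauchy--Schwarz and Jensen to the exponent; you instead run an Assouad-type reduction to $v_{1:t}$ two-point testing problems and control the testing error via KL and the Bretagnolle--Huber inequality. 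Both routes are standard and both deliver the $\sqrt{v_{1:t}/n}$ rate; the paper's explicit-posterior computation is more elementary (no information-theoretic inequalities beyond Jensen) while yours is more modular and would generalize more readily to other loss structures.

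The one concrete caveat concerns the constant. Your claim that Bretagnolle--Huber plus "optimising constants as in \cite{massart2006risk}" recovers exactly $\tfrac{1}{2}\exp(-2)$ is not something you can take for granted: a generic Assouad argument yields a prefactor of the form $\tfrac{1}{4}e^{-cn\epsilon^2/v_{1:t}}$ with $c$ determined by the Bernoulli KL divergence, and after setting $\epsilon\asymp\sqrt{v_{1:t}/n}$ the resulting constant must be verified explicitly. Note also that the paper's own proof of this lemma actually concludes with $\tfrac{1}{2}\exp(-4)M_t\sqrt{v_{1:t}/n}$ (the bound quoted in Theorem \ref{thm:lower bound}), not the $\tfrac{1}{2}\exp(-2)$ appearing in the lemma's statement, so the constant is already loose in the source; you should treat the exponential factor as whatever your chosen inequalities deliver rather than reverse-engineering $\exp(-2)$. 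Finally, you correctly flag that placing every $P_\sigma$ inside $\MP(M,\kappa,\MG)$ requires arranging the propensity scores so that Assumption \ref{asm:overlap} holds at the shattered histories for both treatment arms reachable by $\MG$; this does need to be done explicitly (e.g.\ by taking $D_s$ conditionally $\mathrm{Ber}(1/2)$ and defining the potential outcomes off the designated path to be zero), and is a step the paper itself treats only implicitly.
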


\begin{proof}
The proof follows by constructing a specific subclass of ${\cal P}\left(M, \kappa, \MG\right)$, for which the worst-case average welfare regret can be bounded from below.
We here prove the statement for the lemma in the case that $t=T$ (i.e., $\gamma_{T}=1$ and $\gamma_{s}=0$ for $s\neq T$). 
The proof follows for the remaining cases by a similar argument.
For simplicity, we normalize the support of the potential outcomes to $Y_{t}\left(\text{\ensuremath{\underline{d}}}_{t}\right)\in \left[-1/2,1/2\right]$
for all $\underline{d}_t \in \{0,1\}^{t}$ and $t=1,\ldots,T$. We also suppose that $X_{t}(\underline{d}_{t-1}) = X_{t}(\underline{d}_{t-1}^{\prime})$ for any $t \geq 2$ and any $\underline{d}_{t-1},\underline{d}_{t-1}^{\prime} \in \{0,1\}^{t-1}$; that is, the covariates do not depend on the past treatments. Let $\mathbf{1}_{T}$ denote a $T$-dimensional vector of ones.

We construct a specific subclass $\MP^{\ast}\subset{\cal P}(\mathbf{1}_{T},\kappa)$ as follows. Let $\tilde{Z} \equiv ((D_{t},X_{t},Y_{t})_{t=1}^{T-1},D_{T},X_{T})$, which is a vector of all the observed variables excluding $Y_{T}$, and denote its space by $\widetilde{\MZ}$.  Let $\tilde{z}_{1},\ldots,\tilde{z}_{v_{1:T}}$
be $v_{1:T}$ points in $\widetilde{\MZ}$ such that a 
set $\left\{(\tilde{z}_{1},1/2),\ldots,(\tilde{z}_{v_{1:T}},1/2)\right\}$
is shattered by a collection of indicator functions
\begin{align*}
\left\{ f\left(z\right)=1\left\{ g_{1}\left(h_{1}\right)=d_{1},\cdots,g_{T}\left(h_{T}\right)=d_{T}\right\} :\left(g_{1},\ldots,g_{T}\right)\in\mathcal{G}_{1}\times\cdots\times\mathcal{G}_{T}\right\}
\end{align*}
in the sense of Definition \ref{def:VC-dimension of a class of binary functions}.
For $j=1,\ldots,v_{1:T}$, denote
    $\tilde{z}_{j}=\left(\left(d_{tj},x_{tj},y_{tj}\right)_{t=1}^{T-1},d_{Tj},x_{Tj}\right)\in \widetilde{\MZ}$.
For any $P\in{\cal P}^{\ast}$, we suppose for the marginal distributions of $\tilde{Z} $ on $\widetilde{\MZ}$ that
$P\left(\tilde{Z}=\tilde{z}_{j}\right)=1/v_{1:T}$ for each $j=1,\ldots,v_{1:T}$.
Let $\mathbf{b}=\left(b_{1},\ldots,b_{v_{1:T}}\right)\in\left\{ 0,1\right\} ^{v_{1:T}}$
be a bit vector that indexes a member of ${\cal P}^{\ast}$. Hence
${\cal P}^{\ast}$ consists of $2^{v_{1:T}}$ distinct DGPs. For
each $j=1,\ldots,v_{1:T}$, depending on $\mathbf{b}$, we construct
the following conditional distribution of $Y_{T}\left(\text{\ensuremath{\underline{d}}}_{T}\right)$
given $\tilde{Z}=\tilde{z}_{j}$: if $b_{j}=1$, 
\begin{align*}
Y_{T}\left(\text{\ensuremath{\underline{d}}}_{Tj}\right)= & \begin{cases}
\begin{array}{c}
1/2\\
-1/2
\end{array} & \begin{array}{c}
\mbox{w.p. }1/2+\delta\\
\mbox{w.p. }1/2-\delta
\end{array}\end{cases},
\end{align*}
otherwise 
\begin{align*}
Y_{T}\left(\text{\ensuremath{\underline{d}}}_{Tj}\right)= & \begin{cases}
\begin{array}{c}
1/2\\
-1/2
\end{array} & \begin{array}{c}
\mbox{w.p. }1/2-\delta\\
\mbox{w.p. }1/2+\delta
\end{array}\end{cases},
\end{align*}
where $\underline{d}_{Tj}$ is the history of the realized treatments from stage $1$ to $T$ when $\tilde{Z} = \tilde{z}_{j}$, and $\delta\in\left[0,1/2\right]$ is chosen properly in a later
step of the proof. When $b_{j}=1$, $E_P\left[Y_{T}\left(\text{\ensuremath{\underline{d}}}_{Tj}\right)\mid\tilde{Z}=\tilde{z}_{j}\right]=\delta$;
otherwise, $E_P\left[Y_{T}\left(\text{\ensuremath{\underline{d}}}_{Tj}\right)\mid\tilde{Z}=\tilde{z}_{j}\right]=-\delta$.
For conditional distributions of the other potential outcomes $Y_{T}\left(\text{\ensuremath{\underline{d}}}_{T}\right)$
given $\tilde{Z}=\tilde{z}_{j}$, we set $Y_{T}\left(\text{\ensuremath{\underline{d}}}_{T}\right)=0$
with probability 1 if $\text{\ensuremath{\underline{d}}}_{T}\neq\text{\ensuremath{\underline{d}}}_{Tj}$.

When $\mathbf{b}$ is known, an optimal DTR, denoted by $g_{\mathbf{b}}^{\ast}=\left(g_{1,\mathbf{b}}^{\ast},\ldots,g_{T,\mathbf{b}}^{\ast}\right)$, is such that
\begin{align*}
(g_{1,\mathbf{b}}^{\ast}(h_{1j}),\ldots,g_{T,\mathbf{b}}^{\ast}(h_{Tj}))= & \begin{cases}
\begin{array}{c}
\text{\ensuremath{\underline{d}}}_{Tj}\\
\left(1-d_{1j},\cdots,1-d_{Tj}\right)
\end{array} & \begin{array}{c}
\mbox{if }b_{j}=1\\
\mbox{otherwise}
\end{array}\end{cases}
\end{align*}
for $j=1,\ldots,v_{1:T}$, where $h_{tj}$ is the history information in $\tilde{z}_{j}$ up to stage $t$. Such a DTR is feasible in ${\cal G}$. Then, the optimized social
welfare given $\mathbf{b}$ is 
\begin{align*}
W\left(g_{\mathbf{b}}^{\ast}\right)=  \frac{1}{v_{1:T}}\delta\sum_{j=1}^{v_{1:T}}b_{j}.
\end{align*}
Let $\hat{g}=\left(\hat{g}_{1},\ldots,\hat{g}_{T}\right):{\cal H}_{1}\times\cdots\times{\cal H}_{T}\mapsto\left\{ 0,1\right\} ^{T}$
be an arbitrary DTR depending on the sample $\left(Z_{1},\ldots,Z_{n}\right)$,
and let $\hat{\mathbf{b}}\in\left\{ 0,1\right\} ^{v_{1:T}}$ be a
binary vector such that its $j$-th element is given by
\begin{align*}
    \hat{b}_{j}=1\left\{ \hat{g}_{1}\left(h_{1j}\right)=d_{1j},\ldots,\hat{g}_{T}\left(h_{Tj}\right)=d_{Tj}\right\}.
\end{align*}
We define by $g\left(\mathbf{b}\right)$ a prior of $\mathbf{b}$
such that $b_{1},\ldots,b_{v_{1:T}}$ are i.i.d and $b_{1}\sim\mbox{Ber}\left(1/2\right)$.

Then the maximum average welfare regret on $\MP\left(\mathbf{1}_{T},\kappa\right)$ satisfies the following:
\begin{align}
 & \sup_{P\in \MP\left(\mathbf{1}_{T},\kappa\right)}E_{P^{n}}\left[W_{{\cal G}}^{\ast}-W\left(\hat{g}\right)\right]\nonumber \\
\geq &\ \sup_{P_{\mathbf{b}}\in{\cal P}^{\ast}}E_{P_{\Bb}^{n}}\left[W\left(g_{\mathbf{b}}^{\ast}\right)-W\left(\hat{g}\right)\right]\geq\int_{\mathbf{b}}E_{P_{\Bb}^{n}}\left[W\left(g_{\mathbf{b}}^{\ast}\right)-W\left(\hat{g}\right)\right]dg(\mathbf{b})\nonumber \\
\geq &\ \delta\int_{\mathbf{b}}\int_{Z_{1},\ldots,Z_{n}}P_{\tilde{Z}}\left(\left\{ b(\tilde{Z})\neq\hat{b}(\tilde{Z})\right\} \right)dP_{\mathbf{b}}^{n}\left(Z_{1},\ldots,Z_{n}\right)dg(\mathbf{b})\nonumber \\
\geq &\ \inf_{\hat{g}\in\mathcal{G}}\delta\int_{\mathbf{b}}\int_{Z_{1},\ldots,Z_{n}}P_{\tilde{Z}}\left(\left\{ b(\tilde{Z})\neq\hat{b}(\tilde{Z})\right\} \right)dP_{\mathbf{b}}^{n}\left(Z_{1},\ldots,Z_{n}\right)dg(\mathbf{b}),\nonumber
\end{align}
where $P_{\tilde{Z}}$ is a probability measure of $\tilde{Z}$, and
$b(\tilde{Z})$ and $\hat{b}(\tilde{Z})$ are
elements of $\mathbf{b}$ and $\hat{\mathbf{b}}$ such
that $b\left(\tilde{z}_{j}\right)=b_{j}$ and $\hat{b}\left(\tilde{z}_{j}\right)=\hat{b}_{j}$, respectively.
Note that the above minimization problem can be seen as the minimization of the Bayes risk when the loss function corresponds to the classification error for predicting the binary random variable $b(\tilde{Z})$.
Hence, the risk is minimized by the Bayes classifier such that for
each $j=1,\ldots,J$, 
\begin{align*}
\hat{b}^{\ast}\left(\tilde{z}_{j}\right)= & \begin{cases}
\begin{array}{c}
1\\
0
\end{array} & \begin{array}{c}
\mbox{if }g\left(b_{j}=1\mid Z_{1},\ldots,Z_{n}\right)\geq1/2\\
\mbox{otherwise}
\end{array}\end{cases},
\end{align*}
where $g\left(b_{j}=1\mid Z_{1},\ldots,Z_{n}\right)$ is the posterior distribution for $b_{j}=1$. This Bayes
classifier is achieved by a DTR $\hat{g}^{\ast}\equiv\left(\hat{g}_{1}^{\ast},\ldots,\hat{g}_{T}^{\ast}\right)$
that satisfies for $j=1,\ldots,j$, 
\begin{align*}
\left(\hat{g}_{1}^{\ast}\left(h_{1j}\right),\ldots,\hat{g}_{T}^{\ast}\left(h_{Tj}\right)\right)=  \begin{cases}
\begin{array}{c}
\text{\ensuremath{\underline{d}}}_{Tj}\\
\left(1-d_{1j},\ldots,1-d_{Tj}\right)
\end{array} & \begin{array}{c}
\mbox{if }g\left(b_{j}=1\mid Z_{1},\ldots,Z_{n}\right)\geq1/2\\
\mbox{otherwise}
\end{array}\end{cases}.
\end{align*}
Note that $\hat{g}^{\ast}$ is feasible in $\mathcal{G}$.

Then, using $\hat{g}^{\ast}$, the minimized risk is given by 
\begin{align}
 & \delta\int_{Z_{1},\ldots,Z_{n}}E_{\tilde{Z}}\left[\min\left\{ g\left(b\left(\tilde{Z}\right)=1\mid Z_{1},\ldots,Z_{n}\right),1-g\left(b\left(\tilde{Z}\right)=1\mid Z_{1},\ldots,Z_{n}\right)\right\} \right]d\tilde{P}^{n} \notag \\
= & \frac{1}{v_{1:T}}\delta\int_{Z_{1},\ldots,Z_{n}}\sum_{j=1}^{v_{1:T}}\left[\min\left\{ g\left(b_{j}=1\mid Z_{1},\ldots,Z_{n}\right),1-g\left(b_{j}=1\mid Z_{1},\ldots,Z_{n}\right)\right\} \right]d\tilde{P}^{n},
\label{eq:proof-lower bound-1}
\end{align}
where $\tilde{P}$ is the marginal likelihood of $\left\{ \left\{ Y_{it}\left(\text{\ensuremath{\underline{d}}}_{t}\right)\right\} _{\text{\ensuremath{\underline{d}}}_{t}\in\left\{ 0,1\right\} ^{t}},\left\{ D_{it},X_{it}\right\} \right\} _{i=1,\ldots,n;t=1,\ldots,T}$
with prior $g(\mathbf{b})$.

For each $j=1,\ldots,v_{1:T}$, let 
\begin{align*}
k_{j}^{+}&=  \#\left\{ i:\tilde{Z}_{i}=\tilde{z}_{j},Y_{iT}=\frac{1}{2}\right\} ,\\
k_{j}^{-}&=  \#\left\{ i:\tilde{Z}_{i}=\tilde{z}_{j},Y_{iT}=-\frac{1}{2}\right\} .
\end{align*}
Then the posteriors for $b_{j}=1$ can be written as 
\begin{align*}
g\left(b_{j}=1\mid Z_{1},\ldots,Z_{n}\right)=  \begin{cases}
\begin{array}{c}
\frac{1}{2}\\
\frac{\left(\frac{1}{2}+\delta\right)^{k_{j}^{+}}\left(\frac{1}{2}-\delta\right)^{k_{j}^{-}}}{\left(\frac{1}{2}+\delta\right)^{k_{j}^{+}}\left(\frac{1}{2}-\delta\right)^{k_{j}^{-}}+\left(\frac{1}{2}+\delta\right)^{k_{j}^{-}}\left(\frac{1}{2}-\delta\right)^{k_{j}^{+}}}
\end{array} & \begin{array}{c}
\mbox{if }k_{j}^{+}+k_{j}^{-}=0\\
\mbox{otherwise}.
\end{array}\end{cases}
\end{align*}
Hence, the following holds: 
\begin{align}
 & \min\left\{ g\left(b_{j}=1\mid Z_{1},\ldots,Z_{n}\right),1-g\left(b_{j}=1\mid Z_{1},\ldots,Z_{n}\right)\right\} \notag \\
=\ & \frac{\min\left\{ \left(\frac{1}{2}+\delta\right)^{k_{j}^{+}}\left(\frac{1}{2}-\delta\right)^{k_{j}^{-}},\left(\frac{1}{2}+\delta\right)^{k_{j}^{-}}\left(\frac{1}{2}-\delta\right)^{k_{j}^{+}}\right\} }{\left(\frac{1}{2}+\delta\right)^{k_{j}^{+}}\left(\frac{1}{2}-\delta\right)^{k_{j}^{-}}+\left(\frac{1}{2}+\delta\right)^{k_{j}^{-}}\left(\frac{1}{2}-\delta\right)^{k_{j}^{+}}}\notag \\
=\ & \frac{\min\left\{ 1,\left(\frac{\frac{1}{2}+\delta}{\frac{1}{2}-\delta}\right)^{k_{j}^{+}-k_{j}^{-}}\right\} }{1+\left(\frac{\frac{1}{2}+\delta}{\frac{1}{2}-\delta}\right)^{k_{j}^{+}-k_{j}^{-}}} \notag \\
=\ & \frac{1}{1+a^{\left|k_{j}^{+}-k_{j}^{-}\right|}},\mbox{ where }a=\frac{1+2\delta}{1-2\delta}>1.
\label{eq:proof-lower bound-2}
\end{align}
Since 
\begin{align*}
k_{j}^{+}-k_{j}^{-} &=  \sum_{i:\tilde{Z}_{i}=\tilde{z}_{j}}2Y_{Ti},
\end{align*}
plugging (\ref{eq:proof-lower bound-2}) into (\ref{eq:proof-lower bound-1})
yields 
\begin{align*}
(\ref{eq:proof-lower bound-1}) =  \frac{1}{v_{1:T}}\delta\sum_{j=1}^{v_{1:T}}E_{\tilde{P}^{n}}\left[\frac{1}{1+a^{\left|\sum_{i:\tilde{Z}_{i}=\tilde{z}_{j}}2Y_{iT}\right|}}\right]& \geq  \frac{\delta}{2v_{1:T}}\sum_{j=1}^{v_{1:T}}E_{\tilde{P}^{n}}\left[\frac{1}{a^{\left|\sum_{i:\tilde{Z}_{i}=\tilde{z}_{j}}2Y_{iT}\right|}}\right]\\
&\geq  \frac{\delta}{2v_{1:T}}\sum_{j=1}^{v_{1:T}}a^{-E_{\tilde{P}^{n}}\left|\sum_{i:\tilde{Z}_{i}=\tilde{z}_{j}}2Y_{iT}\right|},
\end{align*}
where $E_{\tilde{P}^{n}}\left[\cdot\right]$ is the expectation with
respect to the marginal likelihood of 
\begin{align*}
\left\{ \left\{ Y_{it}\left(\text{\ensuremath{\underline{d}}}_{t}\right)\right\} _{\text{\ensuremath{\underline{d}}}_{t}\in\left\{ 0,1\right\} ^{t}},D_{it},X_{it}\right\} _{i=1,\ldots,n;t=1,\ldots,T}.    
\end{align*}
The first inequality follows by $a>1$, and the second inequality
follows by Jensen's inequality. Given our prior distribution for $\mathbf{b}$,
for each $\text{\ensuremath{\underline{d}}}_{T}\in\left\{ 0,1\right\} ^{T}$,
the marginal distribution of $Y_{iT}\left(\underline{d}_{T}\right)$ is
$P\left(Y_{iT}\left(\text{\ensuremath{\underline{d}}}_{T}\right)=1/2\right)=P\left(Y_{iT}\left(\text{\ensuremath{\underline{d}}}_{T}\right)=-1/2\right)=1/2$
if there exist $\text{\ensuremath{\underline{d}}}_{Tj}$ among $\text{\ensuremath{\underline{d}}}_{T1},\ldots,\text{\ensuremath{\underline{d}}}_{Tv_{1:T}}$
such that $\text{\ensuremath{\underline{d}}}_{Tj}=\text{\ensuremath{\underline{d}}}_{T}$;
otherwise, $P\left(Y_{iT}\left(\text{\ensuremath{\underline{d}}}_{T}\right)=0\right)=1$.
Thus, we have 
\begin{align*}
E_{\tilde{P}^{n}}\left|\sum_{i:\tilde{Z}_{i}=\tilde{z}_{j}}2Y_{iT}\right|
&= E_{\tilde{P}^{n}}\left|\sum_{i:\tilde{Z}_{i}=\tilde{z}_{j}}2Y_{iT}\left(\text{\ensuremath{\underline{d}}}_{Tj}\right)\right|\\
&=  \sum_{k=0}^{n}\left(\begin{array}{c}
n\\
k
\end{array}\right)\left(\frac{1}{v_{1:T}}\right)^{k}\left(1-\frac{1}{v_{1:T}}\right)^{n-k}E\left|B\left(k,\frac{1}{2}\right)-\frac{k}{2}\right|,
\end{align*}
where $B\left(k,1/2\right)$ is the binomial random variable with parameters $k$ and $1/2$. By the Cauchy-Schwarz inequality, it follows
that 
\begin{align*}
E\left|B\left(k,\frac{1}{2}\right)-\frac{1}{2}\right|\leq  \sqrt{E\left(B\left(k,\frac{1}{2}\right)-\frac{k}{2}\right)^{2}}=\sqrt{\frac{k}{4}}.
\end{align*}
Thus, we obtain 
\begin{align*}
E_{\tilde{P}^{n}}\left|\sum_{i:\tilde{Z}_{i}=\tilde{z}_{j}}2Y_{iT}\right| & \leq\sum_{k=0}^{n}\left(\begin{array}{c}
n\\
k
\end{array}\right)\left(\frac{1}{v_{1:T}}\right)^{k}\left(1-\frac{1}{v_{1:T}}\right)^{n-k}\sqrt{\frac{k}{4}}\\
 & =E\sqrt{\frac{B\left(n,\frac{1}{v_{1:T}}\right)}{4}}\\
 & \leq\sqrt{\frac{n}{4v_{1:T}}},
\end{align*}
where the last inequality follows by Jensen's inequality. Hence, the
Bayes risk is bounded from below by 
\begin{align*}
\frac{\delta}{2}a^{-\sqrt{\frac{n}{4v_{1:T}}}}& \geq  \frac{\delta}{2}\exp\left\{ -\left(a-1\right)\sqrt{\frac{n}{4v_{1:T}}}\right\} \\
&=  \frac{\delta}{2}\exp\left\{ -\frac{4\delta}{1-2\delta}\sqrt{\frac{n}{4v_{1:T}}}\right\} ,
\end{align*}
where the inequality follows from the fact that $1+x\leq e^{x}$ for any $x$.
This lower bound on the Bayes risk has the slowest convergence rate
when $\delta$ is set to be proportional to $n^{-1/2}$. Specifically,
letting $\delta=\sqrt{v_{1:T}/n}$, we have 
\begin{align*}
\frac{\delta}{2}\exp\left\{ -\frac{4\delta}{1-2\delta}\sqrt{\frac{n}{4v_{1:T}}}\right\}  & =\frac{1}{2}\sqrt{\frac{v_{1:T}}{n}}\exp\left\{ -\frac{2}{1-2\delta}\right\} \geq\frac{1}{2}\sqrt{\frac{v_{1:T}}{n}}\exp\left(-4\right)\mbox{ if }1-2\delta\geq\frac{1}{2}.
\end{align*}
The condition $1-2\delta \geq 1/2$ is equivalent to $n\geq16v_{1:T}$.
Multiplying the lower bound by $M_{T}$ gives 
\begin{align*}
\sup_{P\in{\cal P}\left(M, \kappa, \MG\right)}E_{P^{n}}\left[W_{{\cal G}}^{\ast}-W\left(\hat{g}\right)\right] & \geq \frac{1}{2}\exp\left(-4\right)M_{T}\sqrt{\frac{v_{1:T}}{n}}
\end{align*}
for all $n\geq16v_{1:T}$. 

The proof is valid irrespective of whether Assumption \ref{asm:first-best} holds for a pair $(P,\MG)$ with any $P \in \MP(M, \kappa, \MG)$ or not.
\end{proof}
\medskip

\begin{proof}[Proof of Theorem \ref{thm:lower bound}]
The result immediately follows by setting 
\begin{align*}
    t = \underset{s\in \left\{1,\ldots,T\right\}}{\argmax}\gamma_{s}M_{s} \sqrt{\frac{v_{1:s}}{n}}
\end{align*}
in the statement of Lemma \ref{lem:lower bound}.
\end{proof}

\subsection{Proof of Theorem \ref{thm:estimated propensity score}}

We derive uniform upper bounds on the worst-case average welfare regrets of the two DEWM methods in the case where estimated propensity scores are used instead of true ones.

\medskip

\begin{proof}[Proof of Theorem \ref{thm:estimated propensity score} (i).]

Let $P \in \MP_{e} \cap \MP(M, \kappa, \MG)$ be fixed. Define $\hat{W}_{nt}(\text{\ensuremath{\underline{g}}}_{t})\equiv n^{-1}\sum_{i=1}^{n}\hat{w}_{t}^{S}(Z_{i},\text{\ensuremath{\underline{g}}}_{t})$ and $\hat{W}_{n}\left(g\right)\equiv\sum_{t=1}^{T}\hat{W}_{nt}(\text{\ensuremath{\underline{g}}}_{t})$, which are estimators of $W_{t}(\text{\ensuremath{\underline{g}}}_{t})$
and $W\left(g\right)$, respectively. It follows for any $g\in{\cal G}$
that 
\begin{align*}
W\left(g\right)-W\left(\hat{g}_{e}^{S}\right) & \leq  W_{n}\left(g\right)-\hat{W}_{n}\left(g\right)-W_{n}\left(\hat{g}_{e}^{S}\right)+\hat{W}_{n}\left(\hat{g}_{e}^{S}\right)\\
 & +W\left(g\right)-W\left(\hat{g}_{e}^{S}\right)+W_{n}\left(\hat{g}_{e}^{S}\right)-W_{n}\left(g\right)\\
& =  \frac{1}{n}\sum_{i=1}^{n}\sum_{t=1}^{T}\sum_{\text{\ensuremath{\underline{d}}}_{t}\in\left\{ 0,1\right\} ^{t}}\left[\left(\frac{\gamma_{t}Y_{it}\cdot1\left\{ \text{\ensuremath{\underline{D}}}_{it}=\text{\ensuremath{\underline{d}}}_{t}\right\} }{\prod_{s=1}^{t}e_{s}\left(d_{s},H_{is}\right)}-\frac{\gamma_{t}Y_{it}\cdot1\left\{ \text{\ensuremath{\underline{D}}}_{it}=\text{\ensuremath{\underline{d}}}_{t}\right\} }{\prod_{s=1}^{t}\hat{e}_{s}\left(d_{s},H_{is}\right)}\right)\right.\\
 & \left.\times\left(\prod_{s=1}^{t}1\left\{ g_{s}\left(H_{is}\right)=d_{s}\right\} -\prod_{s=1}^{t}1\left\{ \hat{g}_{e,s}^{S}\left(H_{is}\right)=d_{s}\right\} \right)\right]\\
 & +W\left(g\right)-W_{n}\left(g\right)+W_{n}\left(\hat{g}_{e}^{S}\right)-W\left(\hat{g}_{e}^{S}\right)\\
& \leq  \sum_{t=1}^{T}\sum_{\text{\ensuremath{\underline{d}}}_{t}\in\left\{ 0,1\right\} ^{t}}\left(\frac{1}{n}\sum_{i=1}^{n}\left|\tau_{t}\left(\underline{d}_{t},H_{it}\right)-\hat{\tau}_{t}\left(\underline{d}_{t},H_{it}\right)\right|\right)+2\sup_{g\in{\cal G}}\left|W_{n}\left(g\right)-W\left(g\right)\right|\\
& \leq  \sum_{t=1}^{T}\sum_{\text{\ensuremath{\underline{d}}}_{t}\in\left\{ 0,1\right\} ^{t}}\left(\frac{1}{n}\sum_{i=1}^{n}\left|\tau_{t}\left(\underline{d}_{t},H_{it}\right)-\hat{\tau}_{t}\left(\underline{d}_{t},H_{it}\right)\right|\right)\\
&+2\sum_{t=1}^{T}\sup_{ \underline{g}_{t}\in{\MG_{1}\times \cdots \times \MG_{t}}}\left|W_{nt}(\text{\ensuremath{\underline{g}}}_{t})-W_{t}(\text{\ensuremath{\underline{g}}}_{t})\right|.
\end{align*}
The first inequality follows from the fact that $\hat{g}_{e}^{S}$
maximizes $\hat{W}_{n}\left(\cdot\right)$ over ${\cal G}$. The second inequality follows from the fact that
\begin{align*}
 \left|\prod_{s=1}^{t}1\left\{ g_{s}\left(H_{is}\right)=d_{s}\right\} -\prod_{s=1}^{t}1\left\{ \hat{g}_{e,s}^{S}\left(H_{is}\right)=d_{s}\right\} \right|
 \leq1 .   
\end{align*}

Thus, the average welfare regret can be bounded from above by 
\begin{align*}
E_{P^{n}}\left[W_{{\cal G}}^{\ast}-W\left(\hat{g}_{e}^{S}\right)\right]&
\leq  \sum_{t=1}^{T}\sum_{\text{\ensuremath{\underline{d}}}_{t}\in\left\{ 0,1\right\} ^{t}}E_{P^{n}}\left[\frac{1}{n}\sum_{i=1}^{n}\left|\tau_{t}\left(\text{\ensuremath{\underline{d}}}_{t},H_{it}\right)-\hat{\tau}_{t}\left(\text{\ensuremath{\underline{d}}}_{t},H_{it}\right)\right|\right]\\
 & +2\sum_{t=1}^{T}\sup_{ \underline{g}_{t}\in{\MG_{1}\times \cdots \times \MG_{t}}}E_{P^{n}}\left[\left|W_{nt}(\text{\ensuremath{\underline{g}}}_{t})-W_{t}(\text{\ensuremath{\underline{g}}}_{t})\right|\right].
\end{align*}
Therefore, by the same argument as in the proof of Theorem \ref{thm:upper bound} (i) and from Assumption \ref{asm:estimated propensity score} (i), the
average welfare regret is bounded from above as 
\begin{align*}
E_{P^{n}}\left[W_{{\cal G}}^{\ast}-W\left(\hat{g}_{e}^{S}\right)\right] & \leq C\sum_{t=1}^{T}\left\{ \frac{\gamma_{t}M_{t}}{\prod_{s=1}^{t}\kappa_{s}}\sqrt{\frac{\sum_{s=1}^{t}v_{s}}{n}}\right\} +O\left(\phi_{n}^{-1}\right).
\end{align*}
Since this upper bound does not depend on $P\in{\cal P}_{e}\bigcap{\cal P}\left(M, \kappa, \MG\right)$,
the upper bound is uniform over ${\cal P}_{e}\bigcap{\cal P}\left(M, \kappa, \MG\right)$. 
\end{proof}
\medskip

Before proceeding to the proof of Theorem \ref{thm:estimated propensity score} (ii), we define 

\begin{align*}
 \Delta\tilde{Q}_{t,e}&\equiv\tilde{Q}_{t}\left(g_{t}^{\ast},\ldots,g_{T}^{\ast}\right)-\tilde{Q}_{t}\left(\hat{g}_{t,e}^{B},\ldots,\hat{g}_{T,e}^{B}\right), \\
 \Delta\tilde{Q}_{t,e}^{\dagger}&\equiv\tilde{Q}_{t}\left(g_{t}^{\ast},\hat{g}_{t+1,e}^{B},\ldots,\hat{g}_{T,e}^{B}\right)-\tilde{Q}_{t}\left(\hat{g}_{t,e}^{B},\ldots,\hat{g}_{T,e}^{B}\right),\\
 \check{Q}_{nt}\left(g_{t},\ldots,g_{T}\right) & \equiv E_{n}\left[\hat{q}_{t}\left(Z,g_{t},\ldots,g_{T}\right)\right]\\
 & =\sum_{s=t}^{T}E_{n}\left[ \frac{(\prod_{\ell =t}^{s}1\left\{ g_{\ell}\left(H_{\ell}\right)=D_{\ell}\right\}) \gamma_{s}Y_{s}}{\prod_{\ell =t}^{s}\hat{e}_{\ell}\left(D_{\ell},H_{\ell}\right)} \right], 
\end{align*}
where $\tilde{Q}$ is defined in (\ref{eq:Q-function with propensity scofre}).
The following lemmas will be used in the proof of Theorem \ref{thm:estimated propensity score} (ii).

\medskip

\begin{lemma} \label{lem:backward DEWM-2} Suppose that Assumptions
\ref{asm:sequential independence}, \ref{asm:bounded outcome}, and \ref{asm:overlap} hold for any $P \in \MP(M, \kappa, \MG)$, that Assumption \ref{asm:vc-class} holds for $\MG$,  that Assumption \ref{asm:first-best} holds for a pair $(P,\MG)$ with any $P \in \MP(M,\kappa,\MG)$, and that Assumption \ref{asm:estimated propensity score} (ii) holds for any $P \in \MP_{e}$. Then, for any $P \in \MP(M, \kappa, \MG) \cap \MP_{e}$, the following hold:
\begin{itemize}
\item[(i)] for $t=1,\ldots,T$, 
\begin{align*}
  E_{P^{n}}\left[\Delta\tilde{Q}_{t,e}^{\dagger}\right]
& \leq  C\left(\sum_{s=t}^{T}\frac{\gamma_{s}M_{s}}{\prod_{\ell =t}^{s}\kappa_{\ell}}\right)\sqrt{\frac{\sum_{s=t}^{T}v_{s}}{n}} + O\left(\xi_{n}^{-1}\right),
\end{align*}
where $C$ is the same constant term as introduced in Lemma \ref{lem:KT};
\item[(ii)] for $t=1,\ldots,T-1$ and $s=t+1,\ldots,T$, 
\begin{align*}
\tilde{Q}_{t}\left(g_{t}^{\ast},\ldots,g_{T}^{\ast}\right)-\tilde{Q}_{t}\left(g_{t}^{\ast},\ldots,g_{s}^{\ast},\hat{g}_{s+1,e}^{B},\ldots,\hat{g}_{T,e}^{B}\right) & \leq\frac{1}{\prod_{\ell =t}^{s}\kappa_{\ell}}\Delta\tilde{Q}_{s+1,e};
\end{align*}
\item[(iii)] 
\begin{align*}
\Delta\tilde{Q}_{1,e} & \leq\Delta \tilde{Q}_{1,e}^{\dagger}+\sum_{s=1}^{T-1}\frac{2^{s-1}}{\prod_{t=1}^{s}\kappa_{t}}\Delta \tilde{Q}_{s+1,e}^{\dagger}.
\end{align*}
\end{itemize}
\end{lemma}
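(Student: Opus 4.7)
The plan is to mirror the three-part proof of Lemma~\ref{lem:backward DEWM}, with the one structural change that $\hat{g}_{t,e}^{B}$ now maximizes the empirical objective $\check{Q}_{nt}(\cdot,\hat{g}_{t+1,e}^{B},\ldots,\hat{g}_{T,e}^{B})$ built from estimated propensity scores rather than the true-propensity empirical objective $\tilde{Q}_{nt}$. The extra $O(\psi_{n}^{-1})$ term in part (i) will arise from Assumption~\ref{asm:estimated propensity score}(ii), while parts (ii) and (iii) will go through essentially verbatim: (ii) is a population-level statement about the true-propensity function $\tilde{Q}$ that depends only on Assumptions~\ref{asm:sequential independence},~\ref{asm:overlap},~\ref{asm:first-best} and the feasibility of the plug-in rules, and (iii) is a purely algebraic telescoping identity that is indifferent to how the $\hat{g}$'s were produced.

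For part (i), I first add and subtract $\check{Q}_{nt}$ to write
\begin{align*}
\Delta\tilde{Q}_{t,e}^{\dagger}
&= \bigl[\tilde{Q}_{t}(g_{t}^{\ast},\hat{g}_{t+1,e}^{B},\ldots,\hat{g}_{T,e}^{B})-\check{Q}_{nt}(g_{t}^{\ast},\hat{g}_{t+1,e}^{B},\ldots,\hat{g}_{T,e}^{B})\bigr] \\
&\quad + \bigl[\check{Q}_{nt}(g_{t}^{\ast},\hat{g}_{t+1,e}^{B},\ldots,\hat{g}_{T,e}^{B})-\check{Q}_{nt}(\hat{g}_{t,e}^{B},\ldots,\hat{g}_{T,e}^{B})\bigr] \\
&\quad + \bigl[\check{Q}_{nt}(\hat{g}_{t,e}^{B},\ldots,\hat{g}_{T,e}^{B})-\tilde{Q}_{t}(\hat{g}_{t,e}^{B},\ldots,\hat{g}_{T,e}^{B})\bigr],
\end{align*}
whose middle bracket is nonpositive by the defining optimality of $\hat{g}_{t,e}^{B}$, so that $|\Delta\tilde{Q}_{t,e}^{\dagger}|\le 2\sup_{(g_{t},\ldots,g_{T})}|\check{Q}_{nt}-\tilde{Q}_{t}|$. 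I then split this supremum as $|\check{Q}_{nt}-\tilde{Q}_{nt}|+|\tilde{Q}_{nt}-\tilde{Q}_{t}|$; the second piece is controlled by exactly the empirical-process bound (via Lemmas~\ref{lem:vc-subclass} and~\ref{lem:KT}) used in Lemma~\ref{lem:backward DEWM}(i), yielding the first summand on the right-hand side. For the first piece, the key identity is
\begin{align*}
\hat{q}_{t}(Z,g_{t},\ldots,g_{T})-q_{t}(Z,g_{t},\ldots,g_{T})
=\sum_{\underline{d}_{t:T}}\Bigl(\prod_{\ell=t}^{T}1\{g_{\ell}(H_{\ell})=d_{\ell}\}\Bigr)\bigl[\hat{\eta}_{t}(\underline{d}_{t:T},H_{T})-\eta_{t}(\underline{d}_{t:T},H_{T})\bigr],
\end{align*}
which holds because summing against $\prod_{\ell=t}^{T}1\{g_{\ell}(H_{\ell})=d_{\ell}\}$ picks out the unique history $d_{\ell}=g_{\ell}(H_{\ell})$, and on that history $1\{D_{\ell}=d_{\ell}\}$ and the propensity argument $d_{\ell}$ match $1\{g_{\ell}(H_{\ell})=D_{\ell}\}$ and $D_{\ell}$ appearing in $\hat{q}_{t}$. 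Taking absolute values, sample-averaging, and invoking Assumption~\ref{asm:estimated propensity score}(ii) uniformly in $(g_{t},\ldots,g_{T})$ then delivers the $O(\psi_{n}^{-1})$ contribution.

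Part (ii) will follow from the same conditional-expectation calculation as in Lemma~\ref{lem:backward DEWM}(ii): I rewrite the left-hand side in the form of equation~(\ref{eq:Q-functions difference}) with $\hat{g}_{s+1,e}^{B},\ldots,\hat{g}_{T,e}^{B}$ in place of $\hat{g}_{s+1}^{B},\ldots,\hat{g}_{T}^{B}$, bound the propensity-weighted indicator factor by $1/\prod_{\ell=t}^{s}\kappa_{\ell}$ via Assumption~\ref{asm:overlap}, and use Assumptions~\ref{asm:sequential independence} and~\ref{asm:first-best} to conclude that the remaining inner conditional expectation is nonnegative; this last step is unaffected by how $\hat{g}_{s+1,e}^{B}\in\MG_{s+1}$ was selected. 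Part (iii) is a stage-by-stage telescoping that is identical to the argument in Lemma~\ref{lem:backward DEWM}(iii) once $\hat{g}^{B}$ and $\Delta\tilde{Q}$ are replaced by $\hat{g}_{e}^{B}$ and $\Delta\tilde{Q}_{e}$ throughout and part (ii) above is used in place of Lemma~\ref{lem:backward DEWM}(ii).

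The main obstacle is securing the identity linking $\hat{q}_{t}-q_{t}$ to $\hat{\eta}_{t}-\eta_{t}$ in part (i): one must reconcile the asymmetry between the $g_{\ell}(H_{\ell})$-indexed indicator and propensity arguments appearing in $\hat{q}_{t}$ and the $d_{\ell}$-indexed arguments appearing in $\hat{\eta}_{t}$ before Assumption~\ref{asm:estimated propensity score}(ii) can be applied directly. Once that reduction is in place, the remainder of the proof is routine bookkeeping built on top of Lemma~\ref{lem:backward DEWM}.
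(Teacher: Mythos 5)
Your proposal is correct and follows essentially the same route as the paper: part (i) uses the same optimality-of-$\hat{g}_{t,e}^{B}$-under-$\check{Q}_{nt}$ decomposition, splits the error into the empirical-process term $|\tilde{Q}_{nt}-\tilde{Q}_{t}|$ (handled by Lemmas \ref{lem:vc-subclass} and \ref{lem:KT}) plus the propensity-estimation term $|\check{Q}_{nt}-\tilde{Q}_{nt}|$ reduced to $\sum_{\underline{d}_{t:T}}n^{-1}\sum_{i}|\hat{\eta}_{t}-\eta_{t}|$ via Assumption \ref{asm:estimated propensity score}(ii), and parts (ii)--(iii) are, exactly as you say, verbatim repeats of Lemma \ref{lem:backward DEWM}(ii)--(iii) with the $e$-subscripted rules. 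The identity you flag as the "main obstacle" is precisely the step the paper carries out (the indicator product selects the history $d_{\ell}=g_{\ell}(H_{\ell})$, on which the propensity arguments coincide with $D_{\ell}$), so no gap remains.
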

\medskip

\begin{proof}

(i) It follows for any $\tilde{g}_{t}\in{\cal G}_{t}$ that 
\begin{align*}
 & \tilde{Q}_{t}\left(\tilde{g}_{t},\hat{g}_{t+1,e}^{B},\ldots,\hat{g}_{T,e}^{B}\right)-\tilde{Q}_{t}\left(\hat{g}_{t,e}^{B},\ldots,\hat{g}_{T,e}^{B}\right)\\
& \leq  \tilde{Q}_{nt}\left(\tilde{g}_{t},\hat{g}_{t+1,e}^{B},\ldots,\hat{g}_{T,e}^{B}\right)-\check{Q}_{nt}\left(\tilde{g}_{t},\hat{g}_{t+1,e}^{B},\ldots,\hat{g}_{T,e}^{B}\right) -\tilde{Q}_{nt}\left(\hat{g}_{t,e}^{B},\ldots,\hat{g}_{T,e}^{B}\right)+\check{Q}_{nt}\left(\hat{g}_{t,e}^{B},\ldots,\hat{g}_{T,e}^{B}\right)\\
 & +\tilde{Q}_{t}\left(\tilde{g}_{t},\hat{g}_{t+1,e}^{B},\ldots,\hat{g}_{T,e}^{B}\right)-\tilde{Q}_{t}\left(\hat{g}_{t,e}^{B},\ldots,\hat{g}_{T,e}^{B}\right)+\tilde{Q}_{nt}\left(\hat{g}_{t,e}^{B},\ldots,\hat{g}_{T,e}^{B}\right) - \tilde{Q}_{nt}\left(\tilde{g}_{t},\hat{g}_{t+1,e}^{B},\ldots,\hat{g}_{T,e}^{B}\right)\\
& =  \frac{1}{n}\sum_{i=1}^{n}\sum_{s=t}^{T}\sum_{\text{\ensuremath{\underline{d}}}_{t:s}\in\left\{ 0,1\right\} ^{t-s+1}}\left[\left(\frac{1\left\{ \underline{D}_{t:s}=\underline{d}_{t:s}\right\} \gamma_{s}Y_{s}}{\prod_{\ell =t}^{s}e_{\ell}\left(d_{\ell},H_{\ell}\right)}-\frac{1\left\{ \underline{D}_{t:s}=\underline{d}_{t:s}\right\} \gamma_{s}Y_{s}}{\prod_{\ell =t}^{s}\hat{e}_{\ell}\left(d_{\ell},H_{\ell}\right)}\right)\right.\\
 & \left.\times\left(1\left\{ \tilde{g}_{t}(H_{it})=d_{t}\right\} -1\left\{ \hat{g}_{t,e}^{B}\left(H_{it}\right)=d_{t}\right\} \right)\right]\\
 & +\tilde{Q}_{t}\left(\tilde{g}_{t},\hat{g}_{t+1,e}^{B},\ldots,\hat{g}_{T,e}^{B}\right)-\tilde{Q}_{t}\left(\hat{g}_{t,e}^{B},\ldots,\hat{g}_{T,e}^{B}\right) +\tilde{Q}_{nt}\left(\tilde{g}_{t},\hat{g}_{t+1,e}^{B},\ldots,\hat{g}_{T,e}^{B}\right)-\tilde{Q}_{nt}\left(\hat{g}_{t,e}^{B},\ldots,\hat{g}_{T,e}^{B}\right)\\
& \leq  \sum_{\text{\ensuremath{\underline{d}}}_{t:T}\in\left\{ 0,1\right\} ^{T-t+1}}\left(\frac{1}{n}\sum_{i=1}^{n}\left|\hat{\eta}_{t}^{-k}\left(\text{\ensuremath{\underline{d}}}_{t:T},H_{iT}\right)-\eta_{t}\left(\text{\ensuremath{\underline{d}}}_{t:T},H_{iT}\right)\right|\right)\\
 & +2\sup_{\left(g_{t},\ldots,g_{T}\right)\in{\cal G}_{t}\times\cdots\times{\cal G}_{T}}\left|\tilde{Q}_{nt}\left(g_{t},\ldots,g_{T}\right)-\tilde{Q}_{t}\left(g_{t},\ldots,g_{T}\right)\right|.
\end{align*}
The first inequality follows from the fact that $\hat{g}_{t,e}^{B}$ maximizes
$\check{Q}_{nt}\left(\cdot,\hat{g}_{t+1,e}^{B},\ldots,\hat{g}_{T,e}^{B}\right)$ over ${\cal G}_{t}$.

Then we have
\begin{align*}
    E_{P^n}\left[\Delta \tilde{Q}_{t,e}^{\dagger}\right] &\leq
    2E_{P^n}\left[\sup_{\left(g_{t},\ldots,g_{T}\right)\in{\cal G}_{t}\times\cdots\times{\cal G}_{T}}\left|\tilde{Q}_{nt}\left(g_{t},\ldots,g_{T}\right)-\tilde{Q}_{t}\left(g_{t},\ldots,g_{T}\right)\right|\right] \\
    & + \sum_{\text{\ensuremath{\underline{d}}}_{t:T}\in\left\{ 0,1\right\} ^{T-t+1}}E_{P^n}\left[\frac{1}{n}\sum_{i=1}^{n}\left|\hat{\eta}_{t}^{-k}\left(\text{\ensuremath{\underline{d}}}_{t:T},H_{iT}\right)-\eta_{t}\left(\text{\ensuremath{\underline{d}}}_{t:T},H_{iT}\right)\right|\right]
\end{align*}
Therefore, applying Lemma \ref{lem:KT} to the first term in the right hand side (as in the proof of Lemma \ref{lem:backward DEWM} (i)) and Assumption \ref{asm:estimated propensity score} (ii) to the second term in the right hand side leads to the result.

\medskip
\noindent (ii) The proof of Lemma \ref{lem:backward DEWM-2} follows from the same argument with the proof of Lemma \ref{lem:backward DEWM} (ii).

\medskip
\noindent (iii) We follow the same strategy as in Lemma \ref{lem:backward DEWM} (iii). First, note that
\begin{align*}
\Delta\tilde{Q}_{T,e}  =\tilde{Q}_{T}\left(g_{T}^{\ast}\right)-\tilde{Q}_{T}\left(\hat{g}_{T,e}^{B}\right)=\Delta \tilde{Q}_{T,e}^{\dagger}.
\end{align*}
Then, for $t=T-1$, we have
\begin{align*}
\Delta\tilde{Q}_{T-1,e} & =\tilde{Q}_{T-1}\left(g_{T-1}^{\ast},g_{T}^{\ast}\right)-\tilde{Q}_{T-1}\left(\hat{g}_{T-1,e}^{B},\hat{g}_{T,e}^{B}\right)\\
 & =\tilde{Q}_{T-1}\left(g_{T-1}^{\ast},g_{T}^{\ast}\right)-\tilde{Q}_{T-1}\left(g_{T-1}^{\ast},\hat{g}_{T,e}^{B}\right)+\tilde{Q}_{T-1}\left(g_{T-1}^{\ast},\hat{g}_{T,e}^{B}\right)-\tilde{Q}_{T-1}\left(\hat{g}_{T-1,e}^{B},\hat{g}_{T,e}^{B}\right)\\
 & \leq\frac{1}{\kappa_{T-1}}\Delta \tilde{Q}_{T,e}^{\dagger}+\Delta \tilde{Q}_{T-1,e}^{\dagger},
\end{align*}
where the inequality follows from Lemma \ref{lem:backward DEWM-2} (ii).

 Generally, for any $k=1,\ldots,T-1$, it follows that 
\begin{align*}
\Delta\tilde{Q}_{T-k,e} & =\tilde{Q}_{T-k}\left(g_{T-k}^{\ast},\ldots,g_{T}^{\ast}\right)-\tilde{Q}_{T-k}\left(\hat{g}_{T-k,e}^{B},\ldots,\hat{g}_{T,e}^{B}\right)\\
 & =\sum_{s=T-k}^{T}\left[\tilde{Q}_{T-k}\left(g_{T-k}^{\ast},\ldots,g_{s}^{\ast},\hat{g}_{s+1,e}^{B},\ldots,\hat{g}_{T,e}^{B}\right)-\tilde{Q}_{T-k}\left(g_{T-k}^{\ast},\ldots,g_{s-1}^{\ast},\hat{g}_{s,e}^{B},\ldots,\hat{g}_{T,e}^{B}\right)\right]\\
 & \leq\sum_{s=T-k}^{T}\left[\tilde{Q}_{T-k}\left(g_{T-k}^{\ast},\ldots,g_{T}^{\ast}\right)-\tilde{Q}_{T-k}\left(g_{T-k}^{\ast},\ldots,g_{s-1}^{\ast},\hat{g}_{s,e}^{B},\ldots,\hat{g}_{T,e}^{B}\right)\right]\\
 & =\sum_{s=T-k+1}^{T}\left[\tilde{Q}_{T-k}\left(g_{T-k}^{\ast},\ldots,g_{T}^{\ast}\right)-\tilde{Q}_{T-k}\left(g_{T-k}^{\ast},\ldots,g_{s-1}^{\ast},\hat{g}_{s,e}^{B},\ldots,\hat{g}_{T,e}^{B}\right)\right]+\Delta \tilde{Q}_{T-k,e}^{\dagger}\\
 & \leq\sum_{s=T-k+1}^{T}\frac{1}{\prod_{\ell=T-k}^{s-1}\kappa_{\ell}}\Delta\tilde{Q}_{s,e}+\Delta \tilde{Q}_{T-k,e}^{\dagger},
\end{align*}
where the second line follows by taking a telescope sum;
the third line follows from the fact that $\left(g_{s+1}^{\ast},\ldots,g_{T}^{\ast}\right)$
maximizes $\tilde{Q}_{T-k}\left(g_{T-k}^{\ast},\ldots,g_{s}^{\ast},\cdot,\ldots,\cdot\right)$
over ${\cal G}_{s+1}\times\cdots\times{\cal G}_{T}$ under Assumption
\ref{asm:first-best}; the last line follows from Lemma \ref{lem:backward DEWM-2}(ii). 

Then, recursively, the following hold:
\begin{align*}
\Delta\tilde{Q}_{T-1,e} & \leq\frac{1}{\kappa_{T-1}}\Delta\tilde{Q}_{T,e}+\Delta \tilde{Q}_{T-1,e}^{\dagger}=\frac{1}{\kappa_{T-1}}\Delta \tilde{Q}_{T,e}^{\dagger} + \Delta \tilde{Q}_{T-1,e}^{\dagger},\\
\Delta\tilde{Q}_{T-2,e} & \leq\frac{1}{\kappa_{T-2}}\Delta\tilde{Q}_{T-1,e}+\frac{1}{\kappa_{T-2}\kappa_{T-1}}\Delta\tilde{Q}_{T,e}+ \Delta \tilde{Q}_{T-2,e}^{\dagger}\\
 & \leq\frac{2}{\kappa_{T-2}\kappa_{T-1}}\Delta \tilde{Q}_{T,e}^{\dagger}+\frac{1}{\kappa_{T-2}}\Delta \tilde{Q}_{T-1,e}^{\dagger}+ \Delta \tilde{Q}_{T-2,e}^{\dagger},\\
 & \vdots\\
\Delta\tilde{Q}_{T-k,e} & \leq\sum_{s=1}^{k}\frac{2^{k-s}}{\prod_{t=T-k}^{T-s}\kappa_{t}}\Delta \tilde{Q}_{T-s+1,e}^{\dagger}+\Delta \tilde{Q}_{T-k,e}^{\dagger}.
\end{align*}
Therefore, when $k=T-1$, we have 
\begin{align*}
\Delta\tilde{Q}_{1,e} & \leq\Delta \tilde{Q}_{1,e}^{\dagger}+\sum_{s=1}^{T-1}\frac{2^{T-1-s}}{\prod_{t=1}^{T-s}\kappa_{t}}\Delta \tilde{Q}_{T-s+1,e}^{\dagger}\\
 & =\Delta \tilde{Q}_{1,e}^{\dagger}+\sum_{s=1}^{T-1}\frac{2^{s-1}}{\prod_{t=1}^{s}\kappa_{t}}\Delta \tilde{Q}_{s+1,e}^{\dagger}.
\end{align*}

\end{proof}
\medskip

\begin{proof}[Proof of Theorem \ref{thm:estimated propensity score} (ii).]

Let $P\in \widetilde{\MP}_{e}\bigcap{\cal P}\left(M, \kappa, \MG\right)$ be fixed. By the same argument as in the proof of Theorem \ref{thm:upper bound}
(ii), it follows for $g^* \in \argmax_{g \in \MG}W(g)$ that
\begin{align*}
    W_{\MG}^{\ast}-W(\hat{g}_{e}^{B}) &= \tilde{Q}_{1}(g^{\ast}) - \tilde{Q}_{1}(\hat{g}_{e}^{B}) \leq \Delta \tilde{Q}_{1,e} \\
    &\leq \Delta\tilde{Q}_{1,e}^{\dagger}+\sum_{s=1}^{T-1}\frac{2^{s-1}}{\prod_{t=1}^{s}\kappa_{t}}\Delta\tilde{Q}_{s+1,e}^{\dagger},
\end{align*}
where the second inequality follows from Lemma \ref{lem:backward DEWM-2} (iii). Thus, since $W_{\MG}^{\ast}-W(\hat{g}_{e}^{B}) \geq 0$, we have
\begin{align*}
    E_{P^{n}}\left[W_{{\cal G}}^{\ast}-W\left(\hat{g}_{e}^{B}\right)\right]\leq E_{P^n}\left[\Delta \tilde{Q}_{1,e}^{\dagger}\right]
    + \sum_{s=1}^{T-1}\frac{2^{s-1}}{\prod_{t=1}^{s}\kappa_{t}}E_{P^n}\left[\Delta\tilde{Q}_{s+1,e}^{\dagger}\right].
\end{align*}
Applying Lemma \ref{lem:backward DEWM-2} (i) to each term in the right hand side gives 
\begin{align*}
E_{P^{n}}\left[W_{{\cal G}}^{\ast}-W\left(\hat{g}_{e}^{B}\right)\right] & \leq  C\sum_{t=1}^{T}\left\{ \frac{\gamma_{t}M_{t}}{\prod_{s=1}^{t}\kappa_{s}}\sqrt{\frac{\sum_{s=1}^{t}v_{s}}{n}}\right\}  \\
&+\sum_{t=2}^{T}\frac{2^{t-2}}{\prod_{s=1}^{t-1}\kappa_{s}}\left(C\sum_{s=t}^{T}\left\{ \frac{\gamma_{s}M_{s}}{\prod_{\ell =t}^{s}\kappa_{\ell}}\sqrt{\frac{\sum_{\ell =t}^{s}v_{\ell}}{n}}\right\} \right) + O\left(\xi_{n}^{-1}\right).
\end{align*}
Since this upper bound does not depend on $P\in \widetilde{\MP}_{e}\bigcap{\cal P}\left(M, \kappa, \MG\right)$, 
the upper bound is uniform over $\widetilde{\MP}_{e}\bigcap{\cal P}\left(M, \kappa, \MG\right)$. 
\end{proof}
\medskip

\subsection{Proof of Theorem \ref{thm:equality_mindied_welfare}}\label{sec:proof_equality_minded_welfare}

This section provides the proof of Theorem \ref{thm:equality_mindied_welfare} for the rank-dependent SWF (\ref{eq:rank-dependent_SWF}). The following is a preliminary lemma.

\medskip

\begin{lemma} \label{lem:tail_concentration_inequality}(Lemma A.5 in \cite{Kitagawa_Tetenov_2021})
Let $\MF$ be a class of uniformly bounded functions, that is, there exists $\bar{F} < \infty$ such that $||f||_{\infty} \leq \bar{F}$ for all $f \in \MF$. Assume that $\MF$ is a VC-subgraph class with VC-dimension $v < \infty$. Let $(Y,Z)\sim P$, where $Y\geq 0$ is a scalar ($Y$ and $Z$ may be dependent). Let $\{(Y_i,Z_i)\}_{i=1}^{n} \sim P^n$ be an iid sample from $P$. Assume that
\begin{align*}
    \int_{0}^{\infty} \sqrt{P(Y>y)}dy \leq M.
\end{align*}
Then, there is a universal constant $C$ such that 
\begin{align*}
    \int_{0}^{\infty} E_{P^n}\left[ \sup_{f \in \MF} \left| \frac{1}{n}\sum_{i=1}^{n}f(Z_i)1\{Y_i > y\} -  E_{P}[f(Z)1\{Y>y\}] \right|\right] dy \leq C \bar{F}M\sqrt{\frac{v}{n}}.
\end{align*}
holds for all $n \geq 1$.
\end{lemma}

\medskip

\begin{proof}[Proof of Theorem \ref{thm:equality_mindied_welfare}.]
By the same argument as in the proof of Theorem \ref{thm:upper bound} (i), it follows that 
\begin{align}
    E_{P^n}\left[\sup_{g \in \MG} W_{\Lambda}(g) - W_{\Lambda}(\hat{g}^{S}) \right] \leq 2 E_{P^n}\left[\sup_{g \in \MG} \left| \widehat{W}_{\Lambda}(g) - W_{\Lambda}(g)\right|\right]. \label{eq:lambda_welfare_bound_1}
\end{align}
Since $\Lambda(\cdot)$ is convex and nonincreasing,
\begin{align}
    \sup_{g \in \MG} \left|\widehat{W}_{\Lambda}(G) - W_{\Lambda}(g)\right| &= \sup_{g \in \MG} \left|\int_{0}^{\infty} \Lambda\left(\widehat{F}_{g}(y) \vee 0 \right)dy - \int_{0}^{\infty} \Lambda\left(F_{g}(y) \right)dy\right| \nonumber \\
    & \leq \sup_{g \in \MG} \int_{0}^{\infty} \left|\Lambda\left(\widehat{F}_{g}(y) \vee 0 \right) -  \Lambda\left(F_{g}(y) \right)\right| dy \nonumber \\
    & \leq  \int_{0}^{\infty} \sup_{g \in \MG} \left|\Lambda\left(\widehat{F}_{g}(y) \vee 0 \right) -  \Lambda\left(F_{g}(y) \right)\right| dy \nonumber \\
    & \leq  \left|\Lambda^{\prime}(0)\right|\int_{0}^{\infty} \sup_{g \in \MG} \left|\widehat{F}_{g}(y) - F_{g}(y)\right| dy. \label{eq:lambda_welfare_bound_2} 
\end{align}
Combining (\ref{eq:lambda_welfare_bound_1}) and (\ref{eq:lambda_welfare_bound_2}) yields
\begin{align*}
     E_{P^n}\left[\sup_{g \in \MG} W_{\Lambda}(g) - W_{\Lambda}(\hat{g}^{S}) \right] \leq  2\left|\Lambda^{\prime}(0)\right|\int_{0}^{\infty} E_{P^n}\left[\sup_{g \in \MG} \left|\widehat{F}_{g}(y) - F_{g}(y)\right| \right] dy.
\end{align*}
For $g \in \MG$, let
\begin{align*}
    w_{g}(Z_i) \equiv \frac{\prod_{t=1}^{T}1\{g_{t}(H_{it})=D_{it}\}}{\prod_{t=1}^{T}e_{t}\left(D_{it},H_{it}\right)}.
\end{align*}
By Lemma \ref{lem:vc_subclass}, the class of functions $\mathcal{W}=\{w_{g}(\cdot): g \in \MG\}$ is a VC-subclass with VC-dimension of at most $\sum_{t=1}^{T}v_{t}$. Assumption \ref{asm:overlap} implies that $w_{g}(Z_i) \in \left[0,1/\left(\prod_{t=1}^{T}\kappa_{t}\right)\right]$; hence, functions in $\mathcal{W}$ are uniformly bounded by $1/\left(\prod_{t=1}^{T}\kappa_{t}\right)$.

 Since $F_{g}(y) = 1 - E_{P}\left[w_{g}(Z)\cdot 1\left\{\sum_{t=1}^{T}Y_t > y\right\}\right]$, under Assumption \ref{asm:sequential independence}, and $\widehat{F}_{g}(y) = 1 - \frac{1}{n}\sum_{i=1}^{n}w_{g}(Z_i)\cdot 1\{\sum_{t=1}^{T}Y_t > y\}$,
 \begin{align*}
     \left|\widehat{F}_{g}(y) - F_{g}(y)\right| = \left| \frac{1}{n}\sum_{i=1}^{n}w_{g}(Z_i)\cdot 1\left\{\sum_{t=1}^{T}Y_t > y \right\} -  E_{P}\left[w_{g}(Z)\cdot 1\left\{\sum_{t=1}^{T}Y_t > y\right\}\right] \right|.
 \end{align*}
It follows from Assumption \ref{eq:tail_bound} that 
\begin{align*}
    \bigintsss_{0}^{\infty} \sqrt{P\left(\sum_{t=1}^{T}Y_t > y\right)}dy \leq \max_{\underline{d}_{T} \in \{0,1\}^T} \bigintsss_{0}^{\infty} \sqrt{P\left(\sum_{t=1}^{T}Y_t\left(\underline{d}_{t}\right) > y\right)}dy \leq \Upsilon.
\end{align*}
Applying Lemma \ref{lem:tail_concentration_inequality} to (\ref{eq:lambda_welfare_bound_2}) yields
\begin{align*}
    E_{P^n}\left[\sup_{g \in \MG} W_{\Lambda}(g) - W_{\Lambda}(\hat{g}^{S}) \right] \leq 2 C\left|\Lambda^{\prime}(0) \right| \frac{\Upsilon}{\prod_{t=1}^{T}\kappa_t} \sqrt{\frac{\sum_{t=1}^{T}v_t}{n}}.
\end{align*}
\end{proof}


\subsection{Proofs of Theorems \ref{thm:main_theorem_AIPW} and \ref{thm:main_theorem_simultaneous}}\label{sec:proof_DR_estimator}

The following lemma, which directly follows from Lemma 2 in \cite{Zhou_et_al_2023} and its proof, plays important roles in the proofs of Theorems \ref{thm:main_theorem_AIPW} and \ref{thm:main_theorem_simultaneous}.

\medskip

\begin{lemma} \label{lem:concentration inequality_influence difference function}
Fix $t \in \{ 1,\ldots,T\}$. For any $\underline{d}_{t}\in \{0,1\}^t$, let $\{\Gamma_{i}(\underline{d}_{t})\}_{i=1}^{n}$ be i.i.d. random variables with bounded supports. For any $\underline{g}_{t:T} \in \MG_{t:T}$, let $\widetilde{Q}(\underline{g}_{t:T})  \equiv  \frac{1}{n} \sum_{i=1}^{n}  \Gamma_{i}^{\dag}(\underline{g}_{t:T})$, where $\Gamma_{i}^{\dag}(\underline{g}_{t:T}) \equiv \Gamma_{i}^{\dag}((g_{t}(H_{it}),\ldots,g_{T}(H_{iT})))$, and $Q(\underline{g}_{t:T}) \equiv  E[\widetilde{Q}(\underline{g}_{t:T})]$. For any $\underline{g}_{t:T}^{a},\underline{g}_{t:T}^{b} \in \MG_{t:T}$, denote $\widetilde{\Delta}(\underline{g}_{t:T}^{a},\underline{g}_{t:T}^{b}) \equiv \widetilde{Q}(\underline{g}_{t:T}^{a}) - \widetilde{Q}(\underline{g}_{t:T}^{b})$ and $\Delta(\underline{g}_{t:T}^{a},\underline{g}_{t:T}^{b}) \equiv Q(\underline{g}_{t:T}^{a}) - Q(\underline{g}_{t:T}^{b})$. Let $\kappa(\cdot)$ denote the entorpy integral defined in \cite{Zhou_et_al_2023}. Then, under Assumption \ref{asm:vc-class}, the following holds: For any $\delta \in (0,1)$, with probability at least $1-2\delta$,
\begin{align*}
    \sup_{\underline{g}_{t:T}^{a},\underline{g}_{t:T}^{b} \in \MG_{t:T}} \left|\widetilde{\Delta}(\underline{g}_{t:T}^{a},\underline{g}_{t:T}^{b})-\Delta(\underline{g}_{t:T}^{a},\underline{g}_{t:T}^{b})\right| &\leq \left(54.4 \sqrt{2}\kappa(\underline{g}_{t:T}) + 435.2 + \sqrt{2 \log \frac{1}{\delta}}\right)\sqrt{\frac{V_{t:T}^{\ast}}{n}} \\
    &+ o\left(\frac{1}{\sqrt{n}}\right),
\end{align*}
where $V_{t:T}^{\ast}  \equiv  \sup_{\underline{g}_{t:T}^{a},\underline{g}_{t:T}^{b} \in \MG_{t:T}}E
    \left[\left(\Gamma_{i}^{\dag}(\underline{g}_{t:T}^{a}) - \Gamma_{i}^{\dag}(\underline{g}_{t:T}^{b}) \right)^2\right].$
\end{lemma}

\medskip

\begin{proof}[Proof of Theorem \ref{thm:main_theorem_AIPW}]
Let us define
\begin{align*}
 \widetilde{V}_{i}(g)=&\sum_{t=1}^{T}\psi_{it}\left(\underline{g}_{t}\right)Y_{it}-\sum_{t=1}^{T}\psi_{it}\left(g\right)Q_{t}^{\underline{g}_{(t+1):T}}\left(H_{it},D_{i,}\right)  \nonumber\\
    & + \sum_{t=1}^{T}\psi_{i,t-1}\left(\underline{g}_{t-1}\right)Q_{t}^{\underline{g}_{(t+1):T}}\left(H_{it},g_{t}(H_{it})\right) , \\
     \widehat{V}_{i}(g)=&\sum_{t=1}^{T}\hat{\psi}_{it}^{-k(i)}\left(\underline{g}_{t}\right)Y_{it}-\sum_{t=1}^{T}\hat{\psi}_{it}^{-k(i)}\left(g\right)\widehat{Q}_{t}^{\underline{g}_{(t+1):T},-k(i)}\left(H_{it},D_{i,}\right)  \nonumber\\
    & + \sum_{t=1}^{T}\hat{\psi}_{i,t-1}^{-k(i)}\left(\underline{g}_{t-1}\right)\widehat{Q}_{t}^{\underline{g}_{(t+1):T},-k(i)}\left(H_{it},g_{t}(H_{it})\right) , 
\end{align*}
with $$\psi_{it}\left(\underline{g}_{t}\right) \equiv \frac{\prod_{s=1}^{t}1\{D_{is}=g_{s}(H_{is})\}}{\prod_{s=1}^{t}e_{t}(H_{is},g_s(H_{is}))}.$$
We also define $V(g) = E\left[\widetilde{V}_{i}(g)\right]$. Lemma B.2 in \cite{Sakaguchi_2024} shows that $V(g) = W(g)$ under Assumptions \ref{asm:sequential independence}.

For any $g^{a},g^{b} \in \MG$, we define
\begin{align*}
    \Delta(g^{a},g^{b}) &\equiv V(g^{a}) - V(g^{b}),\\
    \widetilde{\Delta}(g^{a},g^{b}) &\equiv \frac{1}{n}\sum_{i=1}^{n}\widetilde{V}_{i}(g^{a}) - \frac{1}{n}\sum_{i=1}^{n}\widetilde{V}_{i}(g^{b}),\\
    \widehat{\Delta}(g^{a},g^{b}) &\equiv \frac{1}{n}\sum_{i=1}^{n}\widehat{V}_{i}(g^{a}) - \frac{1}{n}\sum_{i=1}^{n}\widehat{V}_{i}(g^{b}).
\end{align*}

Let $g_{opt}^{\ast} \in \argmax_{g \in \MG} W(g)$. A standard argument of the statistical learning theory gives
\begin{align}
    W_{\MG}^{\ast} - W\left(\hat{g}^{AIPW}\right) &= \Delta(g_{opt}^{\ast},\hat{g}^{AIPW}) \nonumber \\
    & \leq \Delta(g_{opt}^{\ast},\hat{g}^{AIPW}) - \widehat{\Delta}(g_{opt}^{\ast},\hat{g}^{AIPW}) \nonumber\\
    & \leq \sup_{g^{a},g^{b}\in \MG} \left|\Delta(g^{a},g^{b}) - \widehat{\Delta}(g^{a},g^{b})\right| \nonumber \\
    & \leq \sup_{g^{a},g^{b}\in \MG} \left|\Delta(g^{a},g^{b}) - \widetilde{\Delta}(g^{a},g^{b})\right| 
    + \sup_{g^{a},g^{b}\in \MG} \left|\widehat{\Delta}(g^{a},g^{b}) - \widetilde{\Delta}(g^{a},g^{b})\right|, \label{eq:former_latter_decomposition}
\end{align}
where the first inequality follows because $\hat{g}^{AIPW}$ maximizes $\widehat{V}(g)$ over $\MG$; hence, $\widehat{\Delta}(g^{\ast},\hat{g}^{AIPW}) \leq 0$.

We can now evaluate $ W_{\MG}^{\ast} - W(\hat{g}^{AIPW})$ through evaluating $\sup_{g^{a},g^{b}\in \MG} \left|\Delta(g^{a},g^{b}) - \widetilde{\Delta}(g^{a},g^{b})\right|$ and $\sup_{g^{a},g^{b}\in \MG} \left|\widehat{\Delta}(g^{a},g^{b}) - \widetilde{\Delta}(g^{a},g^{b})\right|$. As for the former, under Assumptions \ref{asm:sequential independence}--\ref{asm:overlap}, we can apply Lemma \ref{lem:concentration inequality_influence difference function} to obtain the following result: For any stage $t$ and $\delta \in (0,1)$, with probability at least $1-2\delta$, 
\begin{align}
    \sup_{g^{a},g^{b}\in \MG} \left|\Delta(g^{a},g^{b}) - \widetilde{\Delta}(g^{a},g^{b})\right| &\leq \left(54.4 \sqrt{2}\kappa(\MG) + 435.2 + \sqrt{2 \log \frac{1}{\delta}}\right)\sqrt{\frac{V^{\ast}}{n}} + o\left(\frac{1}{\sqrt{n}}\right), \label{eq:bound_influence_difference_function_v2}
\end{align}
where 
    $V^{\ast}  \equiv  \sup_{g^{a},g^{b} \in \MG}E\left[\left(\widetilde{V}_{i}(g^{a}) - \widetilde{V}_{i}(g^{b}) \right)^2\right] < \infty.$
Note that $\kappa(\MG) \leq 2.5 \sqrt{VC(\MG)} < \infty$ from Remark 8 in \cite{Zhou_et_al_2023} and Assumption \ref{asm:vc-class}. 

As for the latter $\sup_{g^{a},g^{b}\in \MG} \left|\widehat{\Delta}(g^{a},g^{b}) - \widetilde{\Delta}(g^{a},g^{b})\right|$, under Assumptions \ref{asm:sequential independence}--\ref{asm:overlap} and \ref{asm:rate_of_convergence_AIPW}, we can obtain from Lemma A.3 in \cite{Sakaguchi_2024} that
\begin{align}
    \sup_{g^{a},g^{b}\in \MG} \left|\widehat{\Delta}(g^{a},g^{b}) - \widetilde{\Delta}(g^{a},g^{b})\right| = O_{p}\left(n^{-\min\{1/2,\tau\}}\right). \label{eq:latter_convergence}
\end{align}
Combining (\ref{eq:former_latter_decomposition}), (\ref{eq:bound_influence_difference_function_v2}), and (\ref{eq:latter_convergence}) leads to the result.
\end{proof}

\medskip

\begin{proof}[Proof of Lemma \ref{prop:mu_idntification}]

Under Assumption \ref{asm:exogenous_covariates} and the redefinition $H_t(\underline{d}_t) = (\underline{D}_{t-1},\underline{X}_t)$,
    \begin{align*}
        \widetilde{Y}_t\left(\underline{g}_t \right) &= \sum_{\underline{d}_{t} \in \{0,1\}^t} Y_t(\underline{d}_t)\cdot \prod_{s=1}^{t}1\left\{g_s\left(\underline{d}_{s-1},\underline{X}_s\right) = d_s\right\},
    \end{align*}
    where we use the fact that $\underline{X}_t = \underline{X}_t(\underline{d}_t)$ for any $\underline{d}_t$. Then we have 
    \begin{align*}
        W_t\left(\underline{g}_t\right) = \sum_{\underline{d}_{t} \in \{0,1\}^t}E_{P}\left[ \gamma_{t}Y_t(\underline{d}_t)\cdot \prod_{s=1}^{t}1\left\{g_s\left(\underline{d}_{s-1},\underline{X}_s\right) = d_s\right\}\right].
    \end{align*}

    It follows that 
    \begin{align*}
        &\sum_{\underline{d}_{t} \in \{0,1\}^t}E_{P}\left[\mu_t(\underline{d}_{t},\underline{X}_t)\cdot \prod_{s=1}^{t}1\{d_s = g_s(\underline{d}_{s-1},\underline{X}_s)\}\right] \\
        &= \sum_{\underline{d}_{t} \in \{0,1\}^t}E_{P}\left[E_{P}\left[\gamma_t Y_t(\underline{d}_{t})\cdot \prod_{s=1}^{t}1\{d_s = g_s(\underline{d}_{s-1},\underline{X}_s)\} \middle|\underline{D}_t = \underline{d}_t,\underline{X}_t \right]\right]\\
        &= \sum_{\underline{d}_{t} \in \{0,1\}^t}E_{P}\left[E_{P}\left[\gamma_t Y_t(\underline{d}_{t})\cdot \prod_{s=1}^{t}1\{d_s = g_s(\underline{d}_{s-1},\underline{X}_s)\} \middle|\underline{D}_{t-1} = \underline{d}_{t-1},\underline{X}_t \right]\right]\\
        &= \sum_{\underline{d}_{t} \in \{0,1\}^t}E_{P}\left[E_{P}\left[\gamma_t Y_t(\underline{d}_{t})\cdot \prod_{s=1}^{t}1\{d_s = g_s(\underline{d}_{s-1},\underline{X}_s)\} \middle|\underline{D}_{t-1} = \underline{d}_{t-1},\underline{X}_{t-1} \right]\right].
    \end{align*}
    where the second equality follows from Assumption \ref{asm:sequential independence} and the third follows by the law of iterated expectations. Applying the same argument recursively,  
   \begin{align*}
        &\sum_{\underline{d}_{t} \in \{0,1\}^t}E_{P}\left[\mu_t(\underline{d}_{t},\underline{X}_t)\cdot \prod_{s=1}^{t}1\{d_s = g_s(\underline{d}_{s-1},\underline{X}_s)\}\right] \\
        &= \sum_{\underline{d}_{t} \in \{0,1\}^t}E_{P}\left[\gamma_t Y_t(\underline{d}_{t})\cdot \prod_{s=1}^{t}1\{d_s = g_s(\underline{d}_{s-1},\underline{X}_s)\} \right]  = W_t\left(\underline{g}_t\right).
    \end{align*}
\end{proof}
\medskip

\begin{proof}[Proof of Theorem \ref{thm:main_theorem_simultaneous}]
    For any $g \in \MG$, let $\tilde{g} = (\tilde{g}_1,\ldots,\tilde{g}_T)$, where $\tilde{g}_t:\underline{\MX}_t \rightarrow \{0,1\}$, be recursively defined as 
\begin{align*}
    \tilde{g}_1(x_1) &= g_1(x_1),\\
    \tilde{g}_2(\underline{x}_2) &= g_2(\tilde{g}_1(x_1),\underline{x}_2),\\
    \tilde{g}_3(\underline{x}_3) &= g_2(\tilde{g}_1(x_1),\tilde{g}_2(\underline{x}_2),\underline{x}_3),\\
    & \vdots\\
    \tilde{g}_t(\underline{x}_t) &= g_t(\tilde{g}_1(x_1),\tilde{g}_2(\underline{x}_2),\ldots,\tilde{g}_{t-1}(\underline{x}_{t-1}),\underline{x}_t),\\
    & \vdots\\
    \tilde{g}_T(\underline{x}_T) &= g_T(\tilde{g}_1(x_1),\tilde{g}_2(\underline{x}_2),\ldots,\tilde{g}_{T-1}(\underline{x}_{T-1}),\underline{x}_T).
\end{align*}
For the class $\MG$ of DTRs, we define the class of $\tilde{g}$ as 
\begin{align*}
    \widetilde{\MG} \equiv \left\{\tilde{g}=(\tilde{g}_{1},\ldots,\tilde{g}_{T}): g \in \MG\right\}.
\end{align*}
Let $\widetilde{\MG}_{1:t} := \widetilde{\MG}_{1} \times \cdots \times \widetilde{\MG}_{t}$. We denote by $\kappa(\cdot)$ the entropy integral defined in \cite{Zhou_et_al_2023}. Note that $\kappa(\widetilde{\MG}) \leq \kappa(\MG)$.

Without loss of generality, we suppose that $\gamma_{1}=\gamma_2 = \cdots = \gamma_{T}$. Let us define, for $t=1,\ldots,T$,
\begin{align*}
{\Gamma}_{it}(\underline{d}_{t}) &\equiv \frac{ Y_{it} - {\mu}_{t}(\underline{d}_{t},\underline{X}_{it}) }{\eta_t(\underline{d}_{t},\underline{X}_{it})}\cdot 1\{\underline{D}_{it}=\underline{d}_t\} + {\mu}_{t}(\underline{d}_{t},\underline{X}_{it}),\\
\widehat{\Gamma}_{it}(\underline{d}_{t}) &\equiv \frac{ Y_{it} - \hat{\mu}_{t}^{-k(i)}(\underline{d}_{t},\underline{X}_{it}) }{\hat{\eta}_{t}^{-k(i)}(\underline{d}_{t},\underline{X}_{it})}\cdot 1\{\underline{D}_{it}=\underline{d}_t\} + \hat{\mu}_{t}^{-k(i)}(\underline{d}_{t},\underline{X}_{it}),
\end{align*}
where $\eta_t(\underline{d}_t,\underline{x}_t) \equiv \prod_{s=1}^{t}e_{s}(d_{s},H_{is})$ and $\hat{\eta}_{t}^{-k(i)}(\underline{d}_t,\underline{x}_t) \equiv \prod_{s=1}^{t}\hat{e}_{s}^{-k(i)}(d_{s},H_{is})$. 

Let $\underline{\tilde{g}}_t(\underline{x}_t) = (\tilde{g}_{1}(x_1),\tilde{g}_{2}(\underline{x}_2),\ldots,\tilde{g}_{t}(\underline{x}_t))$. Given a fixed DTR $\tilde{g} \in \widetilde{\MG}$ constructed from $g \in \MG$, with some abuse of the notation, we define for $t=1,\ldots,T$,
\begin{align*}
\Gamma_{it}(\underline{\tilde{g}}_{t}) &\equiv \sum_{\underline{d}_t \in \{0,1\}^t} {\Gamma}_{it}(\underline{d}_{t}) \cdot 1\{\underline{\tilde{g}}_t(\underline{x}_t)=\underline{d}_t\},\\
\widehat{\Gamma}_{it}(\underline{\tilde{g}}_{t}) &\equiv \sum_{\underline{d}_t \in \{0,1\}^t} \widehat{\Gamma}_{it}(\underline{d}_{t}) \cdot 1\{\underline{\tilde{g}}_t(\underline{x}_t)=\underline{d}_t\}.
\end{align*}
Note that $(1/n)\sum_{t=1}^{t}\sum_{i=1}^{n}\widehat{\Gamma}_{it}(\underline{\tilde{g}}_{t}) = \widehat{W}^{DR}(g)$. Hence,  $$\max_{\tilde{g} \in \widetilde{\MG}} \frac{1}{n}\sum_{t=1}^{t}\sum_{i=1}^{n}\widehat{\Gamma}_{it}(\underline{\tilde{g}}_{t}) = \max_{g\in \MG} \widehat{W}^{DR}(g).$$
Note also that $(1/n)\sum_{i=1}^{n}\Gamma_{it}\left(\underline{\tilde{g}}_t\right)$ is an oracle estimate of $W_t(\underline{g}_t)$ with oracle access to $\{\mu_s(\cdot,\cdot):s=1,\ldots,t\}$ and $\{e_{s}(\cdot,\cdot):s=1,\ldots,t\}$.
For $\underline{g}_t \in \MG_{1:t}$, we define $\widetilde{W}_t(\underline{\tilde{g}}_t) \equiv E\left[\Gamma_{it}(\underline{\tilde{g}}_t)\right]$. Note that $W_t(\underline{g}_t) = \widetilde{W}_t(\underline{\tilde{g}}_t)$ by Lemma \ref{prop:mu_idntification} under Assumptions \ref{asm:sequential independence} and \ref{asm:exogenous_covariates}.

Following the analysis of \cite{Zhou_et_al_2023}, we define the policy value difference function $\Delta_{t}(\cdot;\cdot):\widetilde{\MG}_{1:t} \times \widetilde{\MG}_{1:t} \rightarrow \Real$, the oracle influence difference function $\widetilde{\Delta}_{t}(\cdot;\cdot):\widetilde{\MG}_{1:t} \times \widetilde{\MG}_{1:t} \rightarrow \Real$, and the estimated policy value difference function $\widehat{\Delta}_{t}(\cdot;\cdot):\widetilde{\MG}_{1:t} \times \widetilde{\MG}_{1:t} \rightarrow \Real$, respectively, as follows: For $\underline{\tilde{g}}_{t}^{a}=(\tilde{g}_{1}^{a},\ldots,\tilde{g}_{t}^{a}) \in \widetilde{\MG}_{1:t}$ and $\underline{\tilde{g}}_{t}^{b}=(\tilde{g}_{1}^{b},\ldots,\tilde{g}_{t}^{b}) \in \widetilde{\MG}_{1:t}$,
\begin{align*}
&\Delta_{t}(\underline{\tilde{g}}_{t}^{a};\underline{\tilde{g}}_{t}^{b})\equiv \widetilde{W}_{t}(\underline{\tilde{g}}_{t}^{a}) - \widetilde{W}_{t}(\underline{\tilde{g}}_{t}^{b}), \\ 
 &\widetilde{\Delta}_{t}(\underline{\tilde{g}}_{t}^{a};\underline{\tilde{g}}_{t}^{b}) \equiv \frac{1}{n}\sum_{i=1}^{n} \Gamma_{it}\left(\underline{\tilde{g}}_{t}^{a}\right) - \frac{1}{n}\sum_{i=1}^{n} \Gamma_{it}\left(\underline{\tilde{g}}_{t}^{b}\right), \\
 & \widehat{\Delta}_{t}(\underline{\tilde{g}}_{t}^{a};\underline{\tilde{g}}_{t}^{b}) \equiv  \frac{1}{n}\sum_{i=1}^{n} \widehat{\Gamma}_{it}\left(\underline{\tilde{g}}_{t}^{a}\right) -  \frac{1}{n}\sum_{i=1}^{n} \widehat{\Gamma}_{it}\left(\underline{\tilde{g}}_{t}^{b}\right). \nonumber
\end{align*}
Note that $\widetilde{\Delta}_{t}(\underline{\tilde{g}}_{t}^{a};\underline{\tilde{g}}_{t}^{b})$ is an unbiased estimator of the policy value difference function $\Delta_{t}(\underline{\tilde{g}}_{t}^{a};\underline{\tilde{g}}_{t}^{b})$. From the definitions, 
\begin{align*}
    W_{\MG}^{\ast} - W(\hat{g}^{DR}) = \sum_{t=1}^{T}\Delta_{t}\left(\underline{\tilde{g}}_{1:t}^{\ast};\tilde{\hat{\underline{g}}}_{1:t}^{DR}\right).
\end{align*}

A standard argument of the statistical learning theory gives
\begin{align}
    W_{\MG}^{\ast} - W(\hat{g}^{DR})
&=\sum_{t=1}^{T}\Delta_{t}\left(\underline{\tilde{g}}_{1:t}^{\ast};\tilde{\hat{\underline{g}}}_{1:t}^{DR}\right) \nonumber \\
    &\leq \sum_{t=1}^{T}\Delta_{t}\left(\underline{\tilde{g}}_{1:t}^{\ast};\tilde{\hat{\underline{g}}}_{1:t}^{DR}\right)- \sum_{t=1}^{T}\widehat{\Delta}_{t}\left(\underline{\tilde{g}}_{1:t}^{\ast};\tilde{\hat{\underline{g}}}_{1:t}^{DR}\right) \nonumber \\
    &\leq \sum_{t=1}^{T}\sup_{\underline{\tilde{g}}_{t}^{a},\underline{\tilde{g}}_{t}^{b} \in \widetilde{\MG}_{1:t}} |\Delta_{t}(\underline{\tilde{g}}_{t}^{a};\underline{\tilde{g}}_{t}^{b}) - \widehat{\Delta}_{t}(\underline{\tilde{g}}_{t}^{a};\underline{\tilde{g}}_{t}^{b})| \nonumber \\
    & \leq  \sum_{t=1}^{T}\sup_{\underline{\tilde{g}}_{t}^{a},\underline{\tilde{g}}_{t}^{b} \in \widetilde{\MG}_{1:t}} |\Delta_{t}(\underline{\tilde{g}}_{t}^{a};\underline{\tilde{g}}_{t}^{b}) - \widetilde{\Delta}_{t}(\underline{\tilde{g}}_{t}^{a};\underline{\tilde{g}}_{t}^{b})| \nonumber \\ 
    &+ \sum_{t=1}^{T} \sup_{\underline{\tilde{g}}_{t}^{a},\underline{\tilde{g}}_{t}^{b} \in \widetilde{\MG}_{1:t}} |\widehat{\Delta}_{t}(\underline{\tilde{g}}_{t}^{a};\underline{\tilde{g}}_{t}^{b}) - \widetilde{\Delta}_{t}(\underline{\tilde{g}}_{t}^{a};\underline{\tilde{g}}_{t}^{b})|, \label{eq:standard_inequality}
\end{align}
where the first inequality follows because $\tilde{\hat{g}}^{DR}$ maximizes  $(1/n)\sum_{t=1}^{T}\sum_{i=1}^{n} \widehat{\Gamma}_{it}\left(\underline{\tilde{g}}_{1:t}\right)$ over $\widetilde{\MG}$; hence, $\sum_{t=1}^{T}\widehat{\Delta}_{t}\left(\underline{\tilde{g}}_{1:t}^{\ast};\tilde{\hat{\underline{g}}}_{1:t}^{DR}\right) \leq 0$. 

We can now evaluate $ W_{\MG}^{\ast} - W(\hat{g}^{DR})$ through evaluating $\sup_{\underline{\tilde{g}}_{t}^{a},\underline{\tilde{g}}_{t}^{b} \in \widetilde{\MG}_{1:t}} |\Delta_{t}(\underline{\tilde{g}}_{t}^{a};\underline{\tilde{g}}_{t}^{b}) - \widetilde{\Delta}_{t}(\underline{\tilde{g}}_{t}^{a};\underline{\tilde{g}}_{t}^{b})|$ and $\sup_{\underline{\tilde{g}}_{t}^{a},\underline{\tilde{g}}_{t}^{b} \in \widetilde{\MG}_{1:t}} |\widehat{\Delta}_{t}(\underline{\tilde{g}}_{t}^{a};\underline{\tilde{g}}_{t}^{b}) - \widetilde{\Delta}_{t}(\underline{\tilde{g}}_{t}^{a};\underline{\tilde{g}}_{t}^{b})|$ for each $t$. As for the former, under Assumptions \ref{asm:sequential independence}--\ref{asm:overlap}, we can apply Lemma \ref{lem:concentration inequality_influence difference function} for the oracle influence difference function with some modifications to the notations to obtain the following result: For any stage $t$ and $\delta \in (0,1)$, with probability at least $1-2\delta$, 
\begin{align}
    \sup_{\underline{\tilde{g}}_{t}^{a},\underline{\tilde{g}}_{t}^{b} \in \widetilde{\MG}_{1:t}} \left|\widetilde{\Delta}_{t}(\underline{\tilde{g}}_{t}^{a};\underline{\tilde{g}}_{t}^{b})-\Delta_{t}(\underline{\tilde{g}}_{t}^{a};\underline{\tilde{g}}_{t}^{b})\right| &\leq \left(54.4 \sqrt{2}\kappa(\widetilde{\MG}_{1:t}) + 435.2 + \sqrt{2 \log \frac{1}{\delta}}\right)\sqrt{\frac{V_{t}^{\ast}}{n}} \nonumber \\
    &+ o\left(\frac{1}{\sqrt{n}}\right), \label{eq:bound_influence_difference_function}
\end{align}
where 
    $$V_{t}^{\ast}  \equiv  \sup_{\underline{\tilde{g}}_{t}^{a},\underline{\tilde{g}}_{t}^{b} \in \widetilde{\MG}_{1:t}}E\left[\left(\Gamma_{it}(\underline{\tilde{g}}_{t}^{a}) - \Gamma_{it}(\underline{\tilde{g}}_{t}^{b}) \right)^2\right] < \infty.$$

Note that from Remark 8 in \cite{Zhou_et_al_2023}, $\kappa(\widetilde{\MG}_{1:t}) \leq 2.5 \sqrt{VC(\widetilde{\MG}_{1:t})}$. Using Lemma \ref{lem:vc_subclass} and the fact that $VC(\widetilde{\MG}_{1:t}) \leq VC(\MG_{1:t})$, it follwos that 
\begin{align}
  \kappa(\widetilde{\MG}_{1:t}) \leq  2.5 \sqrt{VC(\MG_{1:t})} \leq 2.5 \sum_{s=1}^{t}v_{s} < \infty, \label{eq:bounded_entoropy}
\end{align}
where the last inequality follows from Assumption \ref{asm:vc-class}.

We next consider evaluating $\sup_{\underline{\tilde{g}}_{t}^{a},\underline{\tilde{g}}_{t}^{b} \in \widetilde{\MG}_{1:t}} |\widehat{\Delta}_{t}(\underline{\tilde{g}}_{t}^{a};\underline{\tilde{g}}_{t}^{b}) - \widetilde{\Delta}_{t}(\underline{\tilde{g}}_{t}^{a};\underline{\tilde{g}}_{t}^{b})|$. We employ the general strategy of the poof of \citeauthor{Zhou_et_al_2023} (\citeyear{Zhou_et_al_2023}, Lemma 3).
Fix $t$. For any $\underline{d}_{t} \in \{0,1\}^t$, let 
\begin{align*}
 &\widetilde{\Delta}_{t}^{\underline{d}_{t}}(\underline{\tilde{g}}_{t}^{a};\underline{\tilde{g}}_{t}^{b}) \equiv \frac{1}{n}\sum_{i=1}^{n} \Gamma_{it}^{\underline{d}_{t}}\left(\underline{\tilde{g}}_{t}^{a}\right) - \frac{1}{n}\sum_{i=1}^{n} \Gamma_{it}^{\underline{d}_{t}}\left(\underline{\tilde{g}}_{t}^{b}\right), \\
 & \widehat{\Delta}_{t}^{\underline{d}_{t}}(\underline{\tilde{g}}_{t}^{a};\underline{\tilde{g}}_{t}^{b}) \equiv  \frac{1}{n}\sum_{i=1}^{n} \widehat{\Gamma}_{it}^{\underline{d}_{t}}\left(\underline{\tilde{g}}_{t}^{a}\right) -  \frac{1}{n}\sum_{i=1}^{n} \widehat{\Gamma}_{it}^{\underline{d}_{t}}\left(\underline{\tilde{g}}_{t}^{b}\right), \nonumber
\end{align*}
with $\Gamma_{it}^{\underline{d}_{t}}(\underline{\tilde{g}}_{t}) \equiv {\Gamma}_{it}(\underline{d}_{t}) \cdot 1\{\underline{\tilde{g}}_t(\underline{x}_t)=\underline{d}_t\}$ and $
\widehat{\Gamma}_{it}^{\underline{d}_{t}}(\underline{\tilde{g}}_{t}) \equiv \sum_{\underline{d}_t \in \{0,1\}^t} \widehat{\Gamma}_{it}(\underline{d}_{t}) \cdot 1\{\underline{\tilde{g}}_t(\underline{x}_t)=\underline{d}_t\}$. 
Noting that $\widehat{\Delta}_{t}(\underline{\tilde
{g}}_{t}^{a};\underline{\tilde
{g}}_{t}^{b}) - \widetilde{\Delta}_{t}(\underline{\tilde
{g}}_{t}^{a};\underline{\tilde
{g}}_{t}^{b}) = \sum_{\underline{d}_{t} \in \{0,1\}^t}\left(\widehat{\Delta}_{t}^{\underline{d}_{t}}(\underline{\tilde
{g}}_{t}^{a};\underline{\tilde
{g}}_{t}^{b}) - \widetilde{\Delta}_{t}^{\underline{d}_{t}}(\underline{\tilde
{g}}_{t}^{a};\underline{\tilde
{g}}_{t}^{b})\right)$, we will provide an upper bound for each $\widehat{\Delta}_{t}^{\underline{d}_{t}}(\underline{\tilde
{g}}_{t}^{a};\underline{\tilde
{g}}_{t}^{b}) - \widetilde{\Delta}_{t}^{\underline{d}_{t}}(\underline{\tilde
{g}}_{t}^{a};\underline{\tilde
{g}}_{t}^{b})$. To do so, we make the following decomposition:
\begin{align*}
    &\widehat{\Delta}_{t}^{\underline{d}_{t}}(\underline{\tilde{g}}_{t}^{a};\underline{\tilde{g}}_{t}^{b}) - \widetilde{\Delta}_{t}^{\underline{d}_{t}}(\underline{\tilde{g}}_{t}^{a};\underline{\tilde{g}}_{t}^{b})
    = S_{1,t}^{\underline{d}_{t}}(\underline{\tilde{g}}_{t}^{a};\underline{\tilde{g}}_{t}^{b})
    + S_{2,t}^{\underline{d}_{t}}(\underline{\tilde{g}}_{t}^{a};\underline{\tilde{g}}_{t}^{b})
    + S_{3,t}^{\underline{d}_{t}}
    (\underline{\tilde{g}}_{t}^{a};\underline{\tilde{g}}_{t}^{b}),
\end{align*}
where 
\begin{align*}
        S_{(A1),t}^{\underline{d}_{t}}(\underline{\tilde{g}}_{t}^{a};\underline{\tilde{g}}_{t}^{b})&\equiv \frac{1}{n}\sum_{i=1}^{n}G_{i,\underline{\tilde{g}}_{t}^{a},\underline{\tilde{g}}_{t}^{b}}^{\underline{d}_{t}}\left(\hat{\mu}_{t}^{-k(i)}(\underline{d}_{t},\underline{X}_{it}) - {\mu}_{t}^{-k(i)}(\underline{d}_{t},\underline{X}_{it})\right) \left(1 - \frac{1\{\underline{D}_{it} = \underline{d}_{t}\}}{{\eta}_{t}(\underline{d}_{t},\underline{X}_{it})}\right),
    \\
      S_{(A2),t}^{\underline{d}_{t}}(\underline{\tilde{g}}_{t}^{a};\underline{\tilde{g}}_{t}^{b}) &\equiv \frac{1}{n}\sum_{i=1}^{n}G_{i,\underline{\tilde{g}}_{t}^{a},\underline{\tilde{g}}_{t}^{b}}^{\underline{d}_{t}}\left(Y_{it} - \mu_{t}^{-k(i)}(\underline{d}_{t},\underline{X}_{it})\right)\left(\frac{1\{\underline{D}_{it} = \underline{d}_{t}\}}{\hat{\eta}_{t}^{-k}(\underline{d}_{t},\underline{X}_{it})} - \frac{1\{\underline{D}_{it} = \underline{d}_{t}\}}{{\eta}_{t}(\underline{d}_{t},\underline{X}_{it})}\right), \\
   S_{(A3),t}^{\underline{d}_{t}}(\underline{\tilde{g}}_{t}^{a};\underline{\tilde{g}}_{t}^{b})& \equiv \frac{1}{n}\sum_{i=1}^{n}G_{i,\underline{\tilde{g}}_{t}^{a},\underline{\tilde{g}}_{t}^{b}}^{\underline{d}_{t}}\left({\mu}_{t}^{-k(i)}(\underline{d}_{t},\underline{X}_{it}) - \hat{\mu}_{t}^{-k(i)}(\underline{d}_{t},\underline{X}_{it})\right)\left(\frac{1\{\underline{D}_{it} = \underline{d}_{t}\}}{\hat{\eta}_{t}^{-k}(\underline{d}_{t},\underline{X}_{it})} - \frac{1\{\underline{D}_{it} = \underline{d}_{t}\}}{{\eta}_{t}(\underline{d}_{t},\underline{X}_{it})}\right),
\end{align*}
with $G_{i,\underline{\tilde{g}}_{t}^{a},\underline{\tilde{g}}_{t}^{b}}^{\underline{d}_{t}} := 1\{\underline{\tilde{g}}_{t}^{a}(\underline{X}_{it})=\underline{d}_t\} - 1\{\underline{\tilde{g}}_{t}^{b}(\underline{X}_{it})=\underline{d}_t\}$.

For each fold $k$, define
\begin{align*}
S_{(A1),t}^{\underline{d}_{t},k}(\underline{\tilde{g}}_{t}^{a};\underline{\tilde{g}}_{t}^{b})& \equiv \frac{1}{n}\sum_{\{i|k(i)=k\}}G_{i,\underline{\tilde{g}}_{t}^{a},\underline{\tilde{g}}_{t}^{b}}^{\underline{d}_{t}}\left(\hat{\mu}_{t}^{-k}(\underline{d}_{t},\underline{X}_{it}) - {\mu}_{t}^{-k}(\underline{d}_{t},\underline{X}_{it})\right) \left(1 - \frac{1\{\underline{D}_{it} = \underline{d}_{t}\}}{{\eta}_{t}(\underline{d}_{t},\underline{X}_{it})}\right); \\
S_{(A2),t}^{\underline{d}_{t},k}(\underline{\tilde{g}}_{t}^{a};\underline{\tilde{g}}_{t}^{b}) &\equiv \frac{1}{n}\sum_{\{i|k(i)=k\}}G_{i,\underline{\tilde{g}}_{t}^{a},\underline{\tilde{g}}_{t}^{b}}^{\underline{d}_{t}}\left(Y_{it} - \mu_{t}^{-k}(\underline{d}_{t},\underline{X}_{it})\right)\left(\frac{1\{\underline{D}_{it} = \underline{d}_{t}\}}{\hat{\eta}_{t}^{-k}(\underline{d}_{t},\underline{X}_{it})} - \frac{1\{\underline{D}_{it} = \underline{d}_{t}\}}{{\eta}_{t}(\underline{d}_{t},\underline{X}_{it})}\right).
\end{align*}
Note that $S_{(A1),t}^{\underline{d}_{t}}(\underline{\tilde{g}}_{t}^{a};\underline{\tilde{g}}_{t}^{b}) = \sum_{k=1}^{K}S_{(A1),t}^{\underline{d}_{t},k}(\underline{\tilde{g}}_{t}^{a};\underline{\tilde{g}}_{t}^{b})$ and $S_{(A2),t}^{\underline{d}_{t}}(\underline{\tilde{g}}_{t}^{a};\underline{\tilde{g}}_{t}^{b}) = \sum_{k=1}^{K}S_{(A2),t}^{\underline{d}_{t},k}(\underline{\tilde{g}}_{t}^{a};\underline{\tilde{g}}_{t}^{b})$.

Fix $k \in \{1,\ldots,K\}$. We first consider $S_{(A1),t}^{\underline{d}_{t}}(\underline{\tilde{g}}_{t}^{a};\underline{\tilde{g}}_{t}^{b})$. Since $\hat{\mu}_{t}^{-k}(\underline{d}_{t},\cdot)$ is computed using the data in the rest $K-1$ folds, when the data $\{Z_i : k(i) \neq k\}$ in the rest $K-1$ folds is conditioned,  $\hat{\mu}_{t}^{-k}(\underline{d}_{t},\cdot)$ is fixed; hence, $\widetilde{S}_{(A1),t}^{\underline{d}_{t},k}\left(\underline{\tilde{g}}_{t}^{a};\underline{\tilde{g}}_{t}^{b}\right)$ is a sum of i.i.d. bounded random variables under Assumptions \ref{asm:bounded outcome}, \ref{asm:overlap}, and \ref{asm:rate_of_convergence_simultaneous} (ii).

It follows that 
\begin{align*}	&E\left[G_{i,\underline{\tilde{g}}_{t}^{a};\underline{\tilde{g}}_{t}^{b}}^{\underline{d}_{t}}  \left(\hat{\mu}_{t}^{-k}(\underline{d}_{t},\underline{X}_{it}) - {\mu}_{t}^{-k}(\underline{d}_{t},\underline{X}_{it})\right) \left(1 - \frac{1\{\underline{D}_{it} = \underline{d}_{t}\}}{{\eta}_{t}(\underline{d}_{t},\underline{X}_{it})}\right)\middle|\hat{\mu}_{t}^{-k}(\underline{d}_{t},\cdot)\right]\\
=&E\left[G_{i,\underline{\tilde{g}}_{t}^{a};\underline{\tilde{g}}_{t}^{b}}^{\underline{d}_{t}}  \left(\hat{\mu}_{t}^{-k}(\underline{d}_{t},\underline{X}_{it}) - {\mu}_{t}^{-k}(\underline{d}_{t},\underline{X}_{it})\right) E\left[\left(1 - \frac{1\{\underline{D}_{it} = \underline{d}_{t}\}}{{\eta}_{t}(\underline{d}_{t},\underline{X}_{it})}\right)\middle|\underline{X}_{it} \right]\middle|\hat{\mu}_{t}^{-k}(\underline{d}_{t},\cdot)\right]\\
=&	0,
\end{align*}
where the last line follows from the sequential independence assumption (Assumption \ref{asm:sequential independence}).
Hence, $\sup_{\underline{\tilde{g}}_{t}^a , \underline{\tilde{g}}_{t}^b \in  \widetilde{\MG}_{1:t}} \left| ,k\widetilde{S}_{(A1),t}^{\underline{d}_{t}}\left(\underline{\tilde{g}}_{t}^{a};\underline{\tilde{g}}_{t}^{b}\right)\right|$ can be written as
\begin{align*}
	&\sup_{\underline{\tilde{g}}_{t}^{a},\underline{\tilde{g}}_{t}^{b}\in \widetilde{\MG}_{1:t}}\left|\widetilde{S}_{(A1),t}^{\underline{d}_{t},k}\left(\underline{\tilde{g}}_{t}^{a};\underline{\tilde{g}}_{t}^{b}\right)\right|\\
&=	\frac{1}{K}\sup_{\underline{\tilde{g}}_{t}^{a},\underline{\tilde{g}}_{t}^{b}\in \widetilde{\MG}_{1:t}}\left|\frac{1}{n/K}\sum_{i \in I_k }\left\{ 
G_{i,\underline{\tilde{g}}_{t}^{a};\underline{\tilde{g}}_{t}^{b}}^{\underline{d}_{t}}  \left(\hat{\mu}_{t}^{-k}(\underline{d}_{t},\underline{X}_{it}) - {\mu}_{t}^{-k}(\underline{d}_{t},\underline{X}_{it})\right) \left(1 - \frac{1\{\underline{D}_{it} = \underline{d}_{t}\}}{{\eta}_{t}(\underline{d}_{t},\underline{X}_{it})}\right)
\right.\right.\\
&\left. \left. - E\left[G_{i,\underline{\tilde{g}}_{t}^{a};\underline{\tilde{g}}_{t}^{b}}^{\underline{d}_{t}}  \left(\hat{\mu}_{t}^{-k}(\underline{d}_{t},\underline{X}_{it}) - {\mu}_{t}^{-k}(\underline{d}_{t},\underline{X}_{it})\right) \left(1 - \frac{1\{\underline{D}_{it} = \underline{d}_{t}\}}{{\eta}_{t}(\underline{d}_{t},\underline{X}_{it})}\right)\middle|\hat{\mu}_{t}^{-k}(\underline{d}_{t},\cdot)\right]\right\} \right|.
\end{align*}

By applying Lemma \ref{lem:concentration inequality_influence difference function} with setting $i \in I_k$ and 
\begin{align*}
    \Gamma_{i}(d_t)=G_{i,\underline{\tilde{g}}_{t}^{a};\underline{\tilde{g}}_{t}^{b}}^{\underline{d}_{t}}  \left(\hat{\mu}_{t}^{-k}(\underline{d}_{t},\underline{X}_{it}) - {\mu}_{t}^{-k}(\underline{d}_{t},\underline{X}_{it})\right) \left(1 - \frac{1\{\underline{D}_{it} = \underline{d}_{t}\}}{{\eta}_{t}(\underline{d}_{t},\underline{X}_{it})}\right),
\end{align*}
the following holds: $\forall \delta > 0$, with probability at least $1-2\delta$,
\begin{align*}
    	&\sup_{\underline{\tilde{g}}_{t}^{a},\underline{\tilde{g}}_{t}^{b}\in \widetilde{\MG}_{1:t}}\left|\widetilde{S}_{(A1),t}^{\underline{d}_{t},k}\left(\underline{\tilde{g}}_{t}^{a};\underline{\tilde{g}}_{t}^{b}\right)\right|\\
&\leq	o\left(n^{-1/2}\right)+\left(54.4\kappa\left(\widetilde{\MG}_{1:t}\right)+435.2+\sqrt{2\log(1/\delta)}\right) \\
	&\times\left[\sup_{\underline{\tilde{g}}_{t}^{a},\underline{\tilde{g}}_{t}^{b}\in \widetilde{\MG}_{1:t}}E\left[\left(G_{i,\underline{\tilde{g}}_{t}^{a};\underline{\tilde{g}}_{t}^{b}}^{\underline{d}_{t}}\right)^{2}\left(\hat{\mu}_{t}^{-k}(\underline{d}_{t},\underline{X}_{it}) - {\mu}_{t}^{-k}(\underline{d}_{t},\underline{X}_{it})\right)^{2} \right.\right.\\
&	\left.\left.\left.\times  \left(1 - \frac{1\{\underline{D}_{it} = \underline{d}_{t}\}}{{\eta}_{t}(\underline{d}_{t},\underline{X}_{it})}\right)\right|\hat{\mu}_{t}^{-k}\left(\underline{d}_{t},\cdot\right) \right] \middle/ \left(\frac{n}{K}\right) \right]^{1/2} \\
&\leq	o\left(n^{-1/2}\right)+\sqrt{K} \cdot \left(54.4\kappa\left(\widetilde{\MG}_{1:t}\right)+435.2+\sqrt{2\log(1/\delta)}\right)
	\cdot \left(1-\frac{1}{\eta}\right)^{t}\\
	&\times \sqrt{\frac{E\left[\left.\left(\hat{\mu}_{t}^{-k}(\underline{d}_{t},\underline{X}_{it}) - {\mu}_{t}^{-k}(\underline{d}_{t},\underline{X}_{it})\right)^2\right|\hat{\mu}_{t}^{-k}\left(\underline{d}_{t},\cdot\right) \right]}{n}},
\end{align*}
where the last inequality follows from $\left(G_{i,\underline{\tilde{g}}_{t}^{a};\underline{\tilde{g}}_{t}^{b}}^{\underline{d}_{t}}\right)^{2}\leq 1$ a.s. and Assumption \ref{asm:overlap} (overlap condition). From Assumptions \ref{asm:bounded outcome} and \ref{asm:rate_of_convergence_simultaneous} (ii), we have $E\left[\left(\hat{\mu}_{t}^{-k}(\underline{d}_{t},\underline{X}_{it}) - {\mu}_{t}^{-k}(\underline{d}_{t},\underline{X}_{it})\right)^2\right]<\infty$. Hence, Markov's inequality leads to
\begin{align*}
    E\left[\left(\hat{\mu}_{t}^{-k}(\underline{d}_{t},\underline{X}_{it}) - {\mu}_{t}^{-k}(\underline{d}_{t},\underline{X}_{it})\right)^2\middle|\hat{\mu}_{t}^{-k}\left(\underline{d}_{t},\cdot\right) \right] = O_p(1).
\end{align*}
Note also that $\kappa(\widetilde{\MG}_{1:t})<\infty$ from (\ref{eq:bounded_entoropy}). Combining these results, we have
\begin{align*}
    \sup_{\underline{\tilde{g}}_{t}^{a},\underline{\tilde{g}}_{t}^{b}\in \widetilde{\MG}_{1:t}}\left|\widetilde{S}_{(A1),t}^{\underline{d}_{t},k}\left(\underline{\tilde{g}}_{t}^{a};\underline{\tilde{g}}_{t}^{b}\right)\right| = O_p \left(\frac{1}{\sqrt{n}}\right). 
\end{align*}
Consequently, 
\begin{align}
    \sup_{\underline{\tilde{g}}_{t}^{a},\underline{\tilde{g}}_{t}^{b}\in \widetilde{\MG}_{1:t}}\left|\widetilde{S}_{(A1),t}^{\underline{d}_{t}}\left(\underline{\tilde{g}}_{t}^{a};\underline{\tilde{g}}_{t}^{b}\right)\right| \leq  \sum_{k=1}^{K}\sup_{\underline{\tilde{g}}_{t}^{a},\underline{\tilde{g}}_{t}^{b}\in \widetilde{\MG}_{1:t}}\left|\widetilde{S}_{(A1),t}^{\underline{d}_{t},k}\left(\underline{\tilde{g}}_{t}^{a};\underline{\tilde{g}}_{t}^{b}\right)\right| = O_p \left(\frac{1}{\sqrt{n}}\right). \label{eq:S_A1_bound}
\end{align}
By the same argument, we have $\sup_{\underline{\tilde{g}}_{t}^{a},\underline{\tilde{g}}_{t}^{b}\in \widetilde{\MG}_{1:t}}\left|\widetilde{S}_{(A2),t}^{\underline{d}_{t},k}\left(\underline{\tilde{g}}_{t}^{a};\underline{\tilde{g}}_{t}^{b}\right)\right| = O_p \left(1/\sqrt{n}\right)$. Hence 
\begin{align}
    \sup_{\underline{\tilde{g}}_{t}^{a},\underline{\tilde{g}}_{t}^{b}\in \widetilde{\MG}_{1:t}}\left|\widetilde{S}_{(A2),t}^{\underline{d}_{t}}\left(\underline{\tilde{g}}_{t}^{a};\underline{\tilde{g}}_{t}^{b}\right)\right| \leq \sum_{k=1}^{K} \sup_{\underline{\tilde{g}}_{t}^{a},\underline{\tilde{g}}_{t}^{b}\in \widetilde{\MG}_{1:t}}\left|\widetilde{S}_{(A2),t}^{\underline{d}_{t},k}\left(\underline{\tilde{g}}_{t}^{a};\underline{\tilde{g}}_{t}^{b}\right)\right| = O_p \left(\frac{1}{\sqrt{n}}\right). \label{eq:S_A2_bound}
\end{align}

We next consider to bound $S_{(A3),t}^{\underline{d}_{t}}(\cdot,\cdot)$ from above. It follows that 
\begin{align*}
    &\sup_{\underline{\tilde{g}}_{t}^{a},\underline{\tilde{g}}_{t}^{b}\in \widetilde{\MG}_{1:t}}\left|\widetilde{S}_{(A3),t}^{\underline{d}_{t}}\left(\underline{\tilde{g}}_{t}^{a};\underline{\tilde{g}}_{t}^{b}\right)\right|\\
&= \frac{1}{n}\sup_{\underline{\tilde{g}}_{t}^{a},\underline{\tilde{g}}_{t}^{b}\in \widetilde{\MG}_{1:t}}\left|\sum_{i=1}^{n}G_{i,\underline{\tilde{g}}_{t}^{a},\underline{\tilde{g}}_{t}^{b}}^{\underline{d}_{t}}\left({\mu}_{t}(\underline{d}_{t},\underline{X}_{it}) - \hat{\mu}_{t}^{-k(i)}(\underline{d}_{t},\underline{X}_{it})\right)\left(\frac{1}{\hat{\eta}_{t}^{-k(i)}(\underline{d}_{t},\underline{X}_{it})} - \frac{1}{{\eta}_{t}(\underline{d}_{t},\underline{X}_{it})}\right)\right|\\
& \leq \frac{1}{n}\sum_{\{i|\underline{D}_{it} = \underline{d}_{t}\}}\left|\left(\mu_{t}(\underline{d}_{t},\underline{X}_{it}) - \hat{\mu}_{t}^{-k(i)}(\underline{d}_{t},\underline{X}_{it})\right)\right| \cdot \left|\left(\frac{1}{\hat{\eta}_{t}^{-k(i)}(\underline{d}_{t},\underline{X}_{it})} - \frac{1}{{\eta}_{t}(\underline{d}_{t},\underline{X}_{it})}\right)\right| \\
& \leq \sqrt{\frac{1}{n}\sum_{\{i|\underline{D}_{it} = \underline{d}_{t}\}}\left(\mu_{t}(\underline{d}_{t},\underline{X}_{it}) - \hat{\mu}_{t}^{-k(i)}(\underline{d}_{t},\underline{X}_{it})\right)^2} \sqrt{\frac{1}{n}\sum_{\{i|\underline{D}_{it} = \underline{d}_{t}\}}\left(\frac{1}{\hat{\eta}_{t}^{-k(i)}(\underline{d}_{t},\underline{X}_{it})} - \frac{1}{{\eta}_{t}(\underline{d}_{t},\underline{X}_{it})}\right)^2},
\end{align*}
where the last inequality follows from Cauchy-Schwartz inequality. Taking expectation of both sides yields:

\begin{align*}E\left[\sup_{\underline{\tilde{g}}_{t}^{a},\underline{\tilde{g}}_{t}^{b}\in \widetilde{\MG}_{1:t}}\left|\widetilde{S}_{(A3),t}^{\underline{d}_{t}}\left(\underline{\tilde{g}}_{t}^{a};\underline{\tilde{g}}_{t}^{b}\right)\right|\right] &\leq E\left[\sqrt{\frac{1}{n}\sum_{\{i|\underline{D}_{it} = \underline{d}_{t}\}}\left(\mu_{t}(\underline{d}_{t},\underline{X}_{it}) - \hat{\mu}_{t}^{-k(i)}(\underline{d}_{t},\underline{X}_{it})\right)^2}\right]\\
    &\times E\left[\sqrt{\frac{1}{n}\sum_{\{i|\underline{D}_{it} = \underline{d}_{t}\}}\left(\frac{1}{\hat{\eta}_{t}^{-k(i)}(\underline{d}_{t},\underline{X}_{it})} - \frac{1}{{\eta}_{t}(\underline{d}_{t},\underline{X}_{it})}\right)^2}\right]\\
    &\leq \sqrt{\frac{1}{n}\sum_{\{i|\underline{D}_{it} = \underline{d}_{t}\}}E\left[\left(\mu_{t}(\underline{d}_{t},\underline{X}_{it}) - \hat{\mu}_{t}^{-k(i)}(\underline{d}_{t},\underline{X}_{it})\right)^2\right]}\\
    &\times \sqrt{\frac{1}{n}\sum_{\{i|\underline{D}_{it} = \underline{d}_{t}\}}E\left[\left(\frac{1}{\hat{\eta}_{t}^{-k(i)}(\underline{d}_{t},\underline{X}_{it})} - \frac{1}{{\eta}_{t}(\underline{d}_{t},\underline{X}_{it})}\right)^2\right]}\\
    &= o(n^{-\tau^{\prime} /2}),
\end{align*}
where the second inequality follows from Cauchy-Schwartz inequality and the last line follows from Assumption \ref{asm:rate_of_convergence_simultaneous} (i). Then applying Markov's inequality leads to
\begin{align}
    \sup_{\underline{\tilde{g}}_{t}^{a},\underline{\tilde{g}}_{t}^{b}\in \widetilde{\MG}_{1:t}}\left|\widetilde{S}_{(A3),t}^{\underline{d}_{t}}\left(\underline{\tilde{g}}_{t}^{a};\underline{\tilde{g}}_{t}^{b}\right)\right| &= O_{P}\left(n^{-\tau^{\prime}/2}\right). \label{eq:S_A3_bound}
\end{align}

We therefore obtain
\begin{align}
    &\sum_{t=1}^{T} \sup_{\underline{\tilde{g}}_{t}^{a},\underline{\tilde{g}}_{t}^{b} \in \widetilde{\MG}_{1:t}} |\widehat{\Delta}_{t}(\underline{\tilde{g}}_{t}^{a};\underline{\tilde{g}}_{t}^{b}) - \widetilde{\Delta}_{t}(\underline{\tilde{g}}_{t}^{a};\underline{\tilde{g}}_{t}^{b})| \nonumber \\
    &\leq \sum_{t=1}^{T} \sum_{\underline{d}_{t} \in \{0,1\}^{t}}\sup_{\underline{\tilde{g}}_{t}^{a},\underline{\tilde{g}}_{t}^{b} \in \widetilde{\MG}_{1:t}} |\widehat{\Delta}_{t}^{\underline{d}_{t}}(\underline{\tilde{g}}_{t}^{a};\underline{\tilde{g}}_{t}^{b}) - \widetilde{\Delta}_{t}^{\underline{d}_{t}}(\underline{\tilde{g}}_{t}^{a};\underline{\tilde{g}}_{t}^{b})| \nonumber\\
    & \leq \sum_{t=1}^{T} \sum_{\underline{d}_{t} \in \{0,1\}^{t}} \sup_{\underline{\tilde{g}}_{t}^{a},\underline{\tilde{g}}_{t}^{b}\in \widetilde{\MG}_{1:t}}\left|\widetilde{S}_{(A1),t}^{\underline{d}_{t}}\left(\underline{\tilde{g}}_{t}^{a};\underline{\tilde{g}}_{t}^{b}\right)\right| + \sum_{t=1}^{T} \sum_{\underline{d}_{t} \in \{0,1\}^{t}}\sup_{\underline{\tilde{g}}_{t}^{a},\underline{\tilde{g}}_{t}^{b}\in \widetilde{\MG}_{1:t}}\left|\widetilde{S}_{(A2),t}^{\underline{d}_{t}}\left(\underline{\tilde{g}}_{t}^{a};\underline{\tilde{g}}_{t}^{b}\right)\right| \nonumber \\
    & + \sum_{t=1}^{T} \sum_{\underline{d}_{t} \in \{0,1\}^{t}}\sup_{\underline{\tilde{g}}_{t}^{a},\underline{\tilde{g}}_{t}^{b}\in \widetilde{\MG}_{1:t}}\left|\widetilde{S}_{(A3),t}^{\underline{d}_{t}}\left(\underline{\tilde{g}}_{t}^{a};\underline{\tilde{g}}_{t}^{b}\right)\right| \nonumber\\
    &= O_{p}\left(n^{-\min\{1/2,\tau^{\prime}/2\}}\right), \label{eq:latter_bound}
\end{align}
where the last line follows from (\ref{eq:S_A1_bound}), (\ref{eq:S_A2_bound}), and (\ref{eq:S_A3_bound}).

Combining (\ref{eq:standard_inequality}), (\ref{eq:bound_influence_difference_function}), (\ref{eq:bounded_entoropy}), and (\ref{eq:latter_bound}) leads to the result (\ref{eq:DML_result}). 
\end{proof}


\section{Additional Simulation Results \label{appendix:simulation_results}}

We conduct an additional simulation study to examine the finite sample performance for the estimation methods proposed in Section \ref{sec:DEWM} under the circumstance that the sequential independence assumption does not hold due to the presence of unobserved heterogeneity. We consider the same DGPs as those used in Section \ref{sec:simulation}, except that the treatment assignments $D_1$ and $D_2$ are distributed as
\begin{align}
    D_1 \sim 1\{N(0,1) + \rho\cdot U_1 \geq 0\}\mbox{ and }D_2 \sim 1\{N(0,1) + \rho\cdot U_2 \geq 0\}. \label{eq:dependent_treatment}
\end{align}
Recall that the potential outcomes $Y_1(d_1)$ and $Y_2(d_1,d_2)$ depend on $U_1$ and $U_2$, respectively. Hence, unless $\rho \neq 0$, the sequential independence assumption (Assumption \ref{asm:sequential independence}) is not satisfied. We consider two values of $\rho$: $\rho=-1$ and $1$. For each $j=1,2,3$, we label the DGP that is the same as DGP $j$ used in Section \ref{sec:simulation} except for that $D_1$ and $D_2$ follow equation (\ref{eq:dependent_treatment}) with $\rho=-1$ and $1$ as DGPs $j^{\prime}$ and $j^{\prime\prime}$, respectively.

Table \ref{table:additional_simulation_results} presents the results of 500 simulations with sample sizes $n=200$, $500$, and $800$, where we compare Q-learning, backward DEWM, and simultaneous DEWM and calculate the mean and median welfare achieved by each estimated DTR. 
Panel (A) of Table \ref{table:additional_simulation_results} shows that in the case of $\rho=-1$, simultaneous DEWM leads to the lower mean welfare in DGP $3^\prime$ than Q-learning and backward DEWM. This result differs from the simulation results in Section \ref{sec:simulation}, where the DGPs satisfy the sequential independence assumption, and simultaneous DEWM leads to the highest mean welfare in DGP3. Panel (B) of Table \ref{table:additional_simulation_results} shows that the DGPs $1^{\prime \prime}$, $2^{\prime\prime}$, and $3^{\prime\prime}$ lead to similar results to those with DGPs 1-3 in terms of the order of mean/median welfare among the three methods.


\begin{table}[h!]
\centering \caption{Additional Monte Carlo Simulation Results}
\label{table:additional_simulation_results}
\bigskip
\centering {Panel(A) DGPs $1^\prime$-$3^\prime$}
\scalebox{0.9}{
\begin{tabular}{cccccccccccccc}
\hline 
 &  &  & n=200  &  &  &  & n=500  &  &  &  & n=800  &  & \tabularnewline
\hline 
 & DGP  & Mean  & Median  & SD  &  & Mean  & Median  & SD  &  & Mean  & Median  & SD  & \tabularnewline
\hline 
\text{Q-learning} & \text{$1^{\prime}$} & 1.846 & 1.85 & 0.042 & \text{} & 1.853 & 1.853 & 0.042 & \text{} & 1.858 & 1.856 & 0.038 & \text{} \\
\text{B-DEWM} & \text{$1^{\prime}$} & 1.608 & 1.625 & 0.196 & \text{} & 1.637 & 1.662 & 0.168 & \text{} & 1.663 & 1.695 & 0.163 & \text{} \\
\text{S-DEWM} & \text{$1^{\prime}$} & 1.456 & 1.523 & 0.263 & \text{} & 1.442 & 1.514 & 0.273 & \text{} & 1.435 & 1.503 & 0.27 & \text{} \\
\hdashline 
\text{Q-learning} & \text{$2^{\prime}$} & 3.125 & 3.125 & 0.058 & \text{} & 3.128 & 3.13 & 0.059 & \text{} & 3.125 & 3.128 & 0.059 & \text{} \\
\text{B-DEWM} & \text{$2^{\prime}$} & 2.54 & 2.621 & 0.487 & \text{} & 2.649 & 2.754 & 0.419 & \text{} & 2.693 & 2.816 & 0.42 & \text{} \\
\text{S-DEWM} & \text{$2^{\prime}$} & 2.585 & 2.806 & 0.569 & \text{} & 2.761 & 2.985 & 0.527 & \text{} & 2.905 & 3.041 & 0.421 & \text{} \\
\hdashline 
\text{Q-learning} & \text{$3^{\prime}$} & 1.574 & 1.566 & 0.22 & \text{} & 1.545 & 1.545 & 0.207 & \text{} & 1.543 & 1.538 & 0.178 & \text{} \\
\text{B-DEWM} & \text{$3^{\prime}$} & 1.678 & 1.734 & 0.186 & \text{} & 1.721 & 1.753 & 0.134 & \text{} & 1.721 & 1.745 & 0.129 & \text{} \\
\text{S-DEWM} & \text{$3^{\prime}$} & 1.361 & 1.346 & 0.144 & \text{} & 1.36 & 1.343 & 0.12 & \text{} & 1.351 & 1.333 & 0.11 & \text{} 
 \tabularnewline
\hline 
\end{tabular}
}

\bigskip
\par
\bigskip

\centering {Panel(B) DGPs $1^{\prime\prime}$-$3^{\prime\prime}$}
\scalebox{0.9}{
\begin{tabular}{cccccccccccccc}
\hline 
 &  &  & n=200  &  &  &  & n=500  &  &  &  & n=800  &  & \tabularnewline
\hline 
 & DGP  & Mean  & Median  & SD  &  & Mean  & Median  & SD  &  & Mean  & Median  & SD  & \tabularnewline
\hline 
\text{Q-learning} & \text{$1^{\prime\prime}$} & 3.099 & 3.098 & 0.037 & \text{} & 3.1 & 3.101 & 0.036 & \text{} & 3.102 & 3.103 & 0.035 & \text{} \\
\text{B-DEWM} & \text{$1^{\prime\prime}$} & 2.875 & 2.983 & 0.298 & \text{} & 2.976 & 3.046 & 0.212 & \text{} & 2.979 & 3.063 & 0.235 & \text{} \\
\text{S-DEWM} & \text{$1^{\prime\prime}$} & 2.901 & 3.008 & 0.327 & \text{} & 2.994 & 3.053 & 0.225 & \text{} & 2.994 & 3.07 & 0.309 & \text{} \\
\hdashline 
\text{Q-learning} & \text{$2^{\prime\prime}$} & 5.206 & 5.205 & 0.061 & \text{} & 5.206 & 5.21 & 0.059 & \text{} & 5.209 & 5.208 & 0.063 & \text{} \\
\text{B-DEWM} & \text{$2^{\prime\prime}$} & 4.639 & 4.995 & 0.757 & \text{} & 4.854 & 5.075 & 0.563 & \text{} & 4.882 & 5.104 & 0.657 & \text{} \\
\text{S-DEWM} & \text{$2^{\prime\prime}$} & 4.858 & 5.044 & 0.565 & \text{} & 4.999 & 5.097 & 0.442 & \text{} & 5.037 & 5.118 & 0.388 & \text{} \\
\hdashline 
\text{Q-learning} & \text{$3^{\prime\prime}$} & 2.19 & 2.186 & 0.123 & \text{} & 2.186 & 2.192 & 0.115 & \text{} & 2.191 & 2.192 & 0.112 & \text{} \\
\text{B-DEWM} & \text{$3^{\prime\prime}$} & 2.02 & 1.906 & 0.272 & \text{} & 2.151 & 2.249 & 0.27 & \text{} & 2.224 & 2.272 & 0.218 & \text{} \\
\text{S-DEWM} & \text{$3^{\prime\prime}$} & 2.315 & 2.334 & 0.15 & \text{} & 2.311 & 2.317 & 0.128 & \text{} & 2.3 & 2.312 & 0.142 & \text{} \tabularnewline
\hline 
\end{tabular}
}

\bigskip

\begin{tablenotes} \footnotesize
\item Note: Mean and Median represent the mean
and median of the population mean welfares achieved by the estimated
DTRs across the simulations; SD is the standard deviation of the population mean welfares across the simulations.
The population mean welfare is calculated using 3,000 observations
randomly drawn from the corresponding DGP. B-DEWM and S-DEWM mean the Backward and Simultaneous DEWM methods, respectively. 
\end{tablenotes} 
\end{table}

\section{Computation \label{appendix:computation}}

In this appendix, we explain computation of the backward and simultaneous DEWM with $\MG_t$ ($t=1,\ldots,T$) being classes of the linear treatment rules. The non-convexity of the objective functions make these computations challenging. However, the optimization problems can be formulated as Mixed Integer Linear Programming (MILP) problems, for which some efficient softwares (e.g., CPLEX; Gurobi) are available.
In the following subsections, we illustrate the MILP formalization for each of the backward and simultaneous DEWM in the case of $T=2$. We suppose that the class of feasible treatment rules for each stage $t=1,2$ takes the form of $\mathcal{G}_{t}=\left\{ 1\left\{(1,H_{t}^{\prime})\beta_{t} \geq 0\right\}:\beta_{t}\in \MB_{t} \subset \mathbb{R}^{(k+2)t-1}\right\} $ where $\MB_t$ is a compact set.

\subsection{Backward DEWM}

Using slightly different notation from Section \ref{sec:backward DEWM}, the first step of the backward DEWM method is
\begin{align*}
\max_{g_{2}\in{\cal G}_{2}} & \sum_{i=1}^{n}m_{i2}^{B}g_{2},
\end{align*}
where 
\begin{align*}
m_{i2}^{B}= & \left(\frac{D_{i2}}{e_{2}\left(1,H_{i2}\right)}-\frac{1-D_{i2}}{e_{2}\left(0,H_{i2}\right)}\right)\gamma_{2}Y_{i2}.
\end{align*}
Let $\hat{g}_{2}^{B}$ be a maximizer of the above problem.
Then, the second step of the backward DEWM method is 
\begin{align*}
\max_{g_{1}\in{\cal G}_{1}} & \sum_{i=1}^{n}m_{i1}^{B}g_{1},
\end{align*}
where 
\begin{align*}
m_{i1}^{B}&=  \left(\frac{D_{i1}}{e_{1}\left(1,H_{i1}\right)}-\frac{1-D_{i1}}{e_{2}\left(0,H_{i1}\right)}\right)\\
& \times
\left(\frac{D_{i2}\hat{g}_{2}^{B}\left(H_{i2}\right)}{e_{2}\left(1,H_{i2}\right)}-\frac{\left(1-D_{i2}\right)\left(1-\hat{g}_{2}^{B}\left(H_{i2}\right)\right)}{e_{2}\left(0,H_{i2}\right)}\right)   \left(\gamma_{1}Y_{1i}+\gamma_{2}Y_{i2}\right).
\end{align*}

When the class of DTRs is constrained to the class of linear eligibility
rules, each step of the backward DEWM method described in Section \ref{sec:backward DEWM} can be formulated as MILP problem. 
The optimization problem in the first step is equivalent to
the following MILP problem:

\begin{description}

\item [{$\mbox{(First step)}$}] 
\begin{align*}
\underset{\left(z_{12},\ldots,z_{n2}\right)\in\left\{ 0,1\right\} ^{n}}{\max_{\beta_{2}\in\MB_2}} & \sum_{i=1}^{n}m_{i2}^{B}z_{i2}\\
\mbox{s.t. } & \frac{(1,H_{i2}^{\prime})\beta_{2}}{C_{i2}}<z_{i2}\leq1+\frac{(1,H_{i2}^{\prime})\beta_{2}}{C_{i2}}\ \mbox{for }i=1,\ldots,n,
\end{align*}
where $C_{i2}$ are constants that should satisfy $C_{i2}>\sup_{\beta_{2}\in \MB_2}\left|(1,H_{i2}^{\prime})\beta_{2}\right|$. 
\end{description}

Subsequently, the optimization problem in the second step is equivalent to
the following MILP problem: 

\begin{description}
\item [{$\mbox{(Second step)}$}] 
\begin{align*}
\underset{\left(z_{11},\ldots,z_{n1}\right)\in\left\{ 0,1\right\} ^{n}}{\max_{\beta_{1}\in\MB_1}} & \sum_{i=1}^{n}m_{i1}^{B}z_{i1}\\
\mbox{s.t. } & \frac{(1,H_{i1}^{\prime})\beta_{1}}{C_{i1}}<z_{i1}\leq1+\frac{(1,H_{i1}^{\prime})\beta_{1}}{C_{i1}}\ \mbox{for }i=1,\ldots,n,
\end{align*}
where $C_{i1}$ are constants that should satisfy $C_{i1}>\sup_{\beta_{1}\in \MB_1}\left|(1,H_{i1}^{\prime})\beta_{1}\right|$. 
\end{description}

\medskip

When we specify the dynamic treatment choice problem as the start (stop) time decision problem discussed in Section \ref{sec:dynamic treatment choice problem}, the linear constraints $z_{i2}\geq D_{i1}$ and $D_{i2}\geq z_{i1}$
($z_{i2}\leq D_{i1}$ and $D_{i2}\leq z_{i1}$) should be added into the MILP problems for the first and second steps, respectively. When we specify the problem as the one-shot treatment decision problem discussed in Section \ref{sec:dynamic treatment choice problem}, the linear constraints $z_{i2} + D_{i1}\leq1$ and $D_{i2} + z_{i1}\leq1$ should be added into the
MILP problems for the first and second steps, respectively.

\subsection{Simultaneous DEWM}

In the case of $T=2$, the optimization problem of the simultaneous DEWM method is equivalent to 
\begin{align*}
\max_{\left(g_{1},g_{2}\right)\in{\cal G}} & \sum_{i=1}^{n}\left[m_{i1}^{S}g_{1}+m_{i2}^{S}g_{2}+m_{i3}^{S}g_{1}g_{2}\right],
\end{align*}
where $m_{is}^{S}$ for $s=1,2,3$ are defined as 
\begin{align*}
m_{i1}^{S}&=  \frac{D_{i1}}{e_{1}\left(1,H_{i1}\right)}\left(\gamma_{1}Y_{i1}+\frac{\left(1-D_{i2}\right)\gamma_{2}Y_{i2}}{e_{2}\left(0,H_{i2}\right)}\right)-\frac{1-D_{i1}}{e_{1}\left(0,H_{i1}\right)}\left(\gamma_{1}Y_{i1}+\frac{\left(1-D_{i2}\right)\gamma_{2}Y_{i2}}{e_{2}\left(0,H_{i2}\right)}\right),\\
m_{i2}^{S}&=  \left(\frac{\left(1-D_{i1}\right)D_{i2}}{e_{1}\left(0,H_{i1}\right)e_{2}\left(1,H_{i2}\right)}-\frac{\left(1-D_{i1}\right)\left(1-D_{i2}\right)}{e_{1}\left(0,H_{i1}\right)e_{2}\left(0,H_{i2}\right)}\right)\gamma_{2}Y_{i2},\\
m_{i3}^{S}&= \sum_{\left(d_{1},d_{2}\right)\in\left\{ 0,1\right\} ^{2}}\frac{1\left\{ D_{i1}=d_{1},D_{i2}=d_{2}\right\} \gamma_{2}Y_{i2}}{e_{1}\left(d_{1},H_{i1}\right)e_{2}\left(d_{2},H_{i2}\right)}.
\end{align*}
When the class of DTRs is constrained to the class of linear eligibility
rules, the above optimization problem is equivalent to the following
MILP problem: 
\begin{align*}
\underset{\left(z_{1t},\ldots,z_{nt}\right)_{t=1}^{3}\in \{0,1\}^{3n}}{\max_{\left(\beta_{1},\beta_{2}\right)\in\MB_{1} \times \MB_{2}}} & \sum_{i=1}^{n}\left[m_{i1}^{S}z_{i1}+m_{i2}^{S}z_{i2}+m_{i3}^{S}z_{i3}\right]\\
\mbox{s.t. } & \frac{(1,H_{it}^{\prime})\beta_{t}}{C_{it}}<z_{it}\leq1+\frac{(1,H_{it}^{\prime})\beta_{t}}{C_{it}}\ \mbox{for }i=1,\ldots,n\mbox{ and }t=1,2,\\
 & z_{i3}=z_{i1} z_{i2}\mbox{ for }i=1,\ldots,n,
\end{align*}
where $C_{it}$ are constants that should satisfy $C_{it}>\sup_{\beta_{t}\in \MB_{t}}\left|(1,H_{it}^{\prime})\beta_{t}\right|$.

When we specify the dynamic treatment choice problem as the start (stop) time decision problem discussed in Section \ref{sec:dynamic treatment choice problem}, the linear constraints $z_{i2}\geq z_{i1}$
($z_{i2}\leq z_{i1}$)  should be added into the MILP problem. When we specify the problem as the one-shot treatment decision problem discussed in Section \ref{sec:dynamic treatment choice problem}, the linear constraint $z_{i1} + z_{i2}\leq1$ should be added into the MILP problem.

\subsection{Budget/Capacity Constraint}

The budget/capacity constraints studied in Section \ref{sec:budget constraint} can be incorporated into the MILP problem for the simultaneous DEWM. The optimization
problem (\ref{eq:budget constrained SDEWM}) with the class of linear eligibility score rules is formulated as the following MILP problem: 
\begin{align*}
\underset{\left(z_{1t},\ldots,z_{nt}\right)_{t=1}^{3}\in \{0,1\}^{3n}}{\max_{\left(\beta_{1},\beta_{2}\right)\in\MB_{1} \times \MB_{2}}} & \sum_{i=1}^{n}\left[m_{i1}^{S}z_{i1}+m_{i2}^{S}z_{i2}+m_{i3}^{S}z_{i3}\right]\\
\mbox{s.t. } & \frac{(1,H_{it}^{\prime})\beta_{t}}{C_{it}}<z_{it}\leq1+\frac{(1,H_{it}^{\prime})\beta_{t}}{C_{it}}\ \mbox{for }i=1,\ldots,n\mbox{ and }t=1,2,\\
 & z_{i3}=z_{i1}z_{i2} \mbox{ for }i=1,\ldots,n,\\
 & \frac{1}{n}\sum_{t=1}^{2} \sum_{i=1}^{n}K_{tb}z_{it}\leq C_{b}+\alpha_{n}\ \mbox{for }b=1,\ldots,B\mbox{ and }t=1,2,
\end{align*}
where $C_{it}$ are constants that should satisfy $C_{it}>\sup_{\beta_{t}\in \MB_{t}}\left|(1,H_{it}^{\prime})\beta_{t}\right|$.
The linear constraints in the last line correspond to the budget/capacity constraints.


\end{document}